\definecolor{darkgreen}{rgb}{0.0, 0.5, 0.0}
\definecolor{darkred}{rgb}{0.9, 0.0, 0.0}
\definecolor{darkblue}{rgb}{0.0, 0.0, 0.9}
\tikzset{
	state/.style={rounded rectangle,draw=black,inner sep=1.5mm,minimum width=9mm,minimum height=5mm},
	rstate/.style={rectangle,draw=black,inner sep=1ex,minimum width=9mm,minimum height=5mm},
	istate/.style={minimum width=5mm, minimum height=6mm},
	bullet/.style={circle,draw=black,fill=black,inner sep=0cm, minimum size=0.7mm},
%	every loop/.style={min distance=7mm,looseness=10, in=-15,out=15},
	initial text={},
	every initial by arrow/.style={->,>=stealth'},
	ptran/.style={rounded corners, ->,>=stealth',auto},
	ntran/.style={rounded corners, -,auto}
}
\spnewtheorem{notation}[theorem]{Notation}{\bfseries}{\upshape}
\spnewtheorem{setting}[theorem]{Setting}{\bfseries}{\upshape}
\numberwithin{equation}{section}
\newcommand{\R}{\mathbb{R}}
\renewcommand{\P}{\mathcal{P}}
\newcommand{\prb}{\mathbf{Pr}}
\newcommand{\Ab}{\mathbf{A}}
\newcommand{\Ib}{\mathbf{I}}
\newcommand{\Pb}{\mathbf{P}}
\newcommand{\xb}{\mathbf{x}}
\newcommand{\yb}{\mathbf{y}}
\newcommand{\bb}{\mathbf{b}}
\newcommand{\hb}{\mathbf{h}}
\newcommand{\M}{\mathcal{M}}
\newcommand{\cb}{\mathbf{c}}
\newcommand{\vb}{\mathbf{v}}
\newcommand{\gb}{\mathbf{g}}
\newcommand{\pb}{\mathbf{p}}
\newcommand{\qb}{\mathbf{q}}
\newcommand{\Qb}{\mathbf{Q}}
\newcommand{\Bb}{\mathbf{B}}
\newcommand{\Cb}{\mathbf{C}}
\newcommand{\Jb}{\mathbf{J}}
\newcommand{\db}{\boldsymbol{\delta}}
\newcommand{\zb}{\mathbf{z}}
\newcommand{\all}{{\operatorname{all}}}
\newcommand{\Act}{\operatorname{Act}}
\newcommand{\supp}{\operatorname{supp}}
\newcommand{\last}{\operatorname{last}}
\newcommand{\Dist}{\operatorname{Dist}}
\newcommand{\goal}{\operatorname{goal}}
\newcommand{\fail}{\operatorname{fail}}
\newcommand{\Cyl}{\operatorname{Cyl}}
\renewcommand{\S}{\mathfrak{S}}
\renewcommand{\Pr}{\mathrm{Pr}}
\newcommand{\fin}{{\operatorname{fin}}}
\newcommand{\Paths}{\operatorname{Paths}}
\newcommand{\sucs}{\operatorname{Post}}
\newcommand{\dipro}{\textsc{Dipro}}
\newcommand{\comics}{\textsc{Comics}}
\newcommand{\prism}{\textsc{Prism}}
\newcommand{\ind}{\boldsymbol{\sigma}}
\newif\iflongversion
\newcommand{\citeapp}[1]{\Cref{#1}}
\newcommand{\citeapp}[1]{the full version~\cite{FunkeJB19}}
\newcommand{\referapp}{the appendix}
\newcommand{\referapp}{the full version~\cite{FunkeJB19}}
\begin{document}

\title{Farkas certificates and minimal witnesses for probabilistic reachability constraints}
\author{Florian Funke\orcidID{0000-0001-7301-1550} \and Simon Jantsch\orcidID{0000-0003-1692-2408} \and Christel Baier\orcidID{0000-0002-5321-9343}}
\authorrunning{Florian Funke, Simon Jantsch, Christel Baier}
\titlerunning{Farkas certificates and minimal witnesses}
% First names are abbreviated in the running head.
% If there are more than two authors, 'et al.' is used.
%
\institute{Technische Universität Dresden\thanks{This work was funded by DFG grant 389792660 as part of \href{https://perspicuous-computing.science}{TRR~248}, the Cluster of Excellence EXC 2050/1 (CeTI, project ID 390696704, as part of Germany’s Excellence Strategy), DFG-projects BA-1679/11-1 and BA-1679/12-1, and the Research Training Group QuantLA (GRK 1763).}\\
\email{\{florian.funke, simon.jantsch, christel.baier\}@tu-dresden.de}}
% (see \url{https://perspicuous-computing.science})

\maketitle

\begin{abstract}
	This paper introduces \emph{Farkas certificates} for lower and upper bounds on minimal and maximal reachability probabilities in Markov decision processes (MDP), which we derive using an MDP-variant of Farkas' Lemma.
  The set of all such certificates is shown to form a polytope whose points correspond to witnessing subsystems of the model and the property.
  Using this correspondence we can translate the problem of finding minimal witnesses to the problem of finding vertices with a maximal number of zeros.
  While computing such vertices is computationally hard in general, we derive new heuristics from our formulations that exhibit competitive performance compared to state-of-the-art techniques.
	As an argument that asymptotically better algorithms cannot be hoped for, we show that the decision version of finding minimal witnesses is $\operatorname{NP}$-complete even for acyclic Markov chains.
	%\keywords{Certification \and Farkas' Lemma \and Markov decision process \and Reachability problem \and Minimal witness \and Farkas certificate  \and Polytope}
\end{abstract}

\section{Introduction}

\label{sec:intro}

The goal of program verification is to consolidate the user's trust that a given system works as intended, and if this is not the case, to provide her with useful diagnostic information.
Verification tools may, however, contain bugs and so a last grain of insecurity regarding their results always remains.
A widely acknowledged approach to overcome this dilemma has been made in the form of \emph{certifying algorithms}~\cite{BlumK95,McConnellMNS11}. These algorithms provide every result with an accompanying \emph{certificate}, i.e., a token that can be used to verify the result independently and with little ressources. In this way, certificates enable the user (or a third party) to quickly give a mathematically rigorous proof for the correctness of the result \emph{irrespective} of whether the algorithm itself works correctly.

\emph{Counterexamples}, i.e. certificates for the violation of a property, can often be obtained as a byproduct of verification procedures.
What constitutes a counterexample is highly context-dependent.
Finite executions suffice as counterexamples for safety properties and single, possibly infinite, executions are viable counterexamples for LTL~\cite{ClarkeV03}.
Tree-like counterexamples have been considered for fragments of CTL~\cite{ClarkeJLV02}. For a probabilistic system $\M$ and a linear time property $\phi$, the most prominent notion of counterexample to $\Pr_{\M}(\phi)< \lambda$ is a set of paths satisfying $\phi$ whose probability mass is at least $\lambda$ (see~\cite{AbrahamBDJKW14} for a survey). 

Another notion of counterexample for probabilistic systems $\M$ and properties of the form $\Pr_{\M}(\phi)<  \lambda$ are \emph{critical subsystems}~\cite{AbrahamBDJKW14}.
We adopt the reverse perspective and call a subsystem $\M'$ of $\M$ a \emph{witnessing subsystem} for the property $\Pr_{\M}(\phi)\geq \lambda$ if $\Pr_{\M'}(\phi)\geq \lambda$. Small witnessing subsystems offer an insight into what parts of the system are responsible for the satisfaction of the property. Nonetheless, witnessing subsystems can hardly be regarded as viable certificates since verifying $\Pr_{\M'}(\phi)\geq \lambda$ is as hard as checking $\Pr_{\M}(\phi)\geq \lambda$ itself.

In this paper we build a solid bridge between certificates and witnessing subsystems. The systems we consider are modeled as Markov decision processes (MDP), which contain an absorbing goal state representing a desirable outcome. This approach is motivated by the fact that numerous model checking tasks can be reduced to reachability problems \cite{HartSP1983, Vardi1985, CourcoubetisY1995, CourcoubetisY1988, deAlfaro97, VardiW86}.

Using Farkas' Lemma, we introduce certificates for bounds on the minimal and maximal probability to reach the goal state.
We show that the set of these certificates forms a polytope and we provide a direct translation of a certificate to a witnessing subsystems for lower bounded threshold properties.
Thereby, we bridge the gap between an abstract gadget, serving solely as a proof that the result is correct, and a concrete object, containing crucial diagnostic information about \emph{why} the result holds.
Moreover, our translation reduces the computation of minimal witnessing subsystems to a purely geometric problem, for which we provide and evaluate new exact and heuristic algorithms.

\iflongversion
All omitted proofs can be found in the appendix.
\else
All omitted proofs can be found in the full version of this paper~\cite{FunkeJB19}.
\fi

\subsubsection{Contributions.}
\begin{itemize}
	\item Following the concept of certificates in certifying algorithms, we introduce \emph{Farkas certificates} for reachability problems in MDPs (\Cref{tab:overview}).
	\item We give a uniform notion of \emph{witnessing subsystem} (WS) for $\prb^{\max}_{s_0}(\lozenge\goal) \geq \lambda$ and $\prb^{\min}_{s_0}(\lozenge\goal) \geq \lambda$ (\Cref{def:subsystem}). To the best of our knowledge, witnesses for $\prb^{\min}_{s_0}(\lozenge\goal) \geq \lambda$ have not been considered previously.
	\item We establish NP-completeness for finding minimal WS even for acyclic discrete time Markov chains (DTMC) (\Cref{thm:np-completeness}).
	\item Our main result establishes a strong connection between the polytopes of Farkas certificates for $\prb^{\min}_{s_0}(\lozenge\goal) \geq \lambda$ and $\prb^{\max}_{s_0}(\lozenge\goal) \geq \lambda$ and WS of the same property (\Cref{thm:MCS-polytope}).
	In particular, one can read off a minimal WS from a vertex of the polytope with a maximal number of zeros (\Cref{cor:main corollary}).
\item From our polytope characterizations we derive two algorithms for computing minimal WS: one based on vertex enumeration and one based on mixed integer linear programming (\Cref{sec:computing}).
  We also introduce a linear programming based heuristic aimed at computing small WS.
  We evaluate our approach on DTMC and MDP benchmarks, where particularly our heuristics show competitive results compared to state-of-the-art techniques (\Cref{sec:case studies}).
\end{itemize}
\renewcommand{\arraystretch}{1.4}
\renewcommand{\tabcolsep}{4mm}

\setlength{\textfloatsep}{0.5cm}
\begin{table}[tbp]
	\centering
	\captionsetup{width=.75\linewidth, labelfont={bf, up}}
  \begin{tabular}{ |c|c|c|}
		\hline
		Property& Certificate dimension&  Certificate condition  \\ \hline
		$\prb^{\min}_{s_0}(\lozenge\goal) \gtrsim \lambda$ & $\zb\in \R^{S}$ & $\Ab \zb \leq \bb \land \zb(s_0) \gtrsim \lambda$  \\\hline
		$\prb^{\max}_{s_0}(\lozenge\goal) \gtrsim \lambda$ &  $\yb \in \R^{\M}_{\geq 0}$ & $\yb\Ab \leq\delta_{s_0}\land \yb\bb \gtrsim \lambda$   \\\hline
		$\prb^{\min}_{s_0}(\lozenge\goal)\lesssim \lambda$ & $\yb \in \R^{\M}_{\geq 0}$ & $\yb\Ab \geq\delta_{s_0}\land \yb\bb \lesssim \lambda$   \\\hline
		$\prb^{\max}_{s_0}(\lozenge\goal) \lesssim \lambda$ & $\zb\in \R^{S}$ & $\Ab \zb \geq \bb \land \zb(s_0) \lesssim \lambda$  \\\hline
	\end{tabular}
	 \caption{Overview of Farkas certificates for reachability properties in MDPs (where $\lesssim \; \in \{\leq, <\}$ and $\gtrsim \; \in\{\geq, > \}$).}\label{tab:overview}
\end{table}
\subsubsection{Related work.}
The fundament of certifying algorithms has been surveyed in \cite{McConnellMNS11}.
In the context of model checking, the most prominent approach for the certification of a positive result has been to construct a proof of the property in the system~\cite{Namjoshi01,PeledPZ01,BernasconiMSZG17}.
Rank-based certificates for the emptiness of a certain automaton ~\cite{KupfermanV04} can be used to certify positive model checking results.
Model checking MDPs in the presence of multiple objectives has been studied in~\cite{EtessamiKVY08, ForejtKNPQ11}.

Heuristic approaches for computing small witnessing subsystems in DTMCs have been proposed in~\cite{AljazzarL06,AljazzarL10,JansenAKWKB11, JansenAZWSKB12,JansenWAZKBS14} and implemented in the tool \comics{} \cite{JansenAVWKB12}.
Witnessing subsystems in MDPs have been considered in~\cite{AljazzarL09, AndresDR08} and~\cite{BrazdilCCFK15}, which focuses on succinctly representing witnessing schedulers.
The mixed integer linear programming (MILP) formulation of~\cite{WimmerJABK12,WimmerJAKB14} allows for an exact computation of minimal witnessing subsystems for the property $\prb^{\max}_{s_0}(\lozenge\goal) \gtrsim \lambda$.
NP-completeness of computing minimal witnessing subsystems in MDPs was shown in \cite{ChadhaV10}, but the exact complexity has, to the best of our knowledge, not been determined for DTMCs (the problem was conjectured to be NP-complete in~\cite{WimmerJABK12}).

Minimal probabilistic counterexamples given as sets of paths can be computed by reframing the problem as a $k$-shortest-path problem~\cite{HanKD09,HanK07}.
Regular expressions have been considered to succinctly represent the set of paths in~\cite{DammanHK08}, and extensions were proposed in~\cite{WimmerBB09,BraitlingWBJA11}.
The tool \dipro{} \cite{AljazzarLLS11} computes probabilistic counterexamples, and a translation of these to fault trees was given in \cite{KuntzLL2011}.
Another, learning-based, approach~\cite{BrazdilCCFKKPU2014} also enumerates paths and produces a witnessing subsystem as a byproduct.
But none of these approaches considers state-based minimality.
Probabilistic counterexamples can be used to automatically guide iterative and refinement-based model checking techniques~\cite{ClarkeGJLV03, ChadhaV10,ChatterjeeCD14,CeskaHJK19,CeskaJJK19,HermannsWZ08}.

Farkas' Lemma is a well-known source of certificates for the (in)feasibility of tasks in combinatorial optimization, operations research, and economics, as presented in the detailed historical account given in \cite[pp. 209--226]{Schrijver1986} as well as \cite[Chapter 2]{Mangasarian94} and \cite{ColonSS03, Vohra2006, NaimanS2017}. The lecture notes \cite{Schrijver2017} contain a rich variety of applications of linear programming in general and Farkas' Lemma in particular.

\section{Preliminaries}
\label{sec:prelim}

\subsubsection{Polyhedra and Farkas' Lemma.}
Throughout the article we write the dot product of two vectors $\xb, \yb\in \R^n$ as $\xb\yb$ or $\xb\cdot \yb$.
A \emph{halfspace} in $\R^n$ is a set $H = \{\vb\in \R^n\mid \mathbf{a}\cdot \vb\leq b \}$ for some non-trivial $\mathbf{a}\in\R^n$ and $b\in\R$.
A \emph{polyhedron} is the intersection of finitely many halfspaces, and a \emph{polytope} is a bounded polyhedron. A \emph{face} of a polyhedron $P$ is a subset $F\subseteq P$ of the form $F= \{\xb\in P\mid \mathbf{a}\cdot \xb=\max\{ \mathbf{a}\cdot\yb \mid \yb\in P\}\}$ for some $\mathbf{a}\in \R^n$. A \emph{vertex} of $P$ is a face consisting of only one point.

Farkas' Lemma \cite{Farkas1902} is part of the fundament of polyhedra theory and linear programming. It provides a natural source of certificates showing the infeasibility of a given system of inequalites, or in other words, the emptiness of the polyhedron described by the system. We will use it in the following version.

\begin{lemma}[Farkas' Lemma, cf. {\cite[Corollary 7.1f on p. 90]{Schrijver1986}}]
	\label{lem:Farkas}
	Let $\Ab\in \R^{m\times n}$ and $\bb\in \R^m$. Then there exists $\zb\in \R^n_{\geq 0}$ with $\Ab\zb \leq \bb$ if and only if there does \emph{not} exist $\yb\in \R^m_{\geq 0}$ with $\yb\Ab \geq 0 \land \yb\bb< 0$.
	\iffalse
	\begin{alignat*}{3}
	 	&\text{(1)} \quad\exists\,\zb\in \R^n_{\geq 0}. \;\; & \Ab\zb = \bb \;\;\iff\;\; &\neg\,\exists\, \yb\in \R^m.\;\; \yb\Ab \geq 0 \land \yb\bb< 0\\
		&\text{(2)} \quad\exists\,\zb\in \R^n_{\geq 0}. \;\; & \Ab\zb \leq \bb \;\;\iff\;\; &\neg\,\exists\, \yb\in \R^m_{\geq 0}.\;\; \yb\Ab \geq 0 \land \yb\bb< 0\\
		&\text{(3)} \quad\exists\,\zb\in \R^n. \;\; & \Ab\zb \leq \bb \;\;\iff\;\; &\neg\,\exists\, \yb\in \R^m_{\geq 0}.\;\; \yb\Ab = 0 \land \yb\bb< 0
	\end{alignat*}
	\fi
\end{lemma}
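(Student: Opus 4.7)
The plan is to prove both implications separately. The direction from the existence of $\zb$ to the non-existence of $\yb$ is a short algebraic contradiction; the converse relies on a separating-hyperplane (or, equivalently, Minkowski--Weyl / LP-duality) argument applied to a suitable convex cone. I would not reprove the separating-hyperplane theorem from scratch but invoke it as a standard fact about closed convex sets in $\R^n$.

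For the easy direction, suppose for contradiction that both a $\zb\in \R^n_{\geq 0}$ with $\Ab\zb\leq \bb$ and a $\yb\in \R^m_{\geq 0}$ with $\yb\Ab\geq 0$ and $\yb\bb<0$ exist. Then $(\yb\Ab)\zb\geq 0$ because all three factors are componentwise non-negative. On the other hand, $\yb(\Ab\zb)\leq \yb\bb<0$ because the componentwise inequality $\Ab\zb\leq \bb$ is preserved when multiplied on the left by the non-negative row vector $\yb$. These two conclusions contradict each other.

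For the hard direction, the plan is to consider the set
\[
  C \;=\; \{\,\Ab\zb + \mathbf{s} \;:\; \zb\in \R^n_{\geq 0},\; \mathbf{s}\in \R^m_{\geq 0}\,\} \;\subseteq\; \R^m.
\]
This is a finitely generated (hence polyhedral) convex cone, and in particular closed. The hypothesis that no $\zb\geq 0$ satisfies $\Ab\zb\leq \bb$ is precisely the statement $\bb\notin C$. The separating hyperplane theorem applied to the closed convex set $C$ and the exterior point $\bb$ yields a $\yb\in\R^m$ and $\alpha\in\R$ with $\yb\cb\geq\alpha$ for every $\cb\in C$ and $\yb\bb<\alpha$. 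Since $C$ is a cone through the origin, scaling any $\cb\in C$ by $t\to 0^+$ forces $\alpha\leq 0$, and scaling by $t\to\infty$ forces $\yb\cb\geq 0$ for all $\cb\in C$. Specializing $\cb=\mathbf{s}$ to non-negative unit coordinate vectors in $\R^m$ (with $\zb=0$) gives $\yb\geq 0$ componentwise; specializing $\cb=\Ab\zb$ to $\zb$ ranging over non-negative unit coordinate vectors in $\R^n$ (with $\mathbf{s}=0$) gives $\yb\Ab\geq 0$ componentwise. Together with $\yb\bb<\alpha\leq 0$, this $\yb$ is the desired certificate.

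The main obstacle is the closedness of $C$, which is what lets the separating-hyperplane argument yield an exact (rather than merely approximate) certificate. I would justify it by noting that $C$ is the image of $\R^{n+m}_{\geq 0}$ under the linear map $(\zb,\mathbf{s})\mapsto \Ab\zb+\mathbf{s}$, and that finitely generated convex cones are closed -- a standard but non-trivial polyhedral fact. An alternative route that avoids this subtlety is to reduce to the equality version of Farkas' Lemma via the slack substitution $\Ab\zb+\mathbf{s}=\bb$, applying the block form $[\,\Ab\;\;\Ib\,]\bigl(\begin{smallmatrix}\zb\\ \mathbf{s}\end{smallmatrix}\bigr)=\bb$ with non-negative variables and reading off the stated inequality certificate from the result.
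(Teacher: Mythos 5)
Your proof is correct. Note, however, that the paper itself does not prove this lemma at all: it is quoted verbatim from Schrijver (Corollary~7.1f), so there is no in-paper argument to match. Compared with the cited source, you take a genuinely different route: Schrijver obtains 7.1f as a corollary of the Fundamental Theorem of Linear Inequalities, proved by an elimination-style/inductive argument, whereas you argue geometrically, encoding feasibility of $\Ab\zb\leq\bb,\ \zb\geq 0$ as membership of $\bb$ in the finitely generated cone $C=\{\Ab\zb+\mathbf{s}:\zb\geq 0,\ \mathbf{s}\geq 0\}$ and extracting the certificate $\yb$ from a separating hyperplane, using homogeneity to force $\alpha\leq 0$, $\yb\cb\geq 0$ on $C$, and then reading off $\yb\geq 0$ and $\yb\Ab\geq 0$ from the generators. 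Both directions are sound: the easy direction is the usual two-sided evaluation of $\yb\Ab\zb$, and in the hard direction you correctly identify the one genuinely nontrivial ingredient, namely closedness of finitely generated cones, without which separation only gives an approximate certificate; your fallback via the slack substitution $[\,\Ab\ \ \Ib\,]$ and the equality form of Farkas' Lemma is an equally standard way to discharge that point. What the geometric route buys is a short, conceptually transparent proof modulo two standard convexity facts; what Schrijver's route buys is a self-contained, purely combinatorial development that also yields the other variants of the lemma (the ones the paper invokes in the appendix, e.g.\ Corollaries~7.1d and 7.1e) from a single theorem.
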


\subsubsection{Markov decision processes.}
A \emph{Markov decision process} (MDP) is a tuple $\M = (S, \Act, \iota, \Pb)$, where $S$ is a finite set of \emph{states}, $\Act$ is a finite set of \emph{actions}, $\iota$ is a probability distribution on $S$ called the \emph{initial distribution} of $M$, and $\Pb\colon S\times\Act\times S\to [0,1]$ is the \emph{transition probability function} where we require $\sum_{s'\in S} \Pb(s, \alpha, s') \in\{0,1\}$ for all $s\in S$ and $\alpha\in\Act$. An action $\alpha$ is \emph{enabled} in state $s\in S$ if $\sum_{s'\in S} \Pb(s, \alpha, s') =1$. The set of enabled actions at state $s$ are denoted by $\Act(s)$, and we require $\Act(s) \neq \varnothing$ for all $s\in S$. A \emph{path} in an MDP $\M$ is an infinite sequence $s_0\alpha_0s_1\alpha_1...$ such that $\Pb(s_i,\alpha_i,s_{i+1})>0$ for all $i\geq 0$. A finite path is a finite sequence $\pi = s_0\alpha_0s_1\alpha_1... s_n$ with the same condition for all $0\leq i\leq n-1$. In this case, we define $\last(\pi) =s_n$. Denote by $\Paths(\M)$ and $\Paths_\fin(\M)$ the set of infinite and finite paths in $\M$.

A \emph{discrete-time Markov chain} (DTMC) is an MDP with a single action which is enabled at every state. If $\M$ is a DTMC, then $\Paths(\M)$ carries a probability measure, where the associated $\sigma$-algebra is generated by the cylinder sets $\Cyl(\tau) = \{ \pi\in\Paths(\M)\mid \pi\text{ has prefix } \tau\}$ of finite paths $\tau = s_0s_1...s_n$ in $\M$ with probability $\Pr(\Cyl(\tau)) = \iota(s_0)\cdot \prod_{0 \leq i < n} \Pb(s_i,s_{i+1})$ (fore more details see \cite[Section 10.1]{BaierK2008}).
In the following we denote for a finite set $X$ the set of probability distributions on $X$ by $\Dist(X)$. Given $\mu\in\Dist(X)$ let the \emph{support} of $\mu$ be $\supp(\mu) = \{x\in X\mid \mu(x) >0 \}$.

A \emph{deterministic scheduler} is a function $\S\colon \Paths_\fin(\M) \to \Act$ such that $\S(\pi)\in\Act(\last(\pi))$ and a \emph{randomized scheduler} is a function $\S\colon \Paths_\fin(\M) \to \Dist(\Act)$ such that $\supp(\S(\pi))\subseteq \Act(\last(\pi))$ for all $\pi\in\Paths_\fin(\M)$. 
Given a deterministic (or randomized) scheduler $\S$, a path $\pi = s_0\alpha_0s_1\alpha_1...$ in $\M$ is an \emph{$\S$-path} if $\alpha_i = \S(s_0\alpha_0...s_i)$ (or $\alpha_i\in \supp(\S(s_0\alpha_0...s_i))$) for all $i\geq 0$.

 We denote by $\Pr^\S$ the probability measure on infinite $\S$-paths (see \cite[Definition 10.92 on page 843]{BaierK2008} for more details). If we replace $\iota$ with the distribution concentrated on state $s$, then we obtain a probability measure $\Pr^\S_{\M, s}$ or short $\Pr^\S_{s}$ on infinite $\S$-paths starting in $s$.
The scheduler is \emph{memoryless} if $\S(\pi) = \S(\last(\pi))$ for all $\pi\in\Paths_\fin(\M)$. We abbreviate memoryless deterministic schedulers as \emph{MD-schedulers} and memoryless randomized schedulers as \emph{MR-schedulers}.

Given a state $t\in S$, we let 
\[\prb_s^{\max}(\lozenge t)= \sup_{\S} \; \Pr_s^\S(\lozenge t) \quad \text{and} \quad \prb_s^{\min}(\lozenge t)= \inf_{\S} \; \Pr_s^\S(\lozenge t) \]
denote the maximal and minimal probability to reach $t$ eventually when starting in $s$
and set $\prb^{\min}(\lozenge t) = (\prb_s^{\min}(\lozenge t))_{s \in S}$ and $\prb^{\max}(\lozenge t) = (\prb_s^{\max}(\lozenge t))_{s \in S}$.
The supremum and infimum is indeed attained by an MD-scheduler \cite[Lemmata 10.102 and 10.113]{BaierK2008}, thus justifying the superscripts.
\begin{setting}
	\label{setting:mdp}
	Henceforth we will assume that $\M = (S_\all, \Act, \iota, \Pb)$ has a unique initial state $s_0 \in S$ and two distinguished absorbing states $\fail$ and $\goal\in S_\all$, i.e., $\Pb(\goal, \alpha, s) = 0$ for all $\alpha\in\Act$ and $s\in S_\all$ with $s\neq\goal$, and likewise for $\fail$.
  Here $\goal$ represents a desirable outcome of the modeled system and $\fail$ an  outcome that is to be avoided. We use the notation $S = S_\all \setminus \{\fail, \goal\}$, we assume that every state $s\in S$ is reachable from $s_0$. We also assume that under every scheduler $\fail$ or $\goal$ is reachable from any state, i.e., $\prb^{\min}_s(\lozenge (\goal\lor \fail)) > 0$ for all $s\in S$. If $\M$ does not satisfy this condition from the start, we can apply a standard preprocessing step, which is essentially given by taking the MEC quotient of $\M$, see \cite{deAlfaro97b, deAlfaro97} and also \cite{Ciesinski08}. While it is often easier to verify the condition $\prb^{\min}_s(\lozenge (\goal\lor\fail)) > 0$, it is in fact equivalent to $\prb^{\min}_s(\lozenge (\goal\lor\fail)) = 1$ (see \citeapp{lem: almost sure BSCC}).
	
	Whenever suitable, we denote by $\M$ also the set of enabled state-action pairs, i.e., $\M =\{(s,\alpha)\in S\times\Act\mid \alpha\in\Act(s)\}$. Let $\Ab \in \R^{\M \times S}$ be defined by
	\[\mathbf{A}((s,\alpha),t) =
	\begin{cases}
	1 - \Pb(s,\alpha,s), & \text{if } s = t \\
	-\Pb(s,\alpha,t), & \text{if } s\neq t
	\end{cases}\]
  We denote by $\bb = (\bb(s,\alpha))_{(s,\alpha)\in\M}\in \R^{\M}$ with $\bb(s, \alpha) = \Pb(s, \alpha, \goal)$ and by $\delta_{s_0}$ the probability distribution that assigns $1$ to $s_0$, and $0$ to all other states.
\end{setting}

The vectors $\prb^{\min}(\lozenge\goal)$ and $\prb^{\max}(\lozenge\goal)$ can be characterized using the following linear programs. Although this characterization is well-known, we give a proof in \referapp{} due to slight differences with the standard literature.

\begin{restatable}[LP characterization, cf. {\cite[Lemma 8]{BiancoA95}}]{proposition}{prminmaxlp}
  \label{prop:LP version of pr_min}
	Let $\M$ be an MDP as in Setting \ref{setting:mdp} and let $\db \in \R^n_{>0}$. Then the vectors $\prb^{\min}(\lozenge\goal)$ and $\prb^{\max}(\lozenge\goal)$ are, respectively, the \emph{unique} solution of the LPs
	\begin{equation*} \label{eq: LP for min}
	\max \, \db \cdot \zb \;\text{ s.t. } \; \Ab \zb \leq \bb \quad\;\; \text{ and } \quad\;\; \min \, \db \cdot \zb \;\text{ s.t. } \; \Ab \zb \geq \bb.
	\end{equation*}
\end{restatable}

\section{Farkas certificates for reachability in MDPs}
\allowdisplaybreaks
\label{sec:certificates}

In this section we establish certificates for the following statements:
\begin{enumerate}[leftmargin =10mm]
	\item All schedulers $\S$ satisfy $\Pr_{s_0}^\S(\lozenge\goal) \gtrsim \lambda$ (i.e., $\prb_{s_0}^{\min}(\lozenge\goal) \gtrsim \lambda$).
	\item Some scheduler $\S$ satisfies $\Pr_{s_0}^\S(\lozenge\goal) \gtrsim \lambda$ (i.e., $\prb_{s_0}^{\max}(\lozenge\goal) \gtrsim \lambda$).
	\item All schedulers $\S$ satisfy $\Pr_{s_0}^\S(\lozenge\goal) \lesssim \lambda$ (i.e., $\prb_{s_0}^{\max}(\lozenge\goal) \lesssim \lambda$).
	\item Some scheduler $\S$ satisfies $\Pr_{s_0}^\S(\lozenge\goal) \lesssim \lambda$ (i.e., $\prb_{s_0}^{\min}(\lozenge\goal) \lesssim \lambda$).
\end{enumerate}
where $\lesssim \; \in \{\leq, <\}$ and $\gtrsim \; \in\{\geq, >\}$.
The basis of our construction is the LP characterization of the probabilities above and, crucially, Farkas' Lemma.

\subsubsection{Certificates for universally-quantified statements.}

In order to deal with the cases (1) and (3), we need the following lemma proved in \referapp{}.

\begin{restatable}{lemma}{monotonicityprminprmax}
	\label{lem:monotonicity}
	For $\Ab \in \R^{\M \times S}, \bb\in \R^{\M}$ as in Setting \ref{setting:mdp}, we have for all $\zb\in \R^{S}:$
\begin{gather*}
	\Ab\zb \leq \bb \implies \zb \leq \prb^{\min}(\lozenge\goal)\\
	\Ab\zb \geq \bb \implies \zb \geq \prb^{\max}(\lozenge\goal)
\end{gather*}
\end{restatable}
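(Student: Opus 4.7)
My plan is to prove each implication by fixing an MD-scheduler that attains the relevant extremum and then comparing $\zb$ to the reachability vector of the induced absorbing Markov chain. The existence of these MD-schedulers is supplied by \cite[Lemmata 10.102 and 10.113]{BaierK2008}, which are already cited in Section~2.

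For the first implication, I would let $\S$ be an MD-scheduler with $\Pr^{\S}_s(\lozenge\goal) = \prb^{\min}_s(\lozenge\goal)$ for every $s\in S$ and introduce the sub-stochastic matrix $\Qb \in \R^{S\times S}$ with entries $\Qb(s,t) = \Pb(s,\S(s),t)$ together with the vector $\bb_\S \in \R^S$ defined by $\bb_\S(s) = \Pb(s,\S(s),\goal)$. By the very definition of $\Ab$ in Setting~\ref{setting:mdp}, restricting the system $\Ab\zb \leq \bb$ to the rows indexed by $\{(s,\S(s)) : s\in S\}$ produces exactly the inequality $(\Ib - \Qb)\zb \leq \bb_\S$.

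The core of the argument is then to invert $\Ib-\Qb$ while preserving the direction of the inequality. Under the standing assumption of Setting~\ref{setting:mdp}, $\S$ reaches $\{\goal,\fail\}$ almost surely from every state of $S$, which forces $\Qb^n \to 0$ as $n \to \infty$. Hence $\Ib-\Qb$ is invertible with non-negative Neumann-series inverse $(\Ib-\Qb)^{-1} = \sum_{k\geq 0}\Qb^k$, and left-multiplying the previous inequality by this matrix yields
\[
\zb \;\leq\; \sum_{k\geq 0}\Qb^k\bb_\S.
\]
The right-hand side equals $\Pr^{\S}(\lozenge\goal) = \prb^{\min}(\lozenge\goal)$, because $(\Qb^k\bb_\S)(s)$ is precisely the probability, under $\S$, of first entering $\goal$ at step $k+1$ starting from $s$.

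The second implication goes through by an entirely symmetric argument: pick an MD-scheduler attaining $\prb^{\max}(\lozenge\goal)$, derive $(\Ib-\Qb)\zb \geq \bb_\S$ from $\Ab\zb \geq \bb$, and apply the same non-negative inverse to conclude $\zb \geq \prb^{\max}(\lozenge\goal)$. The main obstacle I anticipate is justifying the non-negativity of $(\Ib-\Qb)^{-1}$ and the vanishing $\Qb^n \to 0$; both rest on the almost-sure absorption property that Setting~\ref{setting:mdp} explicitly secures via its MEC-quotient preprocessing, so once that is cited the rest of the argument is routine linear algebra.
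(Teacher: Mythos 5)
Your proof is correct, but it takes a genuinely different route from the paper. The paper treats the lemma as an immediate consequence of the LP characterization (\Cref{prop:LP version of pr_min}): assuming some entry satisfies $\zb(s) > \prb^{\min}_s(\lozenge\goal)$, it constructs a tailored strictly positive objective vector $\db$ for which $\db\cdot\zb > \db\cdot\prb^{\min}(\lozenge\goal)$, contradicting the optimality (and uniqueness) of $\prb^{\min}(\lozenge\goal)$ for the LP $\max\,\db\cdot\zb$ s.t.\ $\Ab\zb\leq\bb$; the max case is symmetric. You instead give the classical super-/sub-harmonicity argument: fix an optimizing MD-scheduler $\S$ (which exists and is optimal simultaneously in all states), restrict $\Ab\zb\leq\bb$ to the rows $(s,\S(s))$ to get $(\Ib-\Qb)\zb\leq\bb_\S$, use the almost-sure absorption guaranteed by Setting~\ref{setting:mdp} to obtain $\Qb^n\to 0$ and hence the non-negative Neumann-series inverse $\sum_{k\geq 0}\Qb^k$, and multiply through to conclude $\zb\leq\Pr^{\S}(\lozenge\goal)=\prb^{\min}(\lozenge\goal)$, with the dual computation for $\prb^{\max}$. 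Both arguments are sound; all the facts you invoke ($\Qb^n\to 0$, convergence of $\sum_{k\geq 0}\Qb^k$ to $(\Ib-\Qb)^{-1}$, the identity $\sum_k\Qb^k\bb_\S=\Pr^{\S}(\lozenge\goal)$) are exactly the machinery the paper deploys inside its proofs of \Cref{prop:LP version of pr_min} and \Cref{lem:relaxation on y}. What your approach buys is independence from the LP uniqueness statement, making the lemma self-contained and transparent about where the Setting~\ref{setting:mdp} assumption enters; what the paper's approach buys is brevity, since \Cref{prop:LP version of pr_min} is proved anyway and the contradiction argument then requires only a few lines without re-deriving the Neumann-series facts at this point.
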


\begin{corollary}\label{cor:certificates for all quantified}
	For  $\gtrsim \; \in \{\geq , >\}$ and  $\lesssim \; \in \{\leq , <\}$ we have
\begin{gather*} 
	\prb^{\min}_{s_0}(\lozenge\goal) \gtrsim \lambda \iff \exists \zb\in\R^S.\; \Ab \zb \leq \bb \land \zb(s_0) \gtrsim  \lambda\\
	\prb^{\max}_{s_0}(\lozenge\goal) \lesssim \lambda \iff \exists \zb\in\R^S.\; \Ab \zb \geq \bb \land \zb(s_0) \lesssim \lambda
\end{gather*}
\end{corollary}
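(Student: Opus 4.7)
The plan is to prove both biconditionals by instantiating the witness $\zb$ with the actual probability vector for the forward direction and by invoking Lemma~\ref{lem:monotonicity} for the backward direction; Proposition~\ref{prop:LP version of pr_min} is what guarantees the witness works.

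For the first equivalence, the backward direction is immediate: given any $\zb \in \R^S$ with $\Ab\zb \leq \bb$, the first implication of Lemma~\ref{lem:monotonicity} yields $\zb \leq \prb^{\min}(\lozenge\goal)$ componentwise, so in particular $\prb^{\min}_{s_0}(\lozenge\goal) \geq \zb(s_0) \gtrsim \lambda$, which chains through for both $\gtrsim \;=\; \geq$ and $\gtrsim \;=\; >$. For the forward direction, take $\zb := \prb^{\min}(\lozenge\goal)$. By Proposition~\ref{prop:LP version of pr_min} this $\zb$ lies in the feasible region $\{\zb \mid \Ab\zb \leq \bb\}$ (it is even the $\db$-maximal point there), and by construction $\zb(s_0) = \prb^{\min}_{s_0}(\lozenge\goal) \gtrsim \lambda$, so this $\zb$ is the desired witness.

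The second equivalence is symmetric. Backward direction: any $\zb$ with $\Ab\zb \geq \bb$ satisfies $\zb \geq \prb^{\max}(\lozenge\goal)$ by the second implication of Lemma~\ref{lem:monotonicity}, hence $\prb^{\max}_{s_0}(\lozenge\goal) \leq \zb(s_0) \lesssim \lambda$. Forward direction: take $\zb := \prb^{\max}(\lozenge\goal)$, which by Proposition~\ref{prop:LP version of pr_min} satisfies $\Ab\zb \geq \bb$ and has $\zb(s_0) = \prb^{\max}_{s_0}(\lozenge\goal) \lesssim \lambda$.

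I don't anticipate a serious obstacle here: the corollary is essentially a packaging of Lemma~\ref{lem:monotonicity} together with the fact that the extremal probability vector itself lies in the relevant polyhedron. The only minor point to flag is that handling the strict variants $>$ and $<$ requires no additional work, since the inequality $\zb(s_0) \gtrsim \lambda$ (respectively $\zb(s_0) \lesssim \lambda$) is simply transported unchanged through the (non-strict) inequalities coming from Lemma~\ref{lem:monotonicity}. The heavy lifting has already been done in Lemma~\ref{lem:monotonicity} and Proposition~\ref{prop:LP version of pr_min}, so the proof will be short.
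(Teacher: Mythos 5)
Your proof is correct and follows the paper's own argument exactly: the forward directions take $\zb = \prb^{\min}(\lozenge\goal)$ (resp.\ $\prb^{\max}(\lozenge\goal)$), whose feasibility is guaranteed by Proposition~\ref{prop:LP version of pr_min}, and the backward directions are Lemma~\ref{lem:monotonicity}. Nothing is missing; the remark about strict versus non-strict thresholds chaining through the componentwise bound is also exactly right.
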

\begin{proof}
  For the direction from left to right, we take $\zb$ to be $\prb^{\min}(\lozenge\goal)$. The opposite direction follows from \Cref{lem:monotonicity}.
\qed\end{proof}

The right hand sides of \Cref{cor:certificates for all quantified} provide \emph{certifying} formulations for problems (1) and (3): to check whether the corresponding threshold statement holds, one must merely check whether $\zb$ satisfies the inequalities, rather than checking whether $\prb^{\min / \max}_{s_0}(\lozenge\goal)$ was computed correctly.
If the threshold condition is satisfied, then the vectors $\prb_{s_0}^{\min/\max}(\lozenge \goal)$ are also valid certificates.

\subsubsection{Certificates for existentially-quantified statements.}\label{sec:certificates for E}

To find certificates for the cases (2) and (4), we calculate:
\begin{align*}
  &\quad\prb^{\min}_{s_0}(\lozenge\goal) < \lambda \\
\overset{\text{Cor. \ref{cor:certificates for all quantified}}}{\iff} &\quad\neg \,\exists \zb\in\R^S_{\geq 0}. \, \,\;  \Ab\zb \leq  \bb \land \zb(s_0) \geq \lambda\\
\iff  &\quad\neg \,\exists \zb\in\R^S_{\geq 0}. \, \,\;  
\begin{pmatrix}
\Ab \\
-1 \:0 \ldots 0
\end{pmatrix}\zb \leq  
\begin{pmatrix}
\bb\\
-\lambda
\end{pmatrix}\\
\overset{\text{Lem. \ref{lem:Farkas}}}{\iff}  &\quad\exists \yb \in \R^{\M}_{\geq 0}, y^*\geq 0.\, \,\; (\yb, y^*)
\begin{pmatrix}
\Ab \\
-1 \:0 \ldots 0
\end{pmatrix}\geq 0 \land
(\yb, y^*)
\begin{pmatrix}
\bb\\
-\lambda
\end{pmatrix} < 0\\
\iff &\quad\exists \yb \in \R^{\M}_{\geq 0}.\, \,\; \yb\Ab \geq\delta_{s_0}\land \yb\bb < \lambda.
\end{align*}
For non-strict inequalities, we apply Farkas' Lemma in the opposite direction:
\begin{align*}
   &\quad\prb^{\min}_{s_0}(\lozenge\goal) \leq \lambda \\
\overset{\text{Cor. \ref{cor:certificates for all quantified}}}{\iff}  &\quad\neg \,\exists \zb\in\R^S_{\geq 0}. \, \,\;  \Ab\zb \leq  \bb \land \zb(s_0) > \lambda\\
  \iff &\quad\neg \,\exists \zb\in\R^S_{\geq 0}, z^* \geq 0.\, \,\;  \begin{pmatrix}
   - \Ab & \bb
    \end{pmatrix}
    \begin{pmatrix}
 	 \zb \\
 	 z^*
 	 \end{pmatrix} \geq 0 \land
  \begin{pmatrix}
    -\delta_{s_0} &   \lambda
  \end{pmatrix} 
   \begin{pmatrix}
  \zb \\
  z^*
  \end{pmatrix} < 0  
  \\
 \overset{\text{Lem. \ref{lem:Farkas}}}{\iff}&\quad\exists \yb \in \R^{\M}_{\geq 0}.\, \,\; \yb \begin{pmatrix}
    -\Ab & \bb 
    \end{pmatrix} \leq \begin{pmatrix}
    -\delta_{s_0} & \lambda
  \end{pmatrix} \\
    \iff &\quad\exists \yb \in \R^{\M}_{\geq 0}.\, \,\; \yb\Ab \geq\delta_{s_0}\land \yb\bb \leq \lambda.\label{eq:Farkas mdp}
\end{align*}
The deductions for $\prb^{\max}(\lozenge\goal)$ are analogous, so that we get: 
\begin{proposition}\label{prop:certificates for ex quantified}
	For  $\gtrsim \; \in \{\geq , >\}$ and  $\lesssim \; \in \{\leq , <\}$ we have
	\begin{gather*} 
	\prb^{\min}_{s_0}(\lozenge\goal) \lesssim \lambda \iff \exists \yb\in\R^{\M}_{\geq 0}.\, \,\; \yb\Ab \geq \delta_{s_0} \land \yb\bb \lesssim  \lambda\\
	\prb^{\max}_{s_0}(\lozenge\goal) \gtrsim \lambda \iff \exists \yb\in\R^{\M}_{\geq 0}.\, \,\;\yb\Ab \leq \delta_{s_0} \land \yb\bb \gtrsim  \lambda
	\end{gather*}
\end{proposition}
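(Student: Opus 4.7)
The $\prb^{\min}$ half of the proposition is already derived in the displayed computation preceding the statement, and my plan for that half is simply to package that derivation as a proof. The template is: negate the goal (using the first equivalence of Corollary~\ref{cor:certificates for all quantified}), fold the two constraints on $\zb$ into a single stacked system, apply Farkas' Lemma, and normalize away the auxiliary multiplier $y^*$ produced by the extra row. For the strict case $\prb^{\min} < \lambda$ one applies Farkas in the direction ``$\neg$\,Primal $\Leftrightarrow$ Dual'', while for the non-strict case $\prb^{\min} \leq \lambda$ a homogenization by an auxiliary scalar $z^*$ is used so that the opposite direction (``$\neg$\,Dual $\Leftrightarrow$ Primal'') becomes applicable.

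My plan for the $\prb^{\max}$ half is to mirror this template. Starting from the second equivalence of Corollary~\ref{cor:certificates for all quantified}, I would rewrite
\[
\prb^{\max}_{s_0}(\lozenge\goal) \gtrsim \lambda \;\iff\; \neg\,\exists\,\zb \in \R^S.\; \Ab\zb \geq \bb \land \zb(s_0) \lesssim \lambda,
\]
where $\lesssim$ is the strict opposite of $\gtrsim$. Because Lemma~\ref{lem:monotonicity} forces $\zb \geq \prb^{\max}(\lozenge\goal) \geq 0$ whenever $\Ab\zb \geq \bb$, I may restrict the existential to $\R^S_{\geq 0}$, which puts us inside the hypotheses of Farkas' Lemma. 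I would then stack the two constraints into the form $\begin{pmatrix}-\Ab\\\delta_{s_0}\end{pmatrix}\zb \leq \begin{pmatrix}-\bb\\\lambda\end{pmatrix}$ (directly for the non-strict case, and via the same homogenization trick for the strict one), apply Farkas, and obtain a pair $(\yb, y^*) \in \R^{\M}_{\geq 0}\times \R_{\geq 0}$ satisfying $\yb\Ab \leq y^*\delta_{s_0}$ and $\yb\bb \gtrsim y^*\lambda$. Dividing through by $y^*$ then yields the claimed certificate $\yb\Ab \leq \delta_{s_0}$, $\yb\bb \gtrsim \lambda$.

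The main obstacle I expect is ruling out $y^* = 0$ after the Farkas step. In the $\prb^{\min}$ derivation this is immediate: $y^*=0$ would force $\yb\bb < 0$ (or a strict mismatch), which is incompatible with $\yb,\bb \geq 0$. In the $\prb^{\max}$ case, however, $y^* = 0$ gives $\yb\Ab \leq 0$ together with $\yb\bb > 0$, and this is not excluded by a sign argument alone. I would rule it out by a flow-conservation calculation: summing the inequality $(\yb\Ab)_t \leq 0$ over $t \in S$ and using $\sum_{t\in S_\all}\Pb(s,\alpha,t)=1$ yields
\[
\sum_{(s,\alpha)\in\M} \yb(s,\alpha)\bigl(\Pb(s,\alpha,\goal) + \Pb(s,\alpha,\fail)\bigr) \leq 0,
\]
so each non-negative summand vanishes and in particular $\yb\bb = \sum_{(s,\alpha)} \yb(s,\alpha)\Pb(s,\alpha,\goal) = 0$, contradicting $\yb\bb > 0$. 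This step is the only genuinely new ingredient beyond a mechanical transposition of the $\prb^{\min}$ derivation, and it crucially uses that $\goal$ and $\fail$ are the only sinks of probability mass, as enforced by Setting~\ref{setting:mdp}.
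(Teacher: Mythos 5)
Your proposal is correct and is essentially the paper's own argument: the paper proves the $\prb^{\min}$ half by the displayed Farkas computation and dismisses the $\prb^{\max}$ half as ``analogous'', which is exactly the transposition you carry out (including the use of \Cref{lem:monotonicity} to restrict to $\zb \geq 0$). The one step that is not purely mechanical is the one you isolate: after the Farkas step one gets $(\yb,y^*)$ with $\yb\Ab \leq y^*\delta_{s_0}$ and $\yb\bb > y^*\lambda$, and the degenerate case $y^*=0$ is no longer excluded by a sign argument as it is for $\prb^{\min}$. Your flow-conservation argument (each row of $\Ab$ sums to $\Pb(s,\alpha,\goal)+\Pb(s,\alpha,\fail)\geq 0$, so $\yb\geq 0$ and $\yb\Ab\leq 0$ force $\yb\bb=0$) closes this correctly; the paper never spells this out in the proof of the proposition, and the closest it comes is the scheduler-based unboundedness argument for $\P^{\max}(\lambda)$ in the appendix proof of \Cref{lem:polytope}, so your elementary row-sum argument is a legitimate and arguably cleaner substitute. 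Two small corrections to the write-up rather than gaps: the pairing in your parenthesis is swapped --- it is the strict bound $\prb^{\max}_{s_0}(\lozenge\goal)>\lambda$ whose negation ($\Ab\zb\geq\bb \land \zb(s_0)\leq\lambda$) stacks directly into a ``$\leq$'' system and produces the multiplier $y^*$, whereas the non-strict bound $\prb^{\max}_{s_0}(\lozenge\goal)\geq\lambda$ negates to the strict constraint $\zb(s_0)<\lambda$ and must go through the $z^*$-homogenization/reverse-Farkas route, where no auxiliary $y^*$ arises at all (Farkas there directly yields $\yb\Ab\leq\delta_{s_0}\land\yb\bb\geq\lambda$, the $z^*=0$ case being vacuous since $\zb\geq 0$). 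Also, your flow argument uses only that every pair in $\M$ is enabled and that $S=S_\all\setminus\{\goal,\fail\}$, i.e.\ the structural part of Setting~\ref{setting:mdp}, not the reachability assumption $\prb^{\min}_s(\lozenge(\goal\lor\fail))>0$.
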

Together, \Cref{cor:certificates for all quantified} and \Cref{prop:certificates for ex quantified} give us all certificate conditions of~\Cref{tab:overview}.
\section{Minimal witnesses for reachability in MDPs}
\label{sec:mws}

In this section we consider the following problem: Given an MDP $\M$ that satisfies the property $\prb_{\M, s_0}^{\min}(\lozenge\goal)\geq \lambda$ (or $\prb_{\M, s_0}^{\max}(\lozenge\goal)\geq \lambda$), find a small subsystem $\M'$ of $\M$ that still satisfies these thresholds.
Such a subsystem is a witness to the satisfaction of the property in $\M$.
We first define subsystems and consider different measures of size which we show to be equivalent.
Then we deal with the question of finding minimal witnessing subsystems.

\subsubsection{Subsystems, witnesses and notions of minimality.}

Our definition of subsystem is essentially the same to the definition in~\cite{WimmerJABK12,WimmerJAKB14} that was used for witnessing subsystems of $\prb_{\M, s_0}^{\max}(\lozenge\goal)\gtrsim\lambda$. From now on we restrict our attention to properties of the form  $\prb_{\M, s_0}^{\min/\max}(\lozenge\goal)\gtrsim\lambda$.
One can deal with upper bounds by exchanging the roles of $\fail$ and $\goal$ and invoking the equality $\prb_{\M,s_0}^{\min}(\lozenge\goal) = 1 - \prb_{\M,s_0}^{\max}(\lozenge\fail)$, which holds by the conditions of Setting~\ref{setting:mdp}.

Intuitively, a subsystem $\M'$ of $\M$ contains a subset of states of $\M$, and a transition of $\M$ originating in a state of $\M'$ remains unchanged in $\M'$ or is redirected to $\fail$ (instead of explicitely redirecting to $\fail$, sub-stochastic distributions are used in~\cite{WimmerJABK12,WimmerJAKB14} with the same effect).

\begin{definition}[Subsystem and witness]\label{def:subsystem}
	Let $\mathcal{M} = (S_\all,\Act,s_0, \Pb)$ be an MDP as in Setting \ref{setting:mdp}. A \emph{subsystem} $\mathcal{M}' \subseteq\mathcal{M}$ is an MDP $\M' = (S'_\all,\Act,s_0,\Pb')$ with $\fail, \goal \in S'_\all\subseteq S_\all$, $\Act_{\M'}(s) = \Act_{\M}(s)$ for all $s\in S'_\all$, and for all $s,t \in S'_\all$ with $t\neq \fail$ and $\alpha\in\Act$ we have
	\[ \Pb'(s,\alpha, t) >0 \Longrightarrow \Pb'(s,\alpha, t) = \Pb(s,\alpha, t).\]
	We say that the states $S_\all\setminus S'_\all$ and the transitions $(s,\alpha, t)$ with $\Pb(s,\alpha, t) > 0$ and $\Pb'(s,\alpha, t) = 0$ have been \emph{deleted} in $\M'$. A \emph{witness} for $\prb_{\M, s_0}^{\min/\max}(\lozenge\goal)\gtrsim \lambda$ is a subsystem $\M'\subseteq\M$ such that $\prb_{\M', s_0}^{\min/\max}(\lozenge\goal)\gtrsim \lambda$.
\end{definition}
\begin{remark}
	The condition $\Act_{\M'}(s) = \Act_{\M}(s)$ ensures that the probability of a deleted transition $(s,\alpha, t)$ is added to $(s, \alpha, \fail)$. This is essential for witnesses for $\prb_{\M, s_0}^{\min}(\lozenge\goal)\gtrsim \lambda$ as one could otherwise remove entire actions causing low probabilities and obtain greater $\prb^{\min}$ in $\M'$ than in $\M$ as a result. For witnesses of $\prb_{\M', s_0}^{\max}(\lozenge\goal)\gtrsim \lambda$ one could delete this condition, thus leading to the notion of ~\cite{WimmerJABK12,WimmerJAKB14}.
\end{remark}

\begin{example} \Cref{fig:MDP} depicts an MDP and \Cref{fig:subsystem} indicates the subsystem that is obtained by deleting the state $t$ and additionally the transition $(u, \alpha, s_0)$.
	
\begin{figure}[tbp]
	\captionsetup{width=0.75\linewidth, labelfont={bf, up}}
  \begin{subfigure}[b]{0.5\columnwidth}
	\centering
  \scalebox{1.3}{
  		
  		\begin{tikzpicture}[x=18mm,y=15mm,font=\scriptsize]
  		
  		\node[state] (s) {$s_0$};
  		\node[state] (t) [above right = 0.8cm and 1.5cm of s]  {$t$};
  		\node[state] (u) [below right = 0.8cm and 1.5cm of s]  {$u$};
  		\node[state] (v) [below right = 1cm and 1cm of t]  {$v$};
  		\node[state] (f) [right = 1.3cm of u] {$\fail$};
  		\node[state] (g) [above = 2.1cm of f]  {$\goal$};
  		
  		\node[bullet] (sst) [above right = 0.5cm and 0.7cm of s] {};
  		\node[bullet] (stu) [right = 0.5cm of s] {};
  		\node[bullet] (tuv) [below right = 0.5cm and 0.5cm of t] {};
  		\node[bullet] (uvws) [above right = 0.3cm and 0.5cm of u] {};
  		\node[bullet] (vgf) [right = 0.5cm of v] {};
  		\node[bullet] (wgv) [above right = 0.3cm and 0.5cm of v] {};
  		
  		{\color{blue!50!black}
  			\draw [ntran] (s) -- node[above=12pt,pos=.6]{$\alpha$} (sst);
  			\draw (sst) edge[ptran, bend right, in=210] coordinate[pos=.3] (bss) (s);
  			\draw (sst) edge[ptran, bend right,out=40, in=150] coordinate[pos=.3] (bst1) (t);
  			\draw (bss) to[bend left] (bst1);}
  		
  		{\color{green!50!black}
  			\draw [ntran] (s) -- node[above=-0.04,pos=.55]{$\beta$} (stu);
  			\draw (stu) edge[ptran, bend right, in=160] coordinate[pos=.3] (bst) (t);
  			\draw (stu) edge[ptran, bend right,out=0, in=160] coordinate[pos=.3] (bsu) (u);
  			\draw (bst) to[bend left] (bsu);}
  		
  		{\color{blue!50!black}
  			\draw [ntran] (t) -- node[right=0.02,pos=.2]{$\alpha$} (tuv);
  			\draw (tuv) edge[ptran, bend right, in=175] coordinate[pos=.2] (btu) (u);
  			\draw (tuv) edge[ptran, bend right, out = 60, in=175] coordinate[pos=.5] (btv) (v);
  			\draw (btu) to[bend right] (btv);}
  		
  		{\color{blue!50!black}
  			\draw [ntran] (u) -- node[right=7pt,pos=1]{$\alpha$} (uvws);
  			\draw (uvws) edge[ptran, bend left] coordinate[pos=.35] (buv) (v);
  			\draw [ptran] (uvws) -- node[left,pos=.5]{} ($(s) +(2.2cm,-1.7cm)$) -- ($(s) - (0,1.7cm)$) -- (s);
  			\draw (buv) to[bend left] ($(uvws) + (0.2cm,0)$);
  			\draw ($(uvws) + (0.2cm,0)$) to[bend left] ($(uvws) - (0,0.22cm)$);}

  		{\color{olive}
  			\draw [ntran] (v) -- node[above,pos=.8]{$\gamma$} (vgf);
  			\draw (vgf) edge[ptran, bend right, in=175] coordinate[pos=.4] (bvg) (g);
  			\draw (vgf) edge[ptran, bend left, out = 20, in=175] coordinate[pos=.2] (bvf) (f);
  			\draw (bvf) to[bend right] (bvg);}
  		
  		{\color{blue!50!black}
  			\draw [ntran] (v) -- node[left=1pt,pos=.9]{$\alpha$} (wgv);
  			\draw (wgv) edge[ptran, bend right, in=175] coordinate[pos=.4] (bwg) (g);
  			\draw (wgv) edge[ptran, bend left, out = -20, in=200] coordinate[pos=.2] (bwv) (t);
  			\draw (bwg) to[bend right] (bwv);}
  		
  		{\color{olive}\draw [ptran] (s) -- node[above=-.05,pos=.3]{$\gamma$} (u);}
  		{\color{green!50!black}\draw [ptran] (u) -- node[right=-.04,pos=.5]{$\beta$} (t);}
  		\end{tikzpicture}}
	\subcaption{}
  \label{fig:MDP}
  \end{subfigure}
  \begin{subfigure}[b]{0.5\columnwidth}
	\centering
	\scalebox{1.3}{

		\begin{tikzpicture}[x=18mm,y=15mm,font=\scriptsize]
		
		\node[state] (s) {$s_0$};
		\node[state] (u) [below right = 0.8cm and 1.5cm of s]  {$u$};
		\node[state] (v) [below right = 1cm and 1cm of t]  {$v$};
		\node[state] (f) [right = 1.3cm of u] {$\fail$};
		\node[state] (g) [above = 2.1cm of f]  {$\goal$};
		
		\node[bullet] (sst) [above right = 0.5cm and 0.7cm of s] {};
		\node[bullet] (stu) [right = 0.5cm of s] {};
		\node[bullet] (uvws) [above right = 0.3cm and 0.5cm of u] {};
		\node[bullet] (vgf) [right = 0.5cm of v] {};
		\node[bullet] (wgv) [above right = 0.3cm and 0.5cm of v] {};
		
		{\color{blue!50!black}
			\draw [ntran] (s) -- node[above=12pt,pos=.6]{$\alpha$} (sst);
			\draw (sst) edge[ptran, bend right, in=210] coordinate[pos=.3] (bss) (s);
			{\color{red} \draw[thick, dashed,ptran] (sst) -- ($(sst) + (0,1.03cm)$) -- ($(g) + (0.5cm,0.5cm)$) -- ($(f) + (0.5cm,0)$) -- (f);}
			\draw (bss) to[bend left] ($(bst1) - (0.08cm,0)$);}
		
		{\color{green!50!black}
			\draw [ntran] (s) -- node[above=-0.04,pos=.55]{$\beta$} (stu);
			{\color{red} \draw[thick, dashed,ptran] (stu) -- ($(stu) - (0,2cm)$) -- ($(f) - (0,0.69cm)$) -- (f);}
			\draw (stu) edge[ptran, bend right,out=0, in=160] coordinate[pos=.3] (bsu) (u);
			\draw (bsu) to[bend left] ($(stu) - (0,0.4cm)$);}
		
		{\color{blue!50!black}
			\draw [ntran] (u) -- node[right=7pt,pos=1]{$\alpha$} (uvws);
			\draw (uvws) edge[ptran, bend left] coordinate[pos=.35] (buv) (v);
			{\color{red}\draw [ptran,thick, dashed] (uvws) -- ($(uvws) +(0,-0.2cm)$) -- ($(uvws) + (1cm,-0.2cm)$) -- (f);}
			\draw (buv) to[bend left] ($(uvws) + (0.2cm,0)$);
			\draw ($(uvws) + (0.2cm,0)$) to[bend left] ($(uvws) + (0.08cm,-0.18cm)$);}
		
		{\color{olive}
			\draw [ntran] (v) -- node[above,pos=.8]{$\gamma$} (vgf);
			\draw (vgf) edge[ptran, bend right, in=175] coordinate[pos=.4] (bvg) (g);
			\draw (vgf) edge[ptran, bend left, out = 20, in=175] coordinate[pos=.2] (bvf) (f);
			\draw (bvf) to[bend right] (bvg);}
		
		{\color{blue!50!black}
			\draw [ntran] (v) -- node[left=1pt,pos=.9]{$\alpha$} (wgv);
			\draw (wgv) edge[ptran, bend right, in=175] coordinate[pos=.2] (bwg) (g);
			{\color{red}\draw[thick, dashed] (wgv) -- ($(wgv) + (0.73cm,0)$);}
			\draw (bwg) to[bend left] ($(wgv) + (0.2cm,0)$);}
		
		{\color{olive}\draw [ptran] (s) -- node[above=-.05,pos=.3]{$\gamma$} (u);}
		{\color{red}\draw[thick, dashed] [ptran] (u) -- node[below=-0.03,pos=.5]{\color{green!50!black}$\beta$} (f);}
		\end{tikzpicture}
		
	}
	\subcaption{}
  \label{fig:subsystem}
  \end{subfigure}
  \caption{An MDP (with omitted probabilities (\subref{fig:MDP})) and a subsystem (\subref{fig:subsystem}), where redirected transitions are dashed.}
\end{figure}
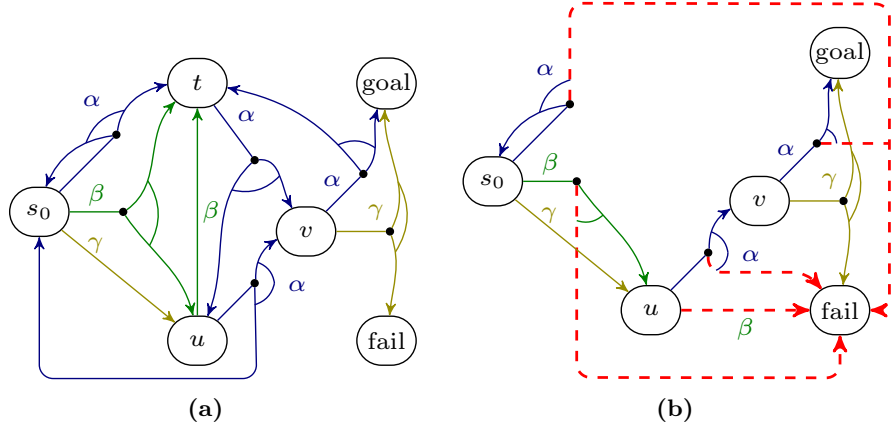
\end{example}
The following lemma ensures that we can use the subsystems as witnesses for both $\prb_{\M, s_0}^{\max}(\lozenge\goal)\gtrsim \lambda$ and $\prb_{\M, s_0}^{\min}(\lozenge\goal)\gtrsim \lambda$.
\begin{restatable}{lemma}{schedulerssubsys}
  Let $\M$ be an MDP as in Setting \ref{setting:mdp} and $\M' \subseteq \M$. Then:
     \[\prb^{\min}_{\M', s_0}(\lozenge\goal) \leq \prb^{\min}_{\M, s_0}(\lozenge\goal) \;\;\text{ and }\;\; \prb^{\max}_{\M', s_0}(\lozenge\goal) \leq \prb^{\max}_{\M, s_0}(\lozenge\goal)\]
 \end{restatable}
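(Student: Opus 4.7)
The plan is to bridge the two probability measures via a scheduler-extension argument, exploiting the condition $\Act_{\M'}(s) = \Act_{\M}(s)$ for $s \in S'_\all$, which lets every scheduler on $\M'$ be viewed as (the restriction of) a scheduler on $\M$ after fixing arbitrary behaviour on finite paths that leave $S'_\all$.

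First I would establish a \emph{path coincidence lemma}: for any scheduler $\S'$ on $\M'$ and any extension $\S$ of $\S'$ to $\M$, and for every finite path $\tau = s_0 \alpha_0 s_1 \ldots s_n$ that stays inside $S'_\all$ and avoids $\fail$, the $\S'$- and $\S$-probabilities of $\Cyl(\tau)$ coincide. This is immediate from \Cref{def:subsystem}: each transition $(s_i, \alpha_i, s_{i+1})$ with $s_{i+1} \neq \fail$ satisfies $\Pb'(s_i, \alpha_i, s_{i+1}) = \Pb(s_i, \alpha_i, s_{i+1}) > 0$, and the scheduler distributions on such prefixes agree by construction. Carathéodory's extension theorem then lifts this cylinder-level equality to all measurable events supported on paths that stay in $S'_\all$.

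Since $\goal$ is absorbing and lies in $S'_\all$, every $\M'$-path reaching $\goal$ is entirely contained in $S'_\all$; hence $\{\lozenge\goal\} \subseteq \Paths(\M')$ embeds, under the natural inclusion $\Paths(\M') \hookrightarrow \Paths(\M)$, into the event $\{\lozenge\goal\} \subseteq \Paths(\M)$. Combined with the path coincidence lemma this gives, for any $\S'$ and any extension $\S$,
\[
   \Pr^{\S'}_{\M', s_0}(\lozenge \goal) \;\leq\; \Pr^{\S}_{\M, s_0}(\lozenge \goal).
\]
For the $\max$ inequality I take the supremum over $\S'$: every $\S'$ is dominated by its extension $\S$, so $\prb^{\max}_{\M', s_0}(\lozenge\goal) \leq \prb^{\max}_{\M, s_0}(\lozenge\goal)$. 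For the $\min$ inequality I pick an MD-scheduler $\S$ on $\M$ attaining $\prb^{\min}_{\M, s_0}(\lozenge\goal)$ (available by the result cited in \Cref{setting:mdp}); because $\Act_{\M'}(s) = \Act_{\M}(s)$ on $S'_\all$, the restriction $\S' := \S|_{\M'}$ is a valid scheduler on $\M'$ whose extension is $\S$ itself, so
\[
   \prb^{\min}_{\M', s_0}(\lozenge\goal) \;\leq\; \Pr^{\S'}_{\M', s_0}(\lozenge\goal) \;\leq\; \Pr^{\S}_{\M, s_0}(\lozenge\goal) \;=\; \prb^{\min}_{\M, s_0}(\lozenge\goal).
\]

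The main technical obstacle is the path coincidence lemma: one must carefully argue that the inclusion of $\Paths(\M')$ into $\Paths(\M)$ is measurable and that cylinder-level agreement propagates to the generated $\sigma$-algebra. Beyond that, the argument is pure bookkeeping on schedulers, and the asymmetry between the two cases (quantifier inside versus outside the extension operation) is exactly what forces using the supremum for $\max$ and an extremal MD-scheduler for $\min$.
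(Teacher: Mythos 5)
Your proof is correct and follows essentially the same route as the paper's: both rest on the observation that every path of $\M'$ reaching $\goal$ avoids $\fail$ and hence is a path of $\M$ with identical probability, and both handle $\prb^{\min}$ by restricting an optimal scheduler of $\M$ to $\M'$ and $\prb^{\max}$ by extending an arbitrary scheduler of $\M'$ to $\M$. The only cosmetic differences are that the paper works with MR-schedulers and treats the cylinder-level measure argument as immediate, whereas you spell it out (noting, for completeness, that the extension must also be fixed on $\M$-paths that use transitions deleted inside $S'_\all$, not only on those leaving $S'_\all$).
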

\noindent We consider the following notions of minimality for subsystems:
\begin{enumerate}[leftmargin=12mm, rightmargin=8mm]
	\item \emph{State-minimality:} $|S'_\all|$ is minimal.
	\item \emph{Transition-minimality:} The number of transitions, i.e. triples $(s,\alpha,t)$ satisfying $\Pb'(s,\alpha, t) > 0$, is minimal;
	\item \emph{Size-minimality:} The sum of states and transitions is minimal.
\end{enumerate}

Depending on the situation, one notion might be more suitable than the others. However, in \citeapp{lem:minreductions} we show that finding transition-minimal (respectively, size-minimal) witnesses can be reduced to finding state-minimal witnesses with a linear (respectively, quadratic) blow-up. We will therefore restrict ourselves to state-minimality for the rest of this paper.

\subsubsection{NP-completeness of finding minimal witnesses for DTMCs.}
\label{sec:npcompleteness}
In this section we determine the computational complexity of the \emph{witness problem}: Given a DTMC $\M$, a positive integer $k$, and a rational number $\lambda\in[0,1]$, decide whether there exists a witness $\M'\subseteq \M$ for $\Pr_{\M, s_0}(\lozenge\goal)\geq \lambda$ with at most $k$ states. The corresponding problem for MDPs is known to be NP-complete~\cite{ChadhaV10,WimmerJAKB14}\footnote{Although the framework in~\cite{ChadhaV10} considers a richer logic, the hardness proof uses only probabilistic reachability formulas such as the ones we consider.}. In this section we show that the witness problem is already NP-complete for acyclic DTMCs, where acyclicity means that the underlying graph with $V = S$ and $E = \{(s,t)\in S\times S\mid \Pb(s,t) > 0\}$ is acyclic (as before, we take $S = S_\all \setminus \{\goal,\fail\}$). This answers a conjecture of \cite{WimmerJABK12} in the affirmative and also shows NP-completeness of finding minimal witnesses for $\prb_{\M, s_0}^{\min}(\lozenge\goal)\geq \lambda$. 

\begin{restatable}{theorem}{mwanpcompleteness}
\label{thm:np-completeness}
	The witness problem is NP-complete for acyclic DTMCs.
\end{restatable}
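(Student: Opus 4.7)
The plan has two parts: membership in $\operatorname{NP}$ and $\operatorname{NP}$-hardness.

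For membership, a subsystem $\M' \subseteq \M$ with $|S'_\all| \le k$ serves directly as the certificate. Since $\M'$ inherits acyclicity from $\M$, the reachability probability $\Pr_{\M',s_0}(\lozenge\goal)$ is computable in polynomial time by reverse-topological dynamic programming: set $\Pr(\goal)=1$, $\Pr(\fail)=0$, and $\Pr(s) = \sum_t \Pb'(s,t)\cdot \Pr(t)$ in reverse topological order. Comparing this value with $\lambda$ finishes verification.

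For $\operatorname{NP}$-hardness I would reduce from $3$-\textsc{Sat}. Given a $3$-CNF $\phi$ on variables $x_1,\ldots,x_n$ with clauses $C_1,\ldots,C_m$, construct an acyclic DTMC $\M_\phi$ with states $\{s_0,\goal,\fail\}$, literal states $\{T_i,F_i\}_{i=1}^{n}$ (each with a deterministic transition to $\goal$), and clause-gadget states $\{c_j\}_{j=1}^{m}$. Send $s_0$ uniformly to the $c_j$ with probability $1/m$, and route each $c_j$'s probability mass through a small sub-gadget whose branches terminate at the three literal states of $C_j$. A state budget $k^\star := 3 + m + n$ is tuned so that any witness of size $\le k^\star$ must include every $c_j$ (otherwise the corresponding clause contributes nothing) plus exactly $n$ literal states, and the threshold $\lambda$ is chosen so that $\Pr_{\M_\phi, s_0}(\lozenge\goal) \ge \lambda$ holds iff every clause gadget contains at least one selected literal. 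Then a satisfying assignment of $\phi$ yields a witness by including the corresponding $T_i$'s and $F_i$'s, and conversely any witness induces a satisfying assignment via $x_i = \mathrm{true} \iff T_i \in \M'$. The construction is polynomial, and combined with membership it gives $\operatorname{NP}$-completeness.

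The main obstacle is that DTMC probabilities combine additively while clause satisfaction is an OR-predicate: under a naive uniform sub-gadget, a clause with two or three selected literals contributes strictly more than one with a single selected literal, so an over-covered clause can numerically mask an uncovered one and spoil the equivalence between $\Pr \ge \lambda$ and $\phi$-satisfiability. The key technical challenge is therefore to engineer the gadgets so that only the pattern ``every clause hit'' produces $\Pr \ge \lambda$. I expect this to be achievable either (i) by reformulating the reduction around a hypergraph problem such as \textsc{3-Hitting-Set} on the hyperedges formed by each clause's literal-triple, allowing the threshold analysis to proceed combinatorially, or (ii) by amplifying each clause gadget into a chain whose probability gap between ``hit'' and ``missed'' cannot be compensated by over-coverage elsewhere. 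Together with the tight state budget (which precludes selecting both $T_i$ and $F_i$ for the same variable without starving another variable of any literal state), these two ingredients force witnesses of size $\le k^\star$ to correspond precisely to satisfying assignments of $\phi$.
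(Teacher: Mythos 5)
Your NP-membership argument is fine and matches the paper's (guess a state set of size $k$, compute the reachability probability of the induced subsystem in polynomial time). The problem is the hardness half: as written it is not a proof but a plan with an acknowledged hole, and the hole is exactly where the difficulty of the theorem lives. You correctly identify the obstacle yourself --- reachability probability is an \emph{additive} quantity (a sum of per-clause contributions proportional to the number of selected literals in that clause), while satisfiability is a conjunction of OR-constraints, so a clause hit by two or three selected literals can numerically compensate for a clause hit by none. With your uniform gadget and threshold, ``$\Pr_{s_0}(\lozenge\goal)\geq\lambda$'' is equivalent to a lower bound on the \emph{total} number of (selected clause, selected literal) incidences, not to ``every clause is hit,'' and the tight budget $k^\star=3+m+n$ does not repair this: within the budget one may select both $T_i$ and $F_i$ for some variable and no literal state for another, over-covering some clauses while leaving others uncovered, so neither direction of the claimed equivalence is established. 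The two proposed fixes are left entirely speculative, and fix (ii) (amplifying each clause gadget into a chain) runs into the same additivity issue, since lengthening a chain rescales a clause's contribution but does not make the threshold sensitive to which clauses are covered rather than how much total mass is covered.

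The paper sidesteps this by not reducing from SAT at all: it reduces from \textsc{Clique}. Given $G=(V,E)$ and $k$, it builds the acyclic DTMC with states $\{s_0\}\cup V\cup E\cup\{\goal,\fail\}$, transitions $s_0\to v$ and $v\to\{v,w\}$ each with probability $1/n$, and $\{v,w\}\to\goal$ with probability $1$, sets $\lambda=k(k-1)/n^2$ and budget $k+ k(k-1)/2+3$. There the reachability probability is $1/n^2$ times the number of incidences between selected vertex states and selected edge states, each edge state contributes at most $2$ incidences, and an extremal counting argument (maximizing incidences subject to the budget) shows the threshold is attainable only when the selected vertices form a $k$-clique with all its $\binom{k}{2}$ edges. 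In other words, the paper chooses a source problem whose defining condition \emph{is} an extremal incidence-count condition, so the additive nature of probabilities works for the reduction instead of against it. Your hitting-set idea (fix (i)) would face the same masking issue as SAT; if you want to complete your write-up, the cleanest route is to switch to a clique- or dense-subgraph-style source problem and carry out the counting argument explicitly.
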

\begin{proof}[Sketch]
	An NP-algorithm for the witness problem is given by guessing a set of states of size $k$ and verifying in polynomial time that the corresponding subsystem satisfies $\Pr_{\M', s_0}(\lozenge\goal)\geq \lambda$.
	
	For hardness, we give a reduction from the \emph{clique problem}, which is among Karp's 21 NP-complete problems \cite{Karp1972}.
	The idea is the following: Given an instance of the clique problem with graph $G=(V,E)$ and integer $k$, construct an acyclic Markov chain $\M$ with states $S = \{s_0\}\cup V \cup E \cup\{\goal, \fail\}$ and edges from each vertex $v \in V$ to all edges to which it is incident.
	Then the existence of a $k$-clique can be reduced to the existence of a ``saturated'' subsystem in $\M$ with $k$ states in $V$.
	To check whether the subsystem is saturated, we require it to have more probability than a certain threshold, which depends on $k$ and $|V|$. Details can be found in \referapp{}.
\end{proof}

\begin{remark}
	NP-completeness of transition-minimal and size-minimal versions of the witness problem for acyclic DTMCs follows along the same lines, where only the sizes and thresholds for the subsystems need to be adapted.
\end{remark}

However, DTMCs whose underlying graph is a tree permit an efficient algorithm for computing minimal witnesses (for the proof see \citeapp{thm: tree-shaped algorithm}).

 \begin{restatable}{proposition}{treeshapeddtmc}
 	Minimal witnesses in tree-shaped DTMCs can be computed in polynomial time.
 \end{restatable}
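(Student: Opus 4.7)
The plan is to exploit the tree structure via bottom-up dynamic programming. The key observation is that in a tree-shaped DTMC, a state $v$ can contribute to $\Pr_{\M', s_0}(\lozenge\goal)$ only if the unique path from $s_0$ to $v$ lies entirely inside $S'_\all$: any transition leaving $S'_\all$ is redirected to $\fail$ and contributes nothing. Hence I may restrict attention to subsystems whose state set forms a connected subtree containing $s_0$, rooted at $s_0$ and with $\goal, \fail$ at the bottom.

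For each state $v \in S_\all$ and each integer $k \geq 1$, I would define
\[ f(v,k) \;=\; \max \bigl\{\Pr_{\M'', v}(\lozenge\goal) \,:\, \M''\text{ subsystem of the subtree rooted at }v,\; |S''_\all| = k\bigr\}, \]
with base cases $f(\goal, 1) = 1$ and $f(\fail, 1) = 0$ (and $f(\goal,k)=1$, $f(\fail,k)=0$ for larger $k$, or simply undefined since one can always reduce $k$). For an internal state $v$ with children $v_1, \ldots, v_m$ and outgoing transition probabilities $p_1, \ldots, p_m$, the recurrence is
\[ f(v,k) \;=\; \max_{\substack{I \subseteq [m] \\ k_i \geq 1,\; \sum_{i\in I} k_i = k-1}} \;\sum_{i\in I} p_i \cdot f(v_i, k_i), \]
encoding the choice of which children to keep in the subsystem and how many states to allocate to each kept child's subtree; a child that is not chosen contributes $0$ because its incoming transition is redirected to $\fail$.

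I would compute these values bottom-up by folding the children of each node into a table one at a time, in knapsack-style fashion. The standard analysis of tree-knapsack then gives a total running time of $O(n^2)$, since merging tables of two subtrees of sizes $n_1$ and $n_2$ costs $O(n_1 n_2)$, and summing over all internal merges telescopes to $O(n^2)$. A minimal witness for $\Pr_{\M,s_0}(\lozenge\goal) \geq \lambda$ is then obtained as the smallest $k$ with $f(s_0,k) \geq \lambda$, and the actual subsystem is reconstructed by memoizing the optimal choices of $I$ and $(k_i)_{i\in I}$ at each node during the DP.

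The main obstacle is not algorithmic but conceptual: one must justify that restricting to connected subsystems containing $s_0$ is without loss of generality (a one-line argument as above, since isolated surviving states cannot decrease $|S'_\all|$ while preserving $\Pr_{\M',s_0}(\lozenge\goal)$), and one must set up the knapsack combination so that each kept child is guaranteed at least one allocated state. Once these bookkeeping issues are handled, the correctness of the recurrence follows immediately from the independence of the subtrees in the tree-shaped DTMC, and the polynomial bound on the running time is routine.
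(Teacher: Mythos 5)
Your overall strategy is the paper's: a bottom-up, budgeted dynamic program over the tree that records, for each node and each number of states spent, the best achievable probability of reaching $\goal$ (the paper first binarizes the tree and then maximizes over two-way splits; your knapsack-style folding of the children one at a time is an equivalent replacement for that gadget, and your telescoping $O(n^2)$ bound is even slightly sharper than the bound stated in the paper). However, there is a genuine flaw in your accounting of $\goal$ and $\fail$. The tree structure is only on $S = S_\all \setminus \{\goal,\fail\}$; the states $\goal$ and $\fail$ typically have many predecessors, so ``the subtree rooted at $v$'' with $\goal,\fail$ attached as leaves is not a tree, and your DP implicitly works with a separate copy of $\goal$ for every predecessor. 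Since you set $f(\goal,1)=1$ and insist that every kept child receives budget $k_i \geq 1$, every kept edge into $\goal$ is charged one unit of budget, even though $\goal$ is a single state that (like $\fail$) belongs to \emph{every} subsystem by definition. Hence your DP minimizes $|S'|$ plus the number of kept states retaining an edge into $\goal$, not $|S'|$, and this can produce a non-minimal witness. Concretely, take states $s_0,a,b,c,d,e$ with $\Pb(s_0,\goal)=\Pb(s_0,a)=\Pb(s_0,b)=0.1$, $\Pb(s_0,c)=0.7$, $\Pb(a,\goal)=\Pb(b,\goal)=1$, $\Pb(c,d)=\Pb(d,e)=\Pb(e,\goal)=1$, and $\lambda=0.3$. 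The minimal witness keeps $\{s_0,a,b\}$ (three states of $S$), but your DP charges it $1+2+2+1=6$ (one unit per copy of $\goal$), while $\{s_0,c,d,e\}$ costs $1+4=5$, so the algorithm returns the strictly larger subsystem.

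The repair is small but necessary: charge nothing for $\goal$ and $\fail$, e.g.\ set $f(\goal,0)=1$ and $f(\fail,0)=0$ (equivalently, add the constant summand $\Pb(v,\goal)$ at each node $v$ and never branch on $\goal,\fail$ at all), since $|S'_\all|$ and $|S'|$ differ only by the constant $2$. This is exactly how the paper sets up its table ($l_{\goal}(0)=1$, leaves get $l_q(1)=\bb(q)$, and only states of $S$ are counted). With that correction, your remaining steps -- the reduction to connected subsystems containing $s_0$ with all internal transitions kept, the correctness of the child-by-child knapsack merge, monotonicity of $f(s_0,\cdot)$ in the budget, and the reconstruction by memoization -- are sound and match the paper's argument in substance.
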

\begin{proof}[Sketch]
	The algorithm first transforms the DTMC at hand into a binary (tree-shaped) DTMC, and then works bottom up by storing for each state the highest probability that can be obtained with a subsystem of size $k$, for all $k$ up to the size of the subtree.
\end{proof}

\section{Relating Farkas certificates and minimal witnesses}
\label{sec:relation}

In this section we establish a strong connection between Farkas certificates on the one hand and witnesses for probabilistic reachability constraints on the other hand. We first note that the set of Farkas certificates for non-strict lower bounds forms a polytope, i.e., a bounded polyhedron.

\begin{restatable}[Polytopes of Farkas certificates]{lemma}{polytope}\label{lem:polytope}
	Let $\mathcal{M} = (S_\all,\Act,s_0,\Pb)$ be an MDP as in Setting \ref{setting:mdp} and consider $\Ab\in\R^{\M\times S}$ and $\bb\in\R^S$ introduced there. Then for every $\lambda \in [0,1]$ the polyhedra
	\begin{gather*}
		\P^{\min}(\lambda) = \{ \zb\in \R^{S}\mid \Ab\zb\leq \bb \land \zb(s_0) \geq \lambda \}\\
		\P^{\max}(\lambda) = \{ \yb\in \R^{\M}\mid \yb\geq 0\land \yb\Ab\leq \delta_{s_0} \land \yb\bb \geq\lambda \}
	\end{gather*}
	are both polytopes, called \emph{the polytopes of Farkas certificates}.
\end{restatable}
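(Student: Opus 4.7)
The plan is to observe first that both sets, being intersections of finitely many halfspaces, are polyhedra by construction, so only \emph{boundedness} requires work. I would reduce this to showing that each recession cone is trivial (recall that $\{\mathbf{x}\mid M\mathbf{x}\leq \mathbf{c}\}$ is bounded iff $\{\mathbf{x}\mid M\mathbf{x}\leq 0\}=\{0\}$).

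For $\mathcal{P}^{\min}(\lambda)$, the recession cone is $\{\mathbf{r}\in\R^S \mid \Ab\mathbf{r}\leq 0,\; \mathbf{r}(s_0)\geq 0\}$. Given such an $\mathbf{r}$, I would scale: for every $c>0$, $\Ab(c\mathbf{r})\leq 0\leq \bb$, so Lemma~\ref{lem:monotonicity} yields $c\mathbf{r}\leq \prb^{\min}(\lozenge\goal)\leq \mathbf{1}$; letting $c\to\infty$ forces $\mathbf{r}\leq \mathbf{0}$. Combined with $\mathbf{r}(s_0)\geq 0$ this gives $\mathbf{r}(s_0)=0$, whereupon the $(s_0,\alpha)$-row of $\Ab\mathbf{r}\leq 0$ reads $0 = \mathbf{r}(s_0)\leq \sum_{t\in S}\Pb(s_0,\alpha,t)\mathbf{r}(t) \leq 0$ for every $\alpha\in\Act(s_0)$; since every summand is nonpositive, each must vanish, so $\mathbf{r}(t)=0$ for every successor $t$ of $s_0$. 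A straightforward induction along reachability paths---which by Setting~\ref{setting:mdp} reach every state of $S$---then yields $\mathbf{r}\equiv 0$.

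For $\mathcal{P}^{\max}(\lambda)$, whose recession cone is $\{\mathbf{r}\in\R^{\M}_{\geq 0} \mid \mathbf{r}\Ab\leq 0,\; \mathbf{r}\bb\geq 0\}$, I would exploit the identity $(\Ab\mathbf{1}_S)(s,\alpha) = 1-\sum_{t\in S}\Pb(s,\alpha,t) = \Pb(s,\alpha,\goal)+\Pb(s,\alpha,\fail) \geq 0$, which gives
\[0 \;\geq\; \sum_{t\in S}(\mathbf{r}\Ab)(t) \;=\; \mathbf{r}(\Ab\mathbf{1}_S) \;=\; \sum_{(s,\alpha)\in\M}\mathbf{r}(s,\alpha)\bigl(\Pb(s,\alpha,\goal)+\Pb(s,\alpha,\fail)\bigr) \;\geq\; 0.\]
All inequalities are therefore equalities, which forces $\mathbf{r}\Ab = 0$ (nonpositive entries summing to zero) and $\mathbf{r}(s,\alpha)=0$ whenever $\Pb(s,\alpha,\goal)+\Pb(s,\alpha,\fail)>0$. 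Setting $U=\{s\mid \mathbf{r}(s,\alpha)>0\text{ for some }\alpha\}$, the support actions at states of $U$ never reach $\{\goal,\fail\}$ in one step, and the flow invariance $\mathbf{r}\Ab=0$ (``inflow equals outflow'') keeps their successors inside $U$. Any scheduler choosing at each $s\in U$ an $\alpha$ with $\mathbf{r}(s,\alpha)>0$ would then avoid $\{\goal,\fail\}$ from every state of $U$, contradicting the Setting's assumption $\prb^{\min}_s(\lozenge(\goal\lor\fail))>0$---unless $U=\emptyset$, i.e.\ $\mathbf{r}=0$.

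The hard part will be the $\mathcal{P}^{\max}(\lambda)$ case: since $\Ab$ is not square, no direct inversion bounds $\mathbf{r}$ componentwise. The crucial move is to promote the inequality $\mathbf{r}\Ab\leq 0$ to the equality $\mathbf{r}\Ab=0$ via $\Ab\mathbf{1}_S\geq 0$, and then read off the support of $\mathbf{r}$ as an end component avoiding $\{\goal,\fail\}$---a configuration precisely excluded by the Setting's almost-sure reachability condition.
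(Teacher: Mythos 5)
Your proposal is correct, and your $\P^{\max}(\lambda)$ half is essentially the paper's own argument: both use $\Ab\mathbf{1}_S\geq 0$ (its entries being $\Pb(s,\alpha,\goal)+\Pb(s,\alpha,\fail)$) to promote the recession inequality $\mathbf{r}\Ab\leq 0$ to the equality $\mathbf{r}\Ab=0$, and then read off the support of $\mathbf{r}$ as a set of states from which, under a scheduler supported on $\mathbf{r}$, neither $\goal$ nor $\fail$ is ever reached, contradicting $\prb^{\min}_s(\lozenge(\goal\lor\fail))>0$; the paper phrases this with an unbounded ray $\yb_0+t\yb_1$ and an MR-scheduler built from $\yb_1$, but the content is the same. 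The genuine difference is in the $\P^{\min}(\lambda)$ half: the paper argues directly on a ray direction $\zb_1$ by a case analysis on its maximal entry $m$ (if $m>0$, equality in the row constraints shows $\goal,\fail$ are unreachable from an argmax state, contradicting the Setting; if $m=0$, zeros propagate from $s_0$ along reachability), whereas you first obtain $\mathbf{r}\leq 0$ by scaling and invoking Lemma~\ref{lem:monotonicity} ($\Ab(c\mathbf{r})\leq 0\leq\bb$, so $c\mathbf{r}\leq\prb^{\min}(\lozenge\goal)\leq\mathbf{1}$ for all $c$), and then only need the propagation step. This is slightly slicker and avoids redoing the $m>0$ analysis, at the cost of leaning on Lemma~\ref{lem:monotonicity} (and hence the LP characterization), which is legitimate since that lemma precedes this one and is proved independently. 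Two harmless remarks: your ``bounded iff trivial recession cone'' slogan needs nonemptiness for the forward direction, but you only use the direction that a trivial recession cone forces boundedness, which is fine; and, like the paper, you never need the constraints $\zb(s_0)\geq\lambda$ resp.\ $\yb\bb\geq\lambda$ beyond their contribution $\mathbf{r}(s_0)\geq 0$ resp.\ nothing at all.
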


\begin{remark}
	For any vector $\vb\in \R^n$ the support is defined as $\supp(\vb) =\{i\in\{1,...,n\}\mid \vb_i >0\}$, and analogously for the vector spaces $\R^S$ and $\R^{\M}$.
  As our connection between subsystems of $\M$ and points in $\P^{\min}(\lambda)$ is based on taking the support, we restrict our attention to the subpolytope $\P^{\min}_{\geq 0}(\lambda)  = \P^{\min}(\lambda)\cap \R^S_{\geq 0}$.
\end{remark}

\begin{notation}
	Given an MDP $\mathcal{M} = (S_\all,\Act,s_0,\Pb)$ as in Setting \ref{setting:mdp} and a subset $R\subseteq \M$, where $\M$ also denotes the state-action pairs (compare with \Cref{sec:prelim}). We let $\M_R = (S'_\all,\Act,s_0,\Pb')$ be the subsystem where, roughly speaking, the state-action pairs in $R$ \emph{remain}. More precisely, let
	\begin{align*}
		S'_\all &= \{ s\in S\mid \exists \alpha\in\Act. \; (s, \alpha)\in R\}\cup\{\goal, \fail\}\\
		\Pb'(s,\alpha, t) &= \begin{cases} 
			\Pb(s,\alpha, t) & \mbox{if } (s, \alpha)\in R\text{ and } t\in S'_\all\setminus \{\fail\}\\
			1-\sum_{t\in S'_\all\setminus\{\fail\}}\Pb(s, \alpha, t) & \mbox{if } (s, \alpha)\in R\text{ and } t=\fail\\
			1 & \mbox{if } (s, \alpha)\notin R, \alpha\in\Act(s) \text{ and } t=\fail\\
			0 & \mbox{else } 
		\end{cases}
		\end{align*}
	For $R\subseteq S$ we set $\M_R = \M_{R'}$ for $R' = \bigcup_{s\in R} \{s\}\times \Act(s)$.
\end{notation}

\begin{restatable}[Farkas certificates yield witnesses]{theorem}{mainthm}\label{thm:MCS-polytope}
 	Let $\mathcal{M}$ be an MDP as in Setting \ref{setting:mdp} and $\lambda\in[0,1]$. Then for a set $R\subseteq S$ the following statements are equivalent:
	\begin{enumerate}[leftmargin = 13mm]
		\item The subsystem $\M_R$ is a witness for $\prb^{\min}_{\M, s_0}(\lozenge\goal) \geq\lambda$.
		\item There is a point $\pb$ in $\P^{\min}_{\geq 0}(\lambda)$ such that $\supp(\pb) \subseteq  R$.
		\item There is a vertex $\vb$ of $\P^{\min}_{\geq 0} (\lambda)$ such that $\supp(\vb)\subseteq R$.
	\end{enumerate}
	Moreover, for a set $R\subseteq \M$ the following statements are equivalent:
	\begin{enumerate}[leftmargin = 13mm, label=(\alph*)]
	\item The subsystem $\M_R$ is a witness for $\prb^{\max}_{\M, s_0}(\lozenge\goal) \geq\lambda$.
	\item There is a point $\pb$ in $\P^{\max}(\lambda)$ such that $\supp(\pb) \subseteq  R$.
	\item There is a vertex $\vb$ of $\P^{\max}(\lambda)$ such that $\supp(\vb)\subseteq R$.
	\end{enumerate}
\end{restatable}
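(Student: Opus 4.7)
The plan is to prove both equivalences by a circular chain $(3)\Rightarrow(2)\Rightarrow(1)\Rightarrow(3)$ (and the corresponding (c)/(b)/(a) chain), where the first implication is trivial since a vertex is in particular a point of the polytope.

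For $(2)\Rightarrow(1)$ in both cases, I would restrict the given polytope point $\pb$ to $\M_R$ and check that the restriction satisfies the Farkas conditions of \Cref{cor:certificates for all quantified} (for min) or \Cref{prop:certificates for ex quantified} (for max) applied to $\M_R$. The key calculation exploits that $\Pb'(s,\alpha,t) = \Pb(s,\alpha,t)$ whenever $(s,\alpha)\in R$ and $t\in S'_\all\setminus\{\fail\}$, together with $\pb(x) = 0$ off $R$, so the matrix-vector expression $\Ab_{\M_R}(\pb|_R)$ (resp.\ $(\pb|_R)\Ab_{\M_R}$) collapses to $(\Ab\pb)|_{\M_R}$ (resp.\ $(\pb\Ab)|_{S'_\all}$). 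Using these certificates requires that $\M_R$ also obeys \Cref{setting:mdp}, which is immediate: redirections to $\fail$ can only increase the probability of reaching $\{\goal,\fail\}$, so $\prb^{\min}_{\M_R, s}(\lozenge(\goal\lor\fail)) \geq \prb^{\min}_{\M, s}(\lozenge(\goal\lor\fail)) = 1$.

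For $(1)\Rightarrow(3)$ I would first produce a point $\pb\in\P$ with $\supp(\pb)\subseteq R$ and then promote it to a vertex via a face argument. In the min case, I take $\pb$ to be $\prb^{\min}_{\M_R}(\lozenge\goal)$ extended by $0$ on $S\setminus R$; the inequality $\Ab\pb\leq\bb$ in $\M$ holds state-action-wise because for $s\in R$ the Bellman optimality equation in $\M_R$ gives exactly the inequality (the deleted transitions contribute zero), while for $s\notin R$ the inequality $0\leq \Pb(s,\alpha,\goal) + \sum_t \Pb(s,\alpha,t)\pb(t)$ is automatic. In the max case, I pick an optimal MD-scheduler $\S$ in $\M_R$ that on reachable states chooses only actions in $R$—such a choice is optimal because any $(s,\alpha)\notin R$ redirects to $\fail$ with probability one, giving zero contribution to $\lozenge\goal$—and take $\pb$ to be the vector of expected visit counts under $\S$ (extended by $0$). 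The standard flow identity then yields $\pb\Ab \leq \delta_{s_0}$ and $\pb\bb = \Pr^{\S}_{\M_R, s_0}(\lozenge\goal) \geq \lambda$. In either case, the subpolytope $F_R = \{\xb\in\P : \xb(i) = 0 \text{ for } i \notin R\}$ is a face of $\P$ that is nonempty (it contains the $\pb$ just constructed) and bounded (\Cref{lem:polytope}), hence has a vertex, and every vertex of a face of $\P$ is also a vertex of $\P$.

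The main obstacle is keeping clear the distinction, in the max case, between state-action pairs lying in $R$ and those merely enabled at states of $\M_R$: since $\Act_{\M_R}(s) = \Act_{\M}(s)$, a state of $\M_R$ may carry actions outside $R$, and one must show these never need to be used by an optimal scheduler so that the induced flow has support in $R$ and not just in the coarser $\M_R$. Once this is handled the rest is bookkeeping: verifying flow conservation in $\M$ outside $\M_R$ trivializes because $\pb\Ab$ at such coordinates is a non-positive linear combination of non-negative entries and the right-hand side $\delta_{s_0}$ vanishes there.
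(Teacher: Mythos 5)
Your proposal is correct and, for three of the four substantive directions, coincides with the paper's argument: the witness-to-certificate constructions are the same (the vector $\prb^{\min}_{\M_R}(\lozenge\goal)$ padded with zeros for the min case; the expected visit counts of an optimal MD-scheduler that uses only $R$-actions for the max case), and the passage from a point to a vertex is the same face argument based on tightening the nonnegativity constraints and using boundedness from \Cref{lem:polytope}. Where you genuinely diverge is the max-case direction ``point $\Rightarrow$ witness'': the paper constructs an MR-scheduler from $\pb$, proves convergence of $\sum_{n}\Qb^{n}$, and establishes $\Pr^{\S}_{\M_U,s_0}(\lozenge\goal)\geq\pb\bb$ by a telescoping matrix identity, whereas you observe that the restriction of $\pb$ to the subsystem already satisfies the Farkas conditions there (your collapse computation is right: retained state-action pairs keep their rows of $\Ab$ and their $\bb$-entries, deleted pairs carry no mass) and then invoke \Cref{prop:certificates for ex quantified} for $\M_R$; in the min case the analogous shortcut via \Cref{cor:certificates for all quantified} is essentially what the paper does with \Cref{lem:monotonicity}. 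Your route is shorter and reuses Section~3, but it leans on $\M_R$ satisfying the hypotheses under which those results are stated, namely \Cref{setting:mdp}: you check $\prb^{\min}_{\M_R,s}(\lozenge(\goal\lor\fail))>0$, but the Setting also requires every state to be reachable from $s_0$, which $\M_R$ need not satisfy. This is a repairable blemish rather than a gap --- the proofs of \Cref{prop:LP version of pr_min}, \Cref{lem:monotonicity} and \Cref{prop:certificates for ex quantified} never use reachability from $s_0$ (only \Cref{lem:polytope} does, and you need it only for the original $\M$) --- but you should say so explicitly or pass to the reachable part as the paper does in the min case; note that in the max case simply deleting unreachable support states and restricting the certificate is not obviously harmless (incoming mass from dropped pairs only helps the constraint $\yb\Ab\le\delta_{s_0}$, while $\yb\bb$ could drop), so the clean fix is the remark that reachability is not needed. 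The paper's hands-on computation buys exactly this indifference to whether the subsystem is literally in \Cref{setting:mdp}, at the cost of a page of matrix manipulation.
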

\noindent One consequence of~\Cref{thm:MCS-polytope} is that every MD-scheduler $\S$ with $\Pr^{\S}_{s_0}(\lozenge\goal) \geq\lambda$ corresponds to a point in $\P^{\max}(\lambda)$, i.e. to a certificate for $\prb^{\max}_{\M, s_0}(\lozenge\goal) \geq\lambda$.
\begin{restatable}[Detecting minimal witnesses by vertices of $\mathcal P$]{corollary}{verticestoMW}\label{cor:main corollary}
	Let $\mathcal{M} = (S_\all,\Act,s_0,\Pb)$ be an MDP as in Setting \ref{setting:mdp} and $\lambda\in[0,1]$. Then a vertex $\vb$ of $\P^{\min}_{\geq 0}(\lambda)$ has a maximal number of zeros among all vertices of $\P^{\min}_{\geq 0}(\lambda)$ if and only if $\M_{\supp(\vb)} $ is a minimal witness for $\prb^{\min}_{s_0}(\lozenge\goal)\geq \lambda$.
	
	Dually, a vertex $\vb$ of $\P^{\max}(\lambda)$ has a maximal number of zeros among all vertices of $\P^{\max}(\lambda)$ if and only if all of the following hold:
	\begin{enumerate}[leftmargin =10mm]
		\item $\M_{\supp(\vb)} = (S'_\all,\Act,s_0,\Pb')$ is a minimal witness for $\prb^{\max}_{s_0}(\lozenge\goal)\geq \lambda$,
		\item for every $s\in S'$ there is precisely one $\alpha\in\Act(s)$ with $(s, \alpha)\in \supp(\vb)$,
		\item the corresponding map $\S\colon S'\to\Act$ is an MD-scheduler on $\M_{\supp(\vb)}$ with $\Pr^\S_{s_0}(\lozenge\goal) \geq \lambda$.
	\end{enumerate}
\end{restatable}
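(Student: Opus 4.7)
The plan is to deduce the corollary directly from Theorem \ref{thm:MCS-polytope}, which already provides the bidirectional translation between vertices of the Farkas polytopes and witnessing subsystems. In both cases, the strategy is a contrapositive/contradiction argument: if a smaller object exists on one side of the equivalence, the theorem produces a corresponding smaller object on the other side.

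For the statement about $\P^{\min}_{\geq 0}(\lambda)$ the argument is essentially bookkeeping, since the polytope lives in $\R^S$ and thus $|\supp(\vb)|$ equals the number of non-$\{\goal,\fail\}$ states in $\M_{\supp(\vb)}$. For the forward direction, let $\vb$ be a vertex with maximally many zeros and set $R = \supp(\vb)$. By Theorem \ref{thm:MCS-polytope}, $\M_R$ is a witness. If it were not minimal, there would exist a smaller witness, which by the theorem produces a vertex $\vb'$ with $|\supp(\vb')| \leq |R'| < |R|$, contradicting maximality. The converse direction is symmetric.

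The statement about $\P^{\max}(\lambda)$ is subtler because the polytope lives in $\R^{\M}$, so $|\supp(\vb)|$ counts state-action pairs, not states. Writing $S_R = \{s : (s,\alpha)\in R \text{ for some } \alpha\}$ we have $|S_R|\leq |R|$, with equality exactly when every state in $S_R$ carries a unique action in $R$. For the forward direction I would proceed in three steps. First, for minimality (1), any strictly smaller witness $\M'$ admits an optimal MD-scheduler $\S'$ on it by \cite[Lemma 10.102]{BaierK2008}; its occupation measure is a point of $\P^{\max}(\lambda)$ whose support has size equal to the number of non-trivial states of $\M'$, and Theorem \ref{thm:MCS-polytope} then yields a vertex of support at most this size, contradicting the assumed minimality of $|\supp(\vb)|$. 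Second, for property (2), if some state $s\in S_R$ carried two actions $\alpha_1,\alpha_2$ in $R$, then an optimal MD-scheduler on $\M_R$ (which WLOG avoids actions redirected to $\fail$) has occupation measure supported on a strict subset of $R$, producing via Theorem \ref{thm:MCS-polytope} a vertex with strictly fewer non-zeros. Property (3) then follows automatically: by (2) the map $\S$ is well-defined, and since every action outside $\supp(\vb)$ is redirected to $\fail$ in $\M_{\supp(\vb)}$, $\S$ is the unique scheduler using only non-redirected actions, so $\Pr^{\S}_{s_0}(\lozenge\goal) = \prb^{\max}_{\M_{\supp(\vb)},s_0}(\lozenge\goal) \geq \lambda$.

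For the converse of the $\P^{\max}$ statement, assume (1)-(3) and suppose some vertex $\vb'$ had strictly fewer non-zeros. Then using (2) for $\vb$ one gets $|S_{\supp(\vb')}|\leq |\supp(\vb')| < |\supp(\vb)| = |S_{\supp(\vb)}|$, so Theorem \ref{thm:MCS-polytope} yields a witness $\M_{\supp(\vb')}$ with strictly fewer states than $\M_{\supp(\vb)}$, contradicting (1). The main obstacle I expect is making the occupation-measure argument in step two of the forward direction precise, in particular verifying that the occupation measure of any MD-scheduler whose scheduler-probability satisfies the threshold actually lies in $\P^{\max}(\lambda)$ (respecting the inequality $\yb \Ab \leq \delta_{s_0}$ rather than equality), and that one can always restrict such a scheduler on $\M_R$ to actions within $R$ without losing probability mass towards $\goal$; both facts are standard but deserve careful treatment.
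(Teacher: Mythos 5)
Your proposal is correct and follows essentially the same route as the paper: both reduce the corollary to \Cref{thm:MCS-polytope} via a counting/contradiction argument, and both obtain the crucial ``one action per state'' vertices from the occupation measure of an MD-scheduler, which is exactly the construction in the paper's proof of (a)$\implies$(b)$\implies$(c), so the fact you flag as needing careful treatment can simply be cited from there. The only cosmetic difference is your justification of item (3) via optimality of the unique non-redirected-action scheduler, where the paper instead reuses the computation $\Pr^{\S}_{s_0}(\lozenge\goal)\geq \vb\bb\geq\lambda$ from (b)$\implies$(a); both are sound.
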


\section{Computing witnessing subsystems}
\label{sec:computing}

In this section we use the results of \Cref{sec:relation} to derive two algorithms for the computation of minimal witnesses for reachability constraints in MDPs.
As the problem is NP-hard, we also present a heuristic approach aimed at computing small witnessing subsystems.

\subsubsection{Vertex enumeration.}
\Cref{cor:main corollary} gives rise to the following approach of computing minimal witnessing subsystems: enumerate all vertices in the corresponding polytope and choose one with a maximal amount of zeros.
Vertex enumeration of polytopes has been studied extensively~\cite{AvisF93, AvisF92, Balinski61, BremnerFM98, BussiekL98, Dyer83, Dyer77, Mattheiss73, Provan1994, FukudaLM97, FukudaP95} and has been shown to be computationally hard \cite[Corollary 2]{Khachiyan2008}.

First experiments that we have conducted with the \texttt{SageMath}\footnote{\url{http://www.sagemath.org/}} toolkit which supports vertex enumeration have not scaled well in the dimension, which in our case is the number of states in the original system.
Also, we found no tool support for vertex enumeration that is able to handle sparse matrices, which is essential for bigger benchmarks.

\subsubsection{Mixed integer linear programming.}
An approach that computes minimal witnesses to the threshold problem $\prb^{\max}_{s_0}(\lozenge\goal)\geq \lambda$ using mixed integer linear programs (MILP) was presented in~\cite{WimmerJABK12,WimmerJAKB14}. 
Using the following lemma, we can derive MILP formulations from our polytope formulations.
\begin{restatable}{lemma}{MILPformulation}\label{lem:MILP}
  \label{lem:MILPformulation}
  Let $\mathcal{P} = \{ \xb \mid \Ab \xb \leq \bb, \xb \geq 0 \} \subseteq \mathbb{R}^n$ be a polytope and $K \geq 0$ be such that for all $\pb \in \mathcal{P}$ and $1\leq i\leq n$ we have  $\pb(i) \leq K$.
  Consider the MILP
  \[ \min \sum_{1 \leq i \leq n} \ind(i) \quad\text{s.t.}\quad \xb \in \mathcal{P}, \quad \xb \leq K \cdot \ind, \quad\ind(i)\in\{0,1\}\]

  Then a vector $(\ind,\xb)$ is an optimal solution of this MILP if and only if $\xb$ is a point in $\mathcal{P}$ with a maximal number of zeros.
\end{restatable}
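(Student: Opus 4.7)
The plan is to exploit the relationship that the binary vector $\ind$ is forced to indicate (at least) the support of $\xb$, so that minimizing $\sum_i \ind(i)$ amounts to minimizing $|\supp(\xb)|$, i.e.\ maximizing the number of zeros of $\xb$.

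First I would prove the following key observation: given any $\xb \in \mathcal{P}$, the minimum of $\sum_i \ind(i)$ over $\ind \in \{0,1\}^n$ subject to $\xb \leq K \cdot \ind$ equals $|\supp(\xb)|$, and is attained by the indicator $\ind^\xb(i) = 1$ if $\xb(i) > 0$ and $\ind^\xb(i) = 0$ otherwise. Feasibility of $\ind^\xb$ uses the bound $\xb(i) \leq K$: indeed, if $\xb(i) > 0$ then $\xb(i) \leq K = K \cdot \ind^\xb(i)$, and if $\xb(i) = 0$ the inequality is trivial. Conversely, any feasible $\ind$ must satisfy $\ind(i) = 1$ whenever $\xb(i) > 0$ (because $\ind(i) = 0$ together with $\xb \geq 0$ and $\xb \leq K\cdot \ind$ would force $\xb(i)=0$), so $\sum_i \ind(i) \geq |\supp(\xb)|$, showing that $\ind^\xb$ is optimal given $\xb$.

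Next I would combine this with the outer minimization over $\xb \in \mathcal{P}$. For the forward direction, suppose $(\ind, \xb)$ is an optimal solution of the MILP. Then $\sum_i \ind(i) \geq |\supp(\xb)|$ by the observation, and equality must hold by optimality (otherwise we could replace $\ind$ by $\ind^{\xb}$ and strictly decrease the objective). If there existed $\xb' \in \mathcal{P}$ with $|\supp(\xb')| < |\supp(\xb)|$, then $(\ind^{\xb'}, \xb')$ would be feasible with strictly smaller objective, contradicting optimality. So $\xb$ has a maximal number of zeros in $\mathcal{P}$. For the converse direction, suppose $\xb \in \mathcal{P}$ has a maximal number of zeros; then $(\ind^\xb, \xb)$ is feasible with objective $|\supp(\xb)|$, and any other feasible $(\ind', \xb')$ satisfies $\sum_i \ind'(i) \geq |\supp(\xb')| \geq |\supp(\xb)|$ by the maximality assumption, so $(\ind^\xb, \xb)$ is optimal.

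I do not expect a real obstacle here; the only subtlety is that the problem allows $\ind(i) = 1$ even when $\xb(i) = 0$, which is why the first step (showing that at any optimum we may assume $\ind$ equals the true support indicator of $\xb$) is necessary before invoking maximality of zeros. The bound $K$ is used solely to guarantee that $\ind^\xb$ is feasible, i.e.\ that $\ind(i) = 1$ is sufficient to accommodate the actual value $\xb(i)$.
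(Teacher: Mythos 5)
Your proposal is correct and follows essentially the same route as the paper's proof: both arguments hinge on the bound $K$ making the support indicator $\ind^\xb$ feasible, on the observation that any feasible $\ind$ must be $1$ on the support of $\xb$ (so that at an optimum $\ind$ coincides with the support indicator), and on the same two-direction comparison of objective values. Your reorganization as an inner minimization over $\ind$ for fixed $\xb$ is only a cosmetic difference.
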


For $\mathcal{P}^{\min}_{\geq 0}(\lambda)$ we can use \Cref{lem:monotonicity} to derive that $K = 1$ is a viable bound.
By invoking again \Cref{cor:main corollary}, this means that a solution $(\zb, \ind)$ of the MILP
 \[ \min \sum_{s \in S} \ind(s) \;\text{ s.t. }\; \zb\in  \mathcal{P}^{\min}_{\geq 0}(\lambda), \quad \zb \leq \ind, \quad\ind(i)\in\{0,1\}\]
encodes a minimal witnessing subsystem in the integral variables $\ind$. This MILP was used in~\cite{WimmerJABK12,WimmerJAKB14} for the computation of minimal witnessing subsystems of DTMCs .
 
An upper bound $K$ as in \Cref{lem:MILP} for $\mathcal{P}^{\max}(\lambda)$ can be found in polynomial time by taking the objective value of an optimal solution to the LP
\[\max \sum_{(s,\alpha) \in \M} \yb(s,\alpha) \;\text{ s.t. }\; \yb\in \mathcal{P}^{\max}(\lambda) \]

\begin{remark}
\label{rem:ltlsubsys}
To compute minimal witnesses for $\prb^{\max}_{s_0}(\lozenge \goal) \geq \lambda$,~\cite{WimmerJABK12,WimmerJAKB14} (witnesses for $\prb^{\min}_{s_0}(\lozenge \goal) \geq \lambda$ were not considered) propose the MILP with objective: $\min \; \sum_{(s,\alpha) \in \M} \ind(s,\alpha)$, subject to the conditions
\begin{align}
 \forall (s,\alpha) \in \M. & \;\;\; \zb(s) \leq 1 - \ind(s,\alpha) + \sum_{s' \in S} \Pb(s,\alpha,s') \cdot \zb(s') + \bb(s)\\
  \forall s \in S.&\;\;\;\zb(s) \leq \sum_{\alpha \in \Act(s)} \ind(s,\alpha), \;\;\; \zb(s_{0}) \geq \lambda 
\end{align}
where $\ind(s,\alpha)$ are binary integer variables.
It was implemented in the tool \texttt{ltlsubsys}.
The idea is to directly encode a scheduler in the set of equations $\Ab \zb \leq \bb$ using $\ind$.
In~\cite{WimmerJABK12,WimmerJAKB14} a number of additional redundant constraints are given to guide the search.
In contrast to~\cite{WimmerJABK12,WimmerJAKB14} we do not need to handle so-called \emph{problematic states}, as our precondition $\prb^{\min}_s(\lozenge(\goal \lor \fail)) > 0$ guarantees that no such states exist.

\iffalse
Inequalities in (6.1) that do not belong to the subsystem representing the induced scheduler, i.e. the ones where $\ind(s,\alpha) = 0$, are ``discharged'' as $\zb(s) \leq 1 + N$ holds for any positive $N$.
This again uses the insight that $1$ is an upper bound for $\zb(s)$.
In the other inequalities, where $\ind(s,\alpha) = 1$, we get exactly the corresponding inequalities of $\Ab \zb \leq \bb$.
The constraints in (6.2) guarantee that at least one $\ind(s,\alpha)$ is $1$ if $\zb(s) > 0$, and (6.3) ensures that the probability in $s_0$ under the given scheduler is above the threshold $\lambda$.
We have omitted the handling of what were called \emph{problematic states} in~\cite{WimmerJABK12,WimmerJAKB14}, as our assumption $\prb^{\min}_s(\lozenge(\goal \lor \fail)) > 0$ for all states $s$ guarantees that such states do not exist. In \Cref{sec:case studies} we give an experimental evaluation of these two different MILP formulations for $\prb^{\max}_{s_0}(\lozenge \goal) \geq \lambda$.
\fi
\end{remark}

\subsubsection{$k$-step quotient sum ($\qs_k$) heuristics.}

\newcommand{\qs}{\operatorname{QS}}

Approximating the maximal number of zeros in a polytope is computationally hard in general~\cite{AmaldiK98}.
We now derive a heuristic approach for this problem called \emph{quotient sum heuristic} which is based on iteratively solving LPs over the polytope, where the objective function for each iteration depends on an optimal solution of the previous LP. More precisely, we take $\mathbf{o}_1 = (1,\ldots,1)$ and take an optimal solution $\qs_1$ of the LP $\min \mathbf{o}_1\cdot\yb \;\text{ s.t. } \; \yb\in\mathcal{P}^{\max}(\lambda)$.
Many entries in $\qs_1$ may be small, but still greater than zero.
In order to push as many of the small values of $\qs_1$ to zero, we define a new objective function by
\begin{equation}\label{eq:QS}
\mathbf{o}_2(i) =
\begin{cases}
  1/\qs_1(i), & \text{if } \qs_1(i) > 0 \\
  C, & \text{if } \qs_1(i) = 0
\end{cases}
\end{equation}
where $C$ is a value that is greater than any value $1/\qs_1(i)$.
We now take a solution $\qs_2$ of the new LP $\min \mathbf{o}_2\cdot\yb \text{ s.t. } \yb\in\mathcal{P}^{\max}(\lambda)$
and form the next objective function $\mathbf{o}_3$ as in (\ref{eq:QS}).
Inductively this generates a sequence of objective functions $(\mathbf{o}_k)_{k\geq 1}$ and corresponding optimal solutions $(\qs_k)_{k
  \geq 1}$ in $\mathcal{P}^{\max / \min}(\lambda)$.
By~\Cref{thm:MCS-polytope} we can construct a witnessing subsystem with as many states as the number of non-zero entries in $\qs_k$.

\section{Experiments}\label{sec:case studies}
In this section we evaluate our MILP formulations and heuristics on a number of DTMC and MDP benchmarks from the \prism{} benchmark-suite~\cite{KwiatkowskaNP11,KwiatkowskaNP12}.
We compare our results with the tool \comics{}~\cite{JansenAVWKB12}, which implements heuristic approaches to compute small subsystems for DTMCs.
It has two modes: the \emph{local search} extends a given subsystem by short paths that carry much probability, whereas the \emph{global search} searches for the next most probable path from the initial state to $\goal$, and adds it to the subsystem.
Both approaches iteratively extend a subsystem until it carries more probability than the given threshold and thus have to compute the probability of the subsystem at each iteration.

All computations were performed on a computer with two Intel E5-2680 8 cores at \(2.70\)\,GHz running Linux, with a time bound of $30$ minutes, a memory bound of 100\,GB and with each benchmark instance having access to 4 cores.
For the LP and MILP instances we use the Gurobi solver, version 8.1.1~\cite{gurobi}.
The recorded times of our computations include the construction of the LPs/MILPs and are wall clock times.
Pre-processing steps, such as collapsing states that cannot reach $\goal$, are not counted in the time consumption.
For \comics{}, we use the time that is reported as counterexample generation time by the tool.

To validate our implementation, we used \prism{} to verify that the subsystems that we compute indeed satisfy the probability thresholds.
We noticed that for a few instances ($< 0.5\%$) \prism{} reported a deviation of less than $10^{-8}$, which can be explained by the fact that both \prism{} and the solvers that we use rely on floating-point arithmetic, which is approximate by nature.

Our implementation, together with the models we use and benchmark results can be found at \url{https://github.com/simonjantsch/farkas}.

\begin{figure}[h]
	\captionsetup{width=.75\linewidth, labelfont={bf, up}}
  \begin{subfigure}[t]{0.5\columnwidth}
    \includegraphics[width=6cm]{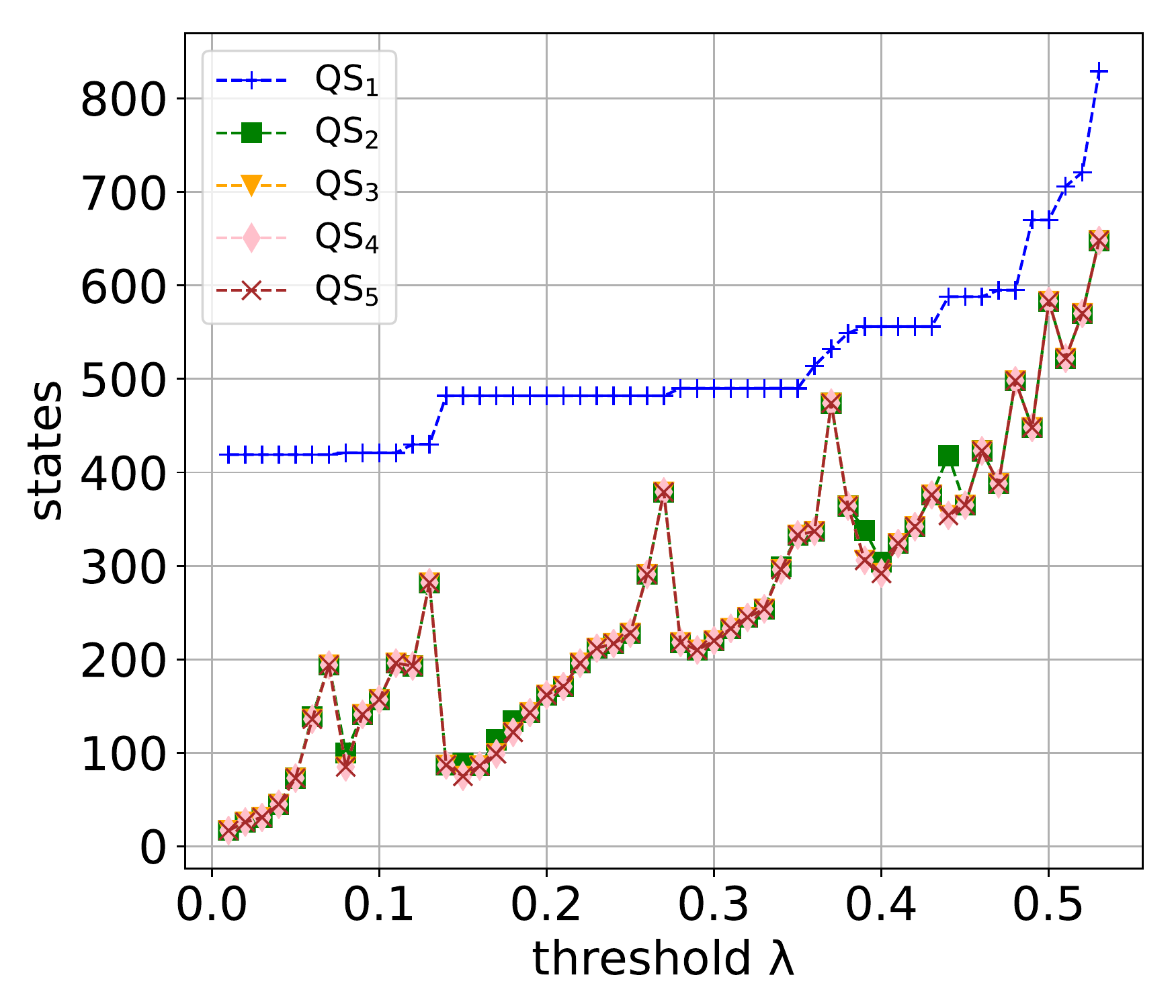}
    \subcaption{QS-heuristic applied to $\mathcal{P}^{\max}(\lambda)$.}
    \label{subfig:itpmax}
  \end{subfigure}
  \begin{subfigure}[t]{0.5\columnwidth}
    \includegraphics[width=6cm]{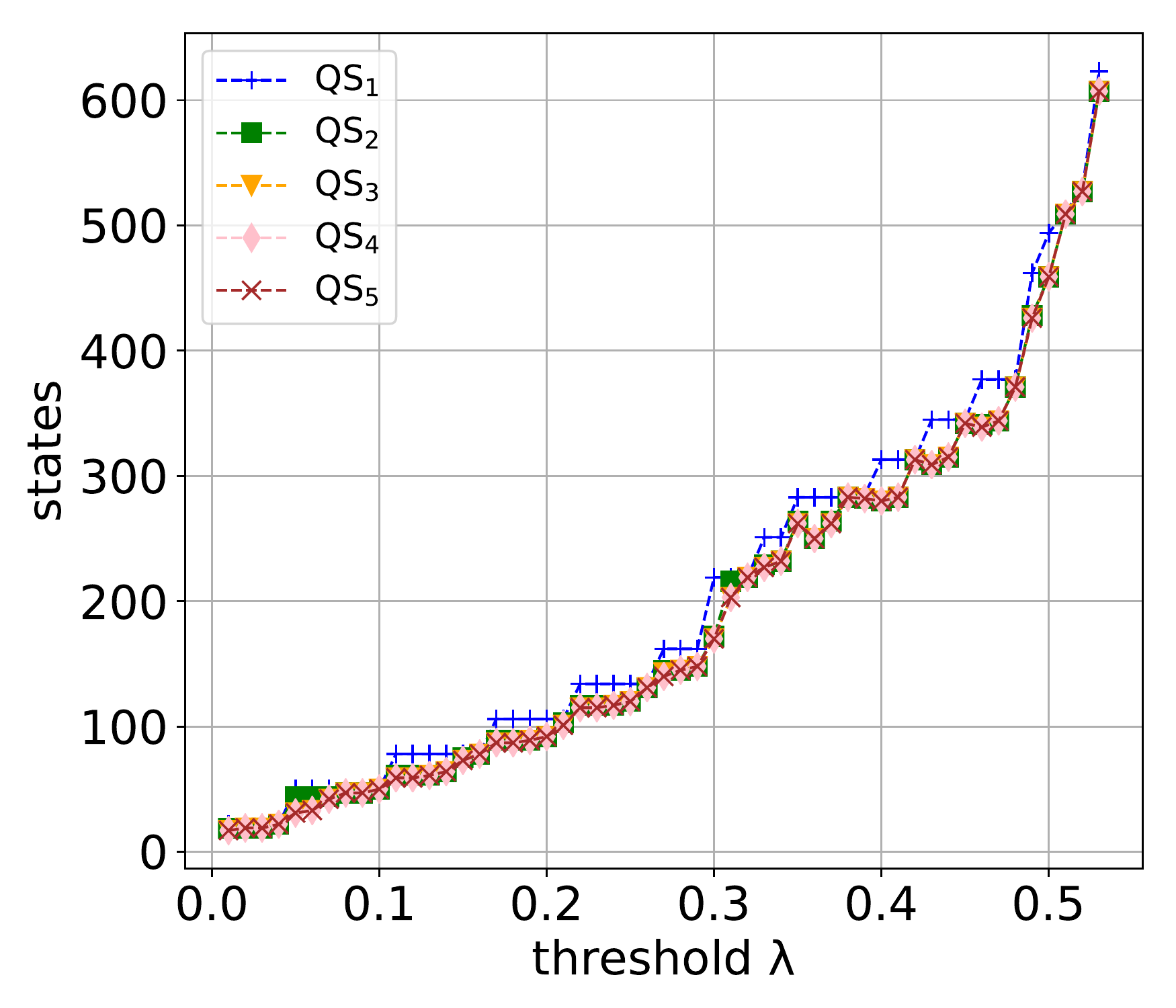}
    \subcaption{QS-heuristic applied to $\mathcal{P}^{\min}_{\geq 0}(\lambda)$.}
  \end{subfigure}
  \caption{crowds-2-8: comparing $\qs_k$ for growing $k$.}
  \label{fig:expheuriter}
\end{figure}

\begin{figure}[t]
	\captionsetup{width=.9\linewidth, labelfont={bf, up}}
	\begin{subfigure}[h]{1\columnwidth}
		\begin{subfigure}{0.5\columnwidth}
			\includegraphics[width=6cm]{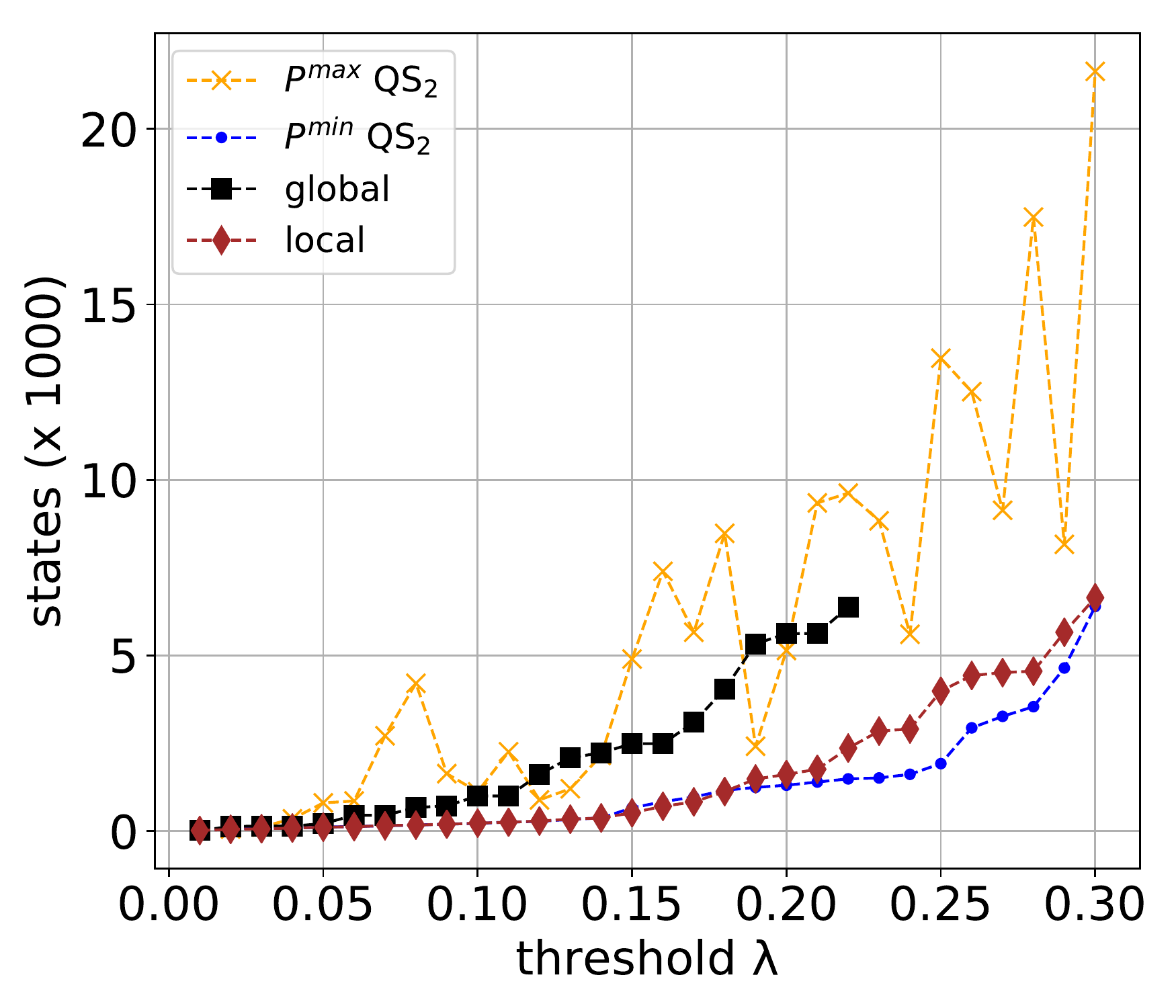}
		\end{subfigure}
		\begin{subfigure}{0.5\columnwidth}
			\includegraphics[width=6cm]{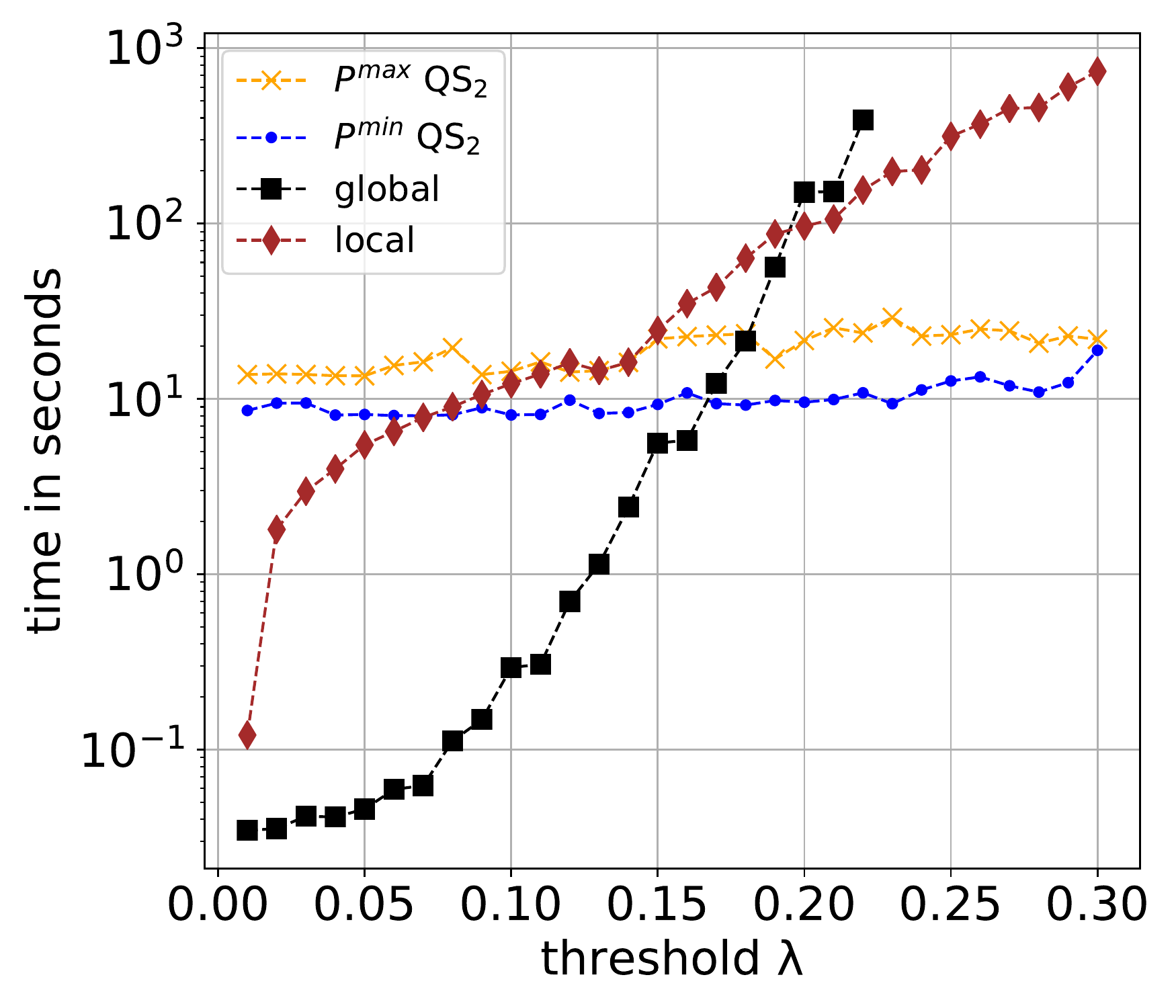}
		\end{subfigure}
		\captionsetup{width=.9\linewidth, labelfont={bf, up}}
		\subcaption{crowds-5-8 (46,873 states). \comics{}-global runs out of memory for $\lambda \geq 0.23$.}
		\label{fig:expheurcrowds}
	\end{subfigure}
	\vspace{0.1cm}
	\begin{subfigure}[h]{1\columnwidth}
		\begin{subfigure}{0.5\columnwidth}
			\includegraphics[width=6cm]{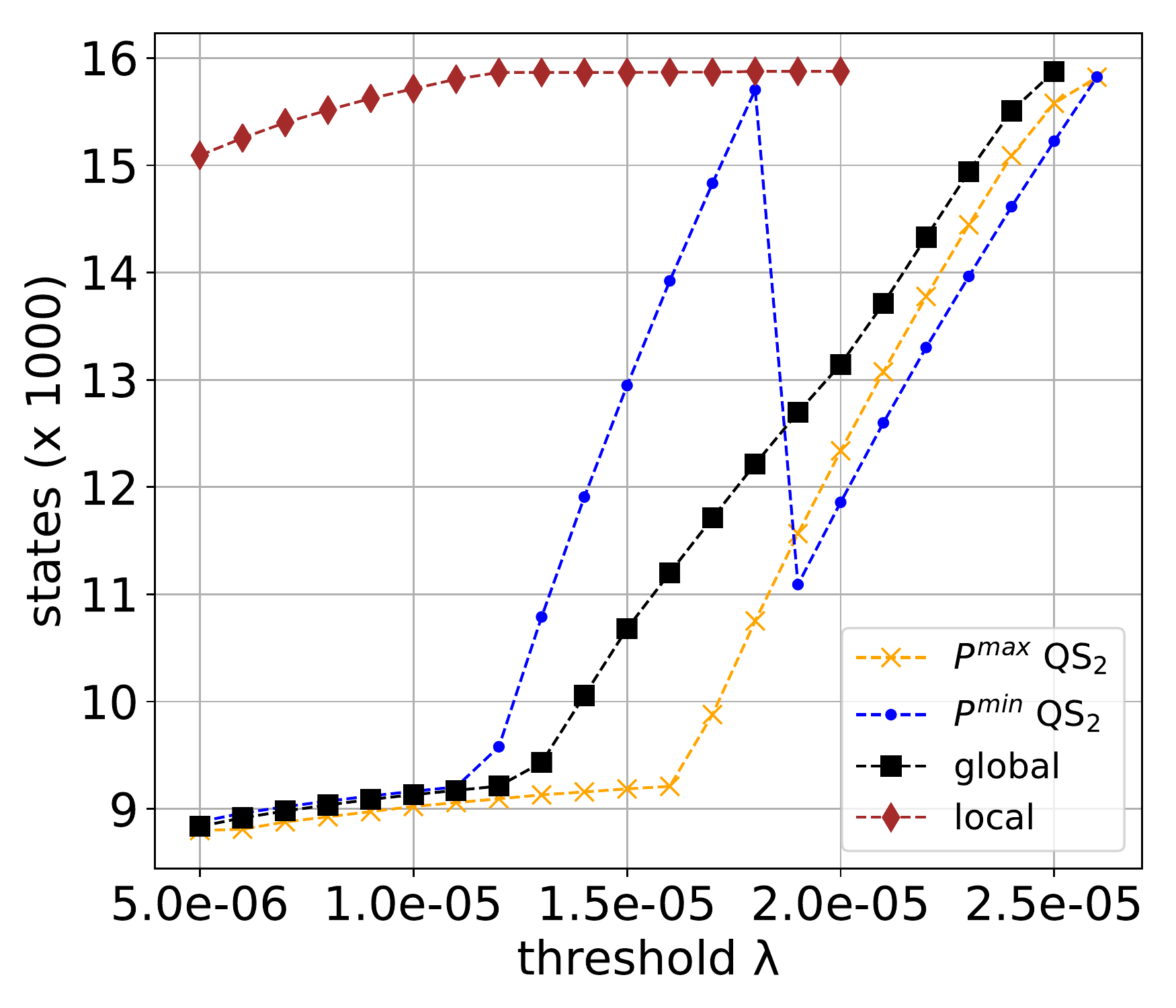}
		\end{subfigure}
		\begin{subfigure}{0.5\columnwidth}
			\includegraphics[width=6cm]{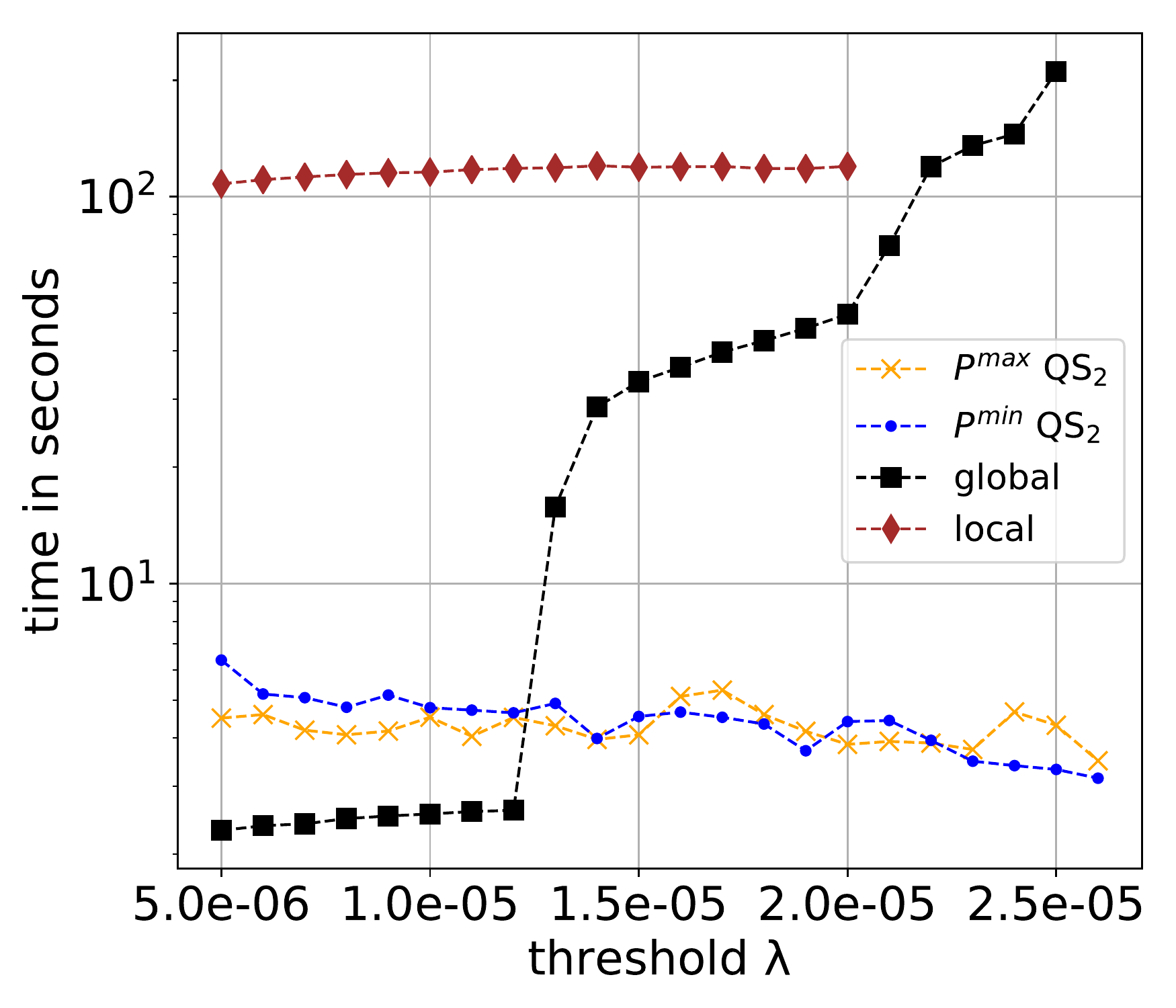}
		\end{subfigure}
		\captionsetup{width=.9\linewidth, labelfont={bf, up}}
		\subcaption{brp-512-2 (15,875 states). \comics{}-local reports an error for $\lambda \geq 2.1\cdot10^{-5}$ and \comics{}-global runs out of memory for $\lambda = 2.6 \cdot 10^{-5}$. }
		\label{fig:expheurbrp}
	\end{subfigure}
	\caption{Comparison of heuristic methods on DTMC benchmarks.}
	\label{fig:expheur}
\end{figure}

\subsubsection{DTMC benchmarks.}

As $\prb^{\max}$ and $\prb^{\min}$ coincide on DTMCs, we can use the heuristics and exact computations derived from either the $\mathcal{P}^{\max}$ or the $\mathcal{P}^{\min}_{\geq 0}$ polytope for DTMCs (in \comics{} we use the standard query $\Pr_{s_0}(\lozenge \goal) \geq \lambda$).
We consider two DTMC benchmarks: a model of the crowds-$N$-$K$ protocol~\cite{Shmatikov04,ReiterR98} for ensuring anonymous web browsing (with $N$ members and $K$ protocol runs) and a model of the bounded retransmission protocol~\cite{DArgenioJJL01,HelminkSV93} for file transfers (where brp-$N$-$K$ is the instance with $N$ chunks and $K$ retransmissions).

\Cref{fig:expheuriter} shows the effect of increasing the number of iterations of the $\qs$-heuristic for the model crowds-2-8.
While the first iteration (taking $\qs_2$ instead of $\qs_1$) has an impact on the number of states, more iterations do not improve the result significantly.
For $\qs_1$, the sizes of subsystems increase monotonically with growing $\lambda$. Starting with $\qs_2$ the results may, interestingly, have ``spikes'': increasing $\lambda$ can lead to smaller subsystems.

\Cref{fig:expheur} shows the results of the $\qs_2$-heuristic compared to the two modes of \comics{} for $\lambda$ that ranges between $0$ and the actual reachability probability of the model.
A general observation is that the runtime of the $\qs$-heuristic is independent of $\lambda$, whereas both modes of \comics{} use significantly more time with increasing $\lambda$.
The same observation can be done for memory consumption, which stayed below 200 MB for our heuristics.
Also, especially for crowds-5-8, one can see that relatively small subsystems are possible even for large $\lambda$.
The exact computations via MILPs hit the timeout for almost all instances.

In~\Cref{fig:expheur} it can be seen that the $\qs$ heuristics derived from the two polytopes $\P^{\max}$ and $\P_{\geq 0}^{\min}$ may produce different results.
However, for both models one of them gives monotonically growing subsystems and outperforms \comics{}.
While $\qs_2$ applied to $\mathcal{P}^{\min}_{\geq 0}$ performs better on crowds-5-8 (\Cref{fig:expheurcrowds}), it is the other way around on brp-512-2 (\Cref{fig:expheurbrp}).
In future work we intend to investigate what properties determine which of the two formulations performs better for a given DTMC.

\subsubsection{MDP benchmarks.}
\begin{figure}[tbp]
  \captionsetup{width=.75\linewidth, labelfont={bf, up}}
  \begin{subfigure}[t]{0.5\columnwidth}
    \includegraphics[width=6cm]{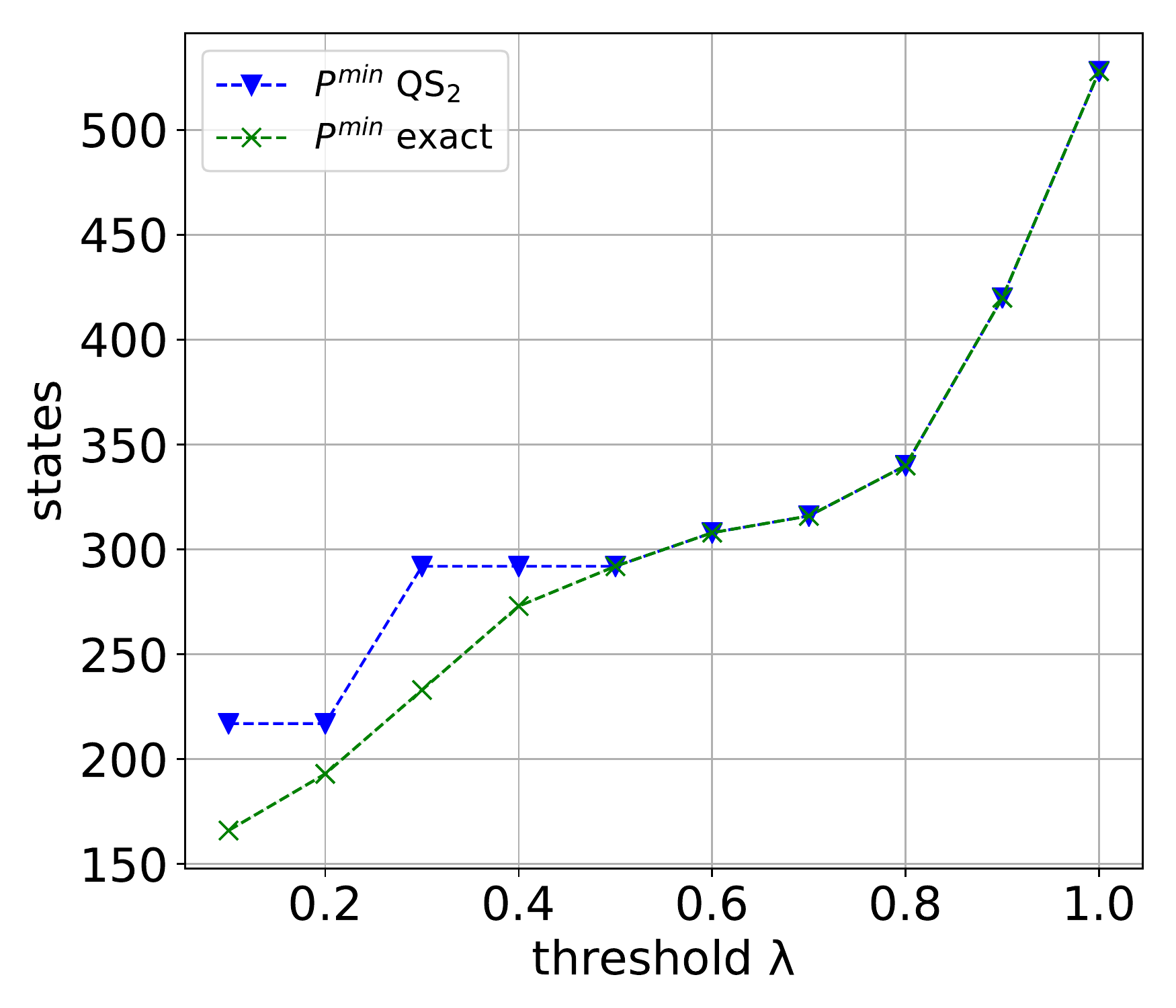}
    \subcaption{Witnesses for $\prb^{\min}_{s_0}(\lozenge \goal) \geq \lambda$.}
    \label{subfig:mdpconsprmin}
  \end{subfigure}
  \begin{subfigure}[t]{0.5\columnwidth}
    \includegraphics[width=6cm]{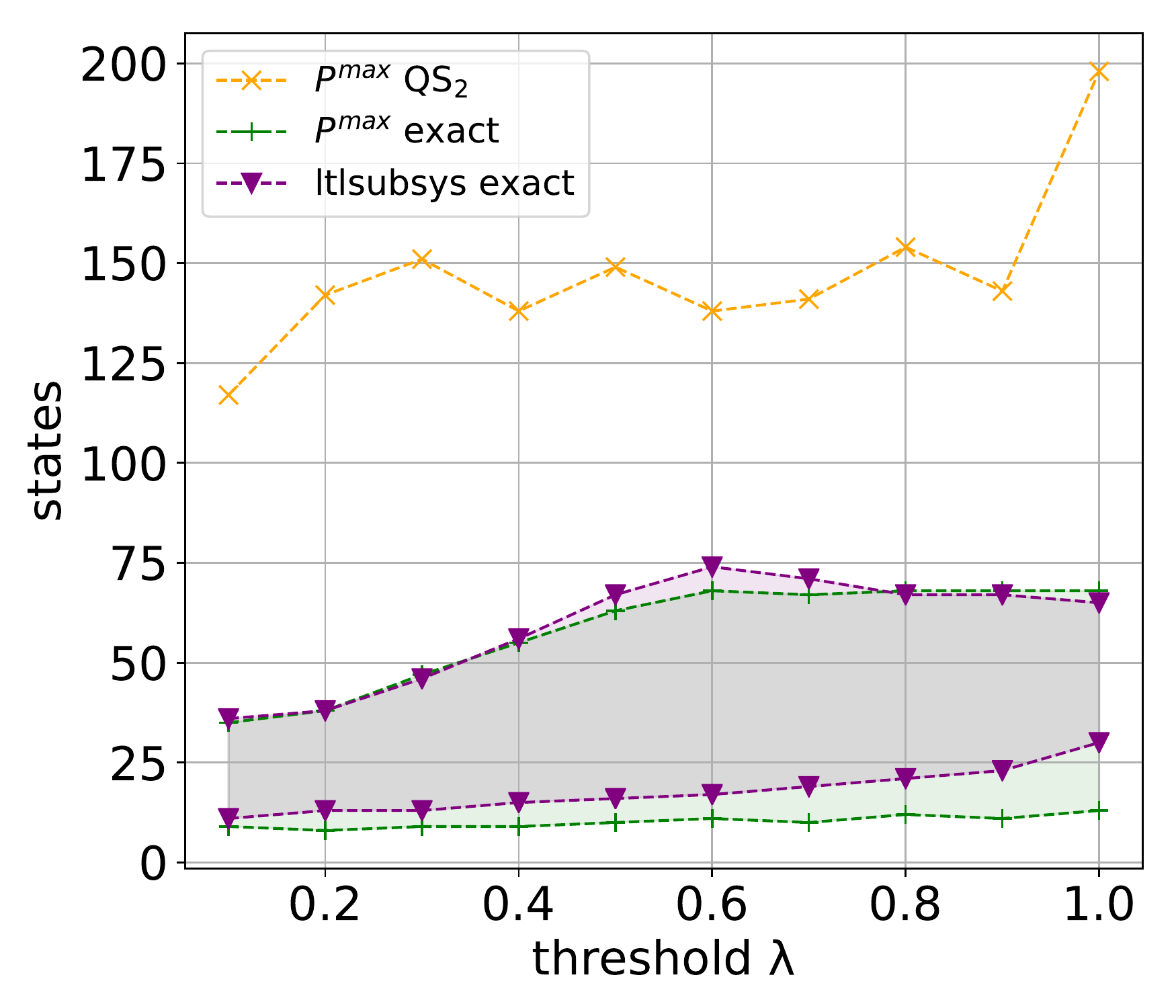}
    \subcaption{Witnesses for $\prb^{\max}_{s_0}(\lozenge \goal) \geq \lambda$.}
    \label{subfig:mdpconsprmax}
  \end{subfigure}
  \caption{MDP benchmark: consensus-2-4 (528 states)}
  \label{fig:expmdpconsensus}
\end{figure}

We consider two MDP models: the randomized consensus-$N$-$K$ protocol of~\cite{KwiatkowskaNS01,AspnesH90} (with $N$ processes and a bound $K$ on the random walk) and the CSMA-$N$-$K$ protocol for data channels~\cite{KwiatkowskaNSW07} (where $N$ is the number of stations, and $K$ is the maximal backoff count).
The results of both heuristic and exact computations can be seen in \Cref{fig:expmdpconsensus} and \Cref{fig:expmdpcsma}.
Whereas the heuristics all needed less than $5$ minutes, all MILP instances ran into the timeout except for the ones in \Cref{subfig:mdpconsprmin}.
Whenever a MILP instance could not be solved optimally in $30$ minutes, we plot both the found upper and lower bound, with the region in between shaded.
It should be noted that the condition $\prb^{\min}(\lozenge (\goal \lor \fail))$ holds for the instances of these models, and reachability properties, that we consider.

\begin{figure}[tbp]
  \captionsetup{width=.75\linewidth, labelfont={bf, up}}
  \begin{subfigure}[t]{0.5\columnwidth}
    \includegraphics[width=6cm]{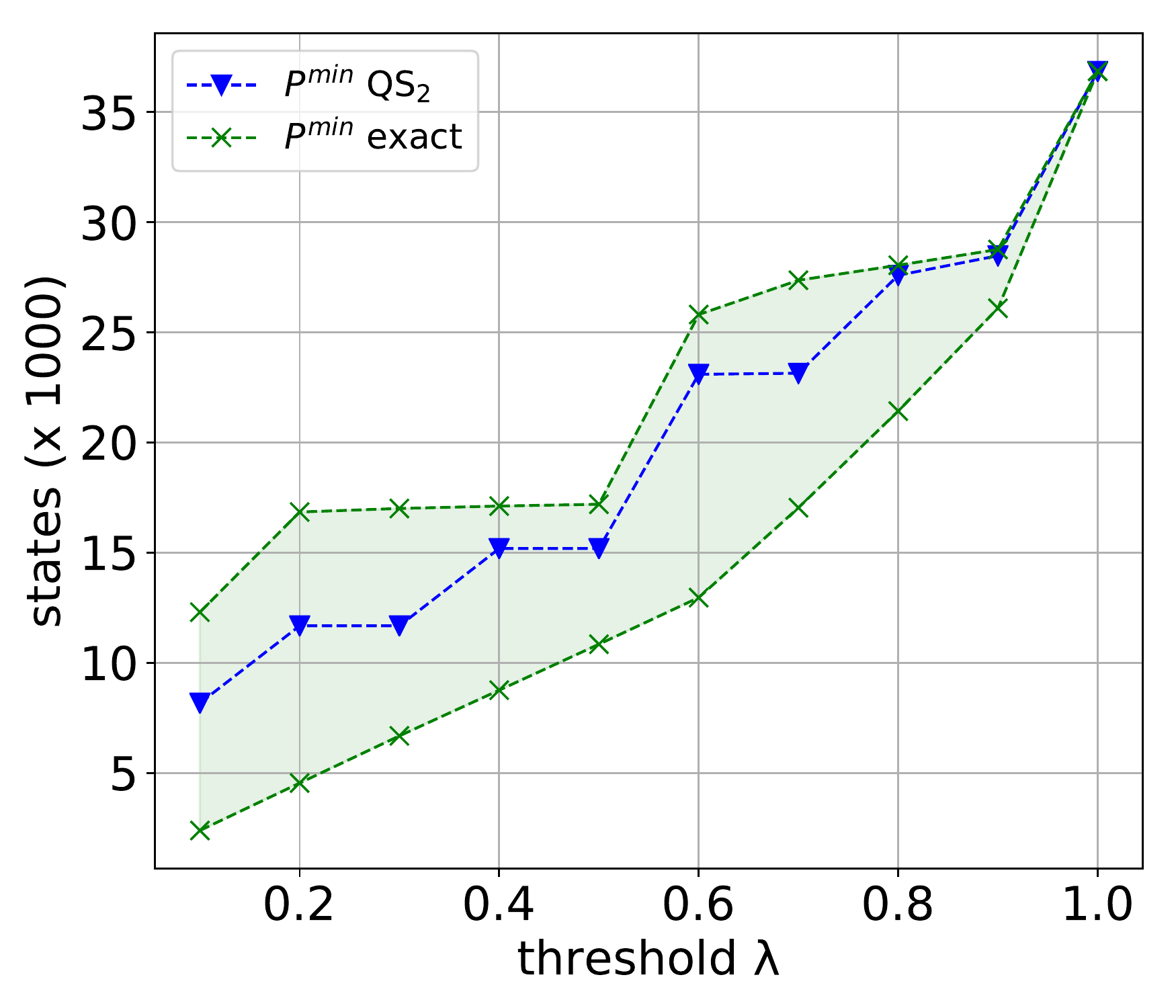}
    \subcaption{Witnesses for $\prb^{\min}_{s_0}(\lozenge \goal) \geq \lambda$.}
  \end{subfigure}
  \begin{subfigure}[t]{0.5\columnwidth}
    \includegraphics[width=6cm]{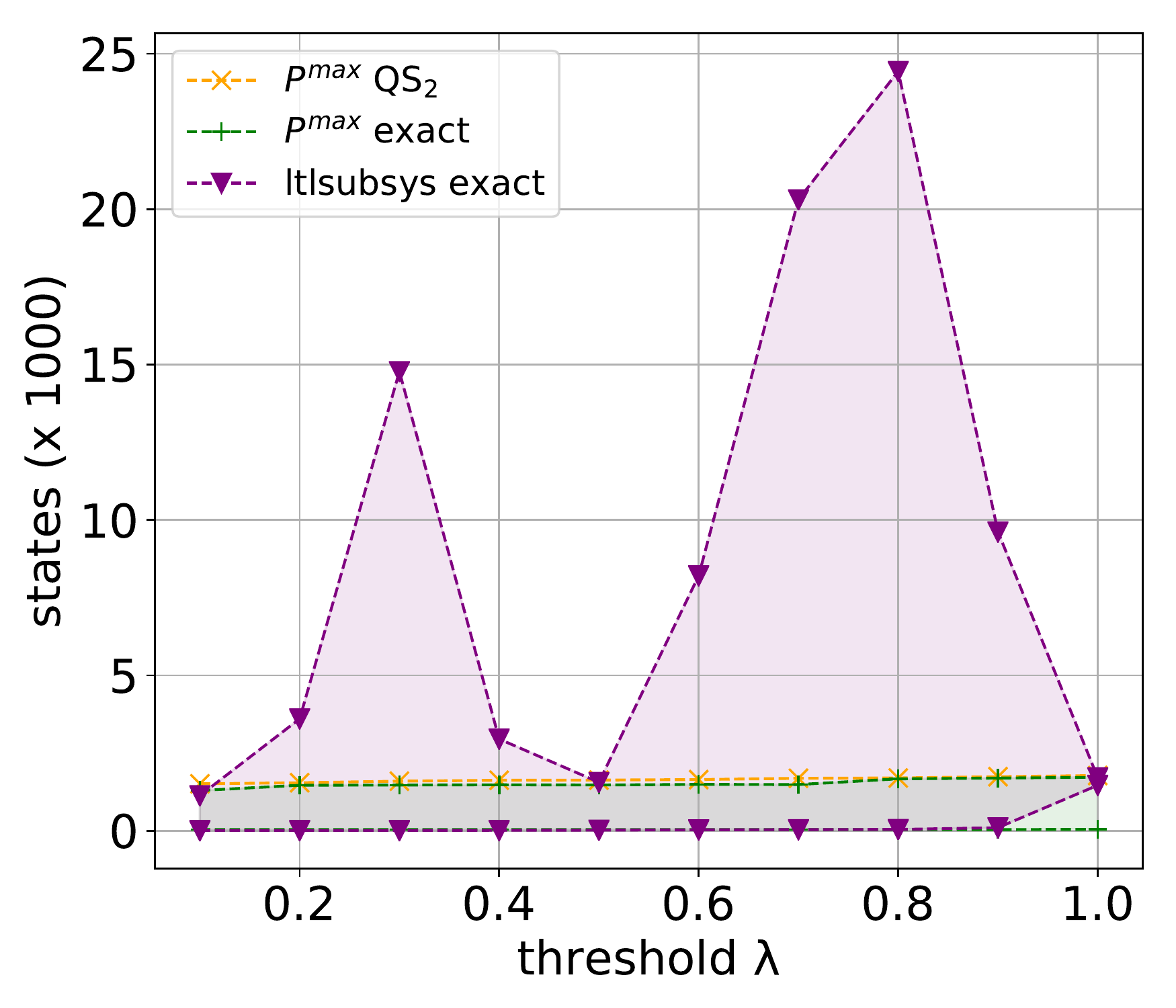}
    \subcaption{Witnesses for $\prb^{\max}_{s_0}(\lozenge \goal) \geq \lambda$.}
    \label{subfig:mdpcsmaprmax}
  \end{subfigure}
  \caption{MDP benchmark: CSMA-3-2 (36,850 states)}
  \label{fig:expmdpcsma}
\end{figure}

The comparison between the MILP formulation that we derived from $\mathcal{P}^{\max}(\lambda)$ and the one presented in~\cite{WimmerJABK12,WimmerJAKB14} (labeled by \texttt{ltlsubsys}, see also \Cref{sec:computing}) shows that both compute comparable upper and lower bounds in~\Cref{subfig:mdpconsprmax}, whereas \texttt{ltlsubsys} found worse upper bounds in~\Cref{subfig:mdpcsmaprmax}.
In all instances apart from \Cref{subfig:mdpconsprmax} the corresponding $\qs_2$ heuristics performs well and generates subsystems that are as good, or better, than the best upper bounds computed by the MILPs in 30 minutes.
As expected, the witnessing subsystems for $\prb^{\min}_{s_0}(\lozenge \goal) \geq \lambda$ tend to the entire state space as $\lambda$ tends to the actual value $\prb^{\min}_{s_0}(\lozenge \goal)$ (which is $1$ in these two models).
However, subsystems for $\prb^{\max}_{s_0}(\lozenge \goal) \geq \lambda$ may be substantially smaller even for large $\lambda$.

\section{Conclusion}
In this paper we brought together two a priori unrelated notions in the context of probabilistic reachability constraints: on the one hand Farkas certificates, which are vectors satisfying certain linear inequalities that we derive using MDP-specific variants of Farkas' Lemma, and on the other hand witnessing subsystems, which provide insight into which parts of the system are essential for the satisfaction of the considered property. This connection reduces the computation of minimal (respectively, small) witnessing subsystems to finding a Farkas certificate with a maximal (respectively, large) number of zeros.
Furthermore, it leads to a unified notion of witnessing subsystem for $\prb^{\max}_{s_0}(\lozenge \goal) \geq \lambda$ and $\prb^{\min}_{s_0}(\lozenge \goal) \geq \lambda$.

We showed that the decision version of computing minimal witnessing subsystems is NP-complete for acyclic DTMCs and introduced heuristics for the computation of small witnesses based on Farkas certificates.
Experiments of the heuristics exhibited competitive results compared to the approach implemented in \comics{} and showed that they scale well with the system size and threshold.
As expected, computing minimal subsystems using the derived MILP formulations consumed significantly more time than the heuristics and often triggered timeouts.
The upper and lower bounds that were computed in the given time by the new MILP formulation for $\prb^{\max}_{s_0}(\lozenge \goal) \geq \lambda$ were comparable to known techniques.

We have considered MDPs in which the probability to reach $\goal$ or $\fail$ is positive under each scheduler.
In future work, we plan to extend our techniques to weaken this assumption.
Exploring how vertex enumeration techniques could be adapted to the MDP-specific form of the Farkas polytopes is another interesting line of future work.
We also plan to implement a tool for working with Farkas certificates in practice, which encompasses their generation as well as their independent validation.

\bibliographystyle{splncs04}
\bibliography{lit}

\iflongversion

\newpage
\appendix

\section{Proofs of \Cref{sec:prelim}}

\begin{lemma}[Almost sure reachability]\label{lem: almost sure BSCC}
	Let $\M$ be an MDP as in Setting \ref{setting:mdp}. Then we have $\prb^{\min}_s(\lozenge (\goal\lor\fail)) = 1$ for every $s\in S$.
\end{lemma}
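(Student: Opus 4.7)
The plan is to reduce the lemma to a standard fact about finite Markov chains via a globally optimal memoryless deterministic scheduler. Set $T=\{\goal,\fail\}$. By the Baier--Katoen result already invoked in the paper (Lemmata~10.102 and~10.113), there is a single MD-scheduler $\S^\star$ that attains $\prb^{\min}_s(\lozenge T)$ \emph{simultaneously} for every $s\in S_\all$. Hence in the induced finite Markov chain $\M^{\S^\star}$ every state $s\in S$ reaches $T$ with probability $\Pr^{\S^\star}_s(\lozenge T)=\prb^{\min}_s(\lozenge T)$, which is strictly positive by the standing assumption of Setting~\ref{setting:mdp}.

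It then suffices to prove the purely Markov-chain statement: in a finite Markov chain, if every state reaches $T$ with positive probability, then every state reaches $T$ almost surely. I would argue this through BSCC analysis. Every trajectory in a finite Markov chain almost surely ends up in some BSCC, and each state of that BSCC is visited infinitely often. If some BSCC $B$ of $\M^{\S^\star}$ were disjoint from $T$, any $t\in B$ would satisfy $\Pr^{\S^\star}_t(\lozenge T)=0$, contradicting the previous paragraph. Hence every BSCC of $\M^{\S^\star}$ meets $T$, and therefore $\Pr^{\S^\star}_s(\lozenge T)=1$ for every $s$. A more hands-on alternative uses finiteness of the state space to extract $N\in\mathbb N$ and $\epsilon>0$ with $\Pr^{\S^\star}_s(\lozenge^{\leq N} T)\geq\epsilon$ uniformly in $s$, and then bounds $\Pr^{\S^\star}_s(\neg\lozenge T)\leq(1-\epsilon)^k$ block-wise, sending $k\to\infty$.

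Combining the two steps yields $\prb^{\min}_s(\lozenge T)=\Pr^{\S^\star}_s(\lozenge T)=1$ for every $s\in S$, which is exactly the claim. The only real subtlety is that we need \emph{one} MD-scheduler that minimises uniformly over all starting states; otherwise one could only invoke the Markov-chain fact in a pointwise fashion that does not re-assemble into an MDP-level statement. This uniformity is precisely what the cited Baier--Katoen lemma provides, so no further MDP-specific machinery is required beyond it.
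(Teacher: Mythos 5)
Your proof is correct, and its core is the same as the paper's: pass to the finite Markov chain induced by an MD-scheduler, use that almost every path ends in (and exhausts) a BSCC, and note that positivity of reaching $\{\goal,\fail\}$ from every state forbids any BSCC disjoint from $\{\goal,\fail\}$, so the reachability probability is $1$. The only difference is how the MDP-level statement is assembled. You instantiate the Markov-chain fact at the single, uniformly optimal minimizing MD-scheduler $\S^\star$, so you lean on the stronger ``one scheduler optimal from all start states'' form of the Baier--Katoen result. The paper instead runs the identical BSCC argument for an \emph{arbitrary} MD-scheduler $\S$: positivity under $\S$ is immediate from $\Pr^{\S}_t(\lozenge(\goal\lor\fail)) \geq \prb^{\min}_t(\lozenge(\goal\lor\fail)) > 0$, giving $\Pr^{\S}_s(\lozenge(\goal\lor\fail)) = 1$ for every MD-scheduler, after which only pointwise attainment of the minimum by some MD-scheduler is needed. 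So your closing remark that uniform optimality is the essential ingredient is not quite accurate: it suffices (and Baier--Katoen do provide it), but the paper's variant shows one can reassemble the MDP-level claim from a pointwise attainment result just as well, by proving the probability is $1$ under \emph{every} MD-scheduler rather than under one optimal scheduler. Both routes are equally elementary; yours has the small virtue of isolating a reusable Markov-chain lemma (positive reachability of a target from every state implies almost-sure reachability), and your global BSCC phrasing even avoids the paper's minor step of restricting to the states reachable from $s$.
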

\begin{proof}
	Take any MD-scheduler $\S$ on $\M$ and fix some $s\in S$. We may assume that every state is reachable from $s$, otherwise we restrict $\M^{\S}$ to the set of states reachable from $s$. Since $\prb^{\min}_t(\lozenge (\goal\lor\fail)) > 0$, we also have $\Pr^{\S}_t(\lozenge (\goal\lor\fail)) >0$ for every $t\in T$. This implies that the only state subsets which can be a BSCC in $\M^{\S}$ are $\{\fail\}$ and $\{\goal\}$. But almost every path in a DTMC reaches a BSCC, see e.g. \cite[Theorem 10.27]{BaierK2008}, and thus in fact $\Pr^{\S}_s(\lozenge (\goal\lor\fail)) = 1$. Since we took an arbitrary MD-scheduler and the minimal probability is attained by an MD-scheduler, we have $\prb^{\min}_s(\lozenge (\goal\lor\fail))=1$.
\qed
\end{proof}

\prminmaxlp*
\begin{proof}
	We heavily build on \cite[Section 10.6.1]{BaierK2008}, but give the argument explicitly for $\prb^{\min}(\lozenge\goal)$ due to some relaxations we made to the linear programs. It is easily verified that $\Ab\cdot\prb^{\min}(\lozenge\goal)\leq \bb$. 
	
	We first prove that the LP 
  \begin{equation}\label{eq:LPforMax}
  	\max \, \db \cdot \zb \;\text{ s.t. } \; \Ab \zb \leq \bb \quad
  \end{equation}
  is bounded. If this was not the case, then there would exist $\zb_0, \zb_1\in \R^S$ with $\zb_1\neq 0$ such that $\Ab(\zb_0+t\zb_1)\leq \bb$ for all $t\in [0,\infty)$ and $\zb_1\db > 0$. From this we see that $\Ab\zb_1\leq 0$. Now let $m = \max\{\zb_1(s)\mid s\in S\}$ and let $S_{\max} = \{s\in S\mid \zb_1(s) = m\}$. Since $\zb_1\db > 0$, we must have $m>0$. Now fix some arbitrary $s_{\max}\in S_{\max}$. Then because of $\Ab\zb_1\leq 0$ we have for all $\alpha\in \Act(s)$	
	\[m= \zb_1(s_{\max}) \leq \sum_{t\in S} \Pb(s_{\max},\alpha,t)\cdot\zb_1(t)\leq \sum_{t\in S} \Pb(s_{\max},\alpha,t)\cdot m \leq m \]
	and hence everywhere equality. Since $m> 0$, we have $\sum_{t\in S} \Pb(s_{\max},\alpha,t) = 1$, meaning that there is no transition from $s_{\max}$ to $\fail$ or $\goal$, and $\zb_1(t) = m$, meaning that the same applies to all successors of $s_{\max}$. By induction one sees that $\goal$ and $\fail$ are therefore not reachable from $s_{\max}$, a contradiction to the assumption $\prb^{\min}_{s_{\max}}(\lozenge (\goal\lor\fail)) > 0$.
	
	Next we show that any $\zb\in \R^S$ with $\Ab\zb\leq \bb$ must have a vanishing entry for an $s'\in S$ such that $\prb^{\min}_{\M, s'}(\lozenge\goal) = 0$. Take an MD-scheduler $\S\colon S\to \Act$ on $\M$ with $\Pr^{\S}_{\M, s}(\lozenge\goal) = \prb^{\min}_{\M, s}(\lozenge\goal)$ for all $s\in S$. Let $T\subseteq S$ denote the set of states that are reachable in $\M^{\S}$ from the given state $s'$ with $\prb^{\min}_{\M, s'}(\lozenge\goal) =\Pr^{\S}_{s'}(\lozenge\goal)= 0$. Then there is no transition from any $t\in T$ to $\goal$.
	
	 Let $\Qb\in \R^{T\times T}$ be the transition matrix of $\M^{\S}$ restricted to $T$. Let $\zb'$ be the projection of $\zb\in \R^S$ to $\R^T$. Then from $\Ab\zb\leq\bb$ and the fact that $\bb(t,\S(t))= 0$ for every $t\in T$, we get $\zb'\leq \Qb\zb'$ and by induction $\zb'\leq \Qb^n\zb'$. By \Cref{lem: almost sure BSCC} almost every path in $\M^{\S}$ reaches $\fail$ or $\goal$. But $\Qb^n(s,t)$ is the probability to reach $t$ from $s$ in exactly $n$ steps. Therefore we must have $\Qb^n\to0$ as $n\to\infty$, and in particular $\zb(s') = \zb'(s') = 0$.
	
	So far, we have argued that the LP (\ref{eq:LPforMax}) has a solution, say $\zb^*$, and that $\zb^*(s) = 0$ for all $s\in S$ with $\prb^{\min}_{\M, s}(\lozenge\goal)= 0$. It is easy to see that for all states we must have
	\[ \zb^*(s) = \min_{\alpha\in\Act(s)} \left\{\sum_{t\in S} \Pb(s,\alpha,t)\cdot \zb^*(t) + \bb(s,\alpha) \right\} \]
	since we could otherwise increase $\zb^*(s)$ without changing $\zb^*$ elsewhere and obtain a better solution. From \cite[Theorem 10.109]{BaierK2008}, it now follows that we must have $\zb^* = \prb^{\min}(\lozenge\goal)$.
	
	The proof is very similar for $\prb^{\max}(\lozenge\goal)$.
\qed\end{proof}

\section{Proofs of \Cref{sec:certificates}}
\monotonicityprminprmax*
\begin{proof}
	This is a simple consequence of \Cref{prop:LP version of pr_min}. Assume that there was $\zb$ with $\Ab\zb\leq \bb$ and one entry would satisfy $\zb(s) > \prb^{\min}_s(\lozenge\goal)$. Then choose $\db \in \R^S$ with 
	\[ \db(s) = \frac{\sum_{t\neq s} |\zb(t) -\prb^{\min}_t(\lozenge\goal)| + 1}{\zb(s) - \prb^{\min}_s(\lozenge\goal)} \] 
	and $\db(t) = 1$ for all $t\neq s$. Then we calculate 
	\begin{align*}
		\db\cdot\left(\zb -\prb^{\min}(\lozenge\goal)\right) & = \sum_{t\neq s} \zb(t) -\prb^{\min}_t(\lozenge\goal) + \db(s) \cdot \left(\zb(s) - \prb^{\min}_s(\lozenge\goal)\right)\\
					& =  \sum_{t\neq s} \zb(t) -\prb^{\min}_t(\lozenge\goal) +  \sum_{t\neq s} |\zb(t) -\prb^{\min}_t(\lozenge\goal)| + 1\\
					&> 0.
	\end{align*}
	Thus we get $\db\cdot\zb > \db\cdot \prb^{\min}(\lozenge\goal)$ in contradiction to \Cref{prop:LP version of pr_min}.
\qed\end{proof}

\begin{remark}[$\yb$-vectors, schedulers and frequencies]\label{rem:frequencies}
	In the equivalences used to derive \Cref{prop:certificates for ex quantified}, we could replace $\zb\in \R^S_{\geq 0}$ with $\zb\in \R^S$, leading to the equivalent formulation
	\begin{align*} 
	\neg \,\exists \zb\in\R^S, z^*\in\R .\, \,\;  \begin{pmatrix}
	- \Ab & \bb\\
	0 \ldots 0 & 1
	\end{pmatrix}
	\begin{pmatrix}
	\zb \\
	z^*
	\end{pmatrix} \geq 0 \land
	\begin{pmatrix}
	-\delta_{s_0} &   \lambda
	\end{pmatrix} 
	\begin{pmatrix}
	\zb \\
	z^*
	\end{pmatrix} < 0  
	\end{align*}
	which is by a variant of Farkas' Lemma (see \cite[Corollary 7.1d on p. 89]{Schrijver1986}) equivalent to
	\begin{align*}
	\iff \quad & \exists \yb \in \R^{\M}_{\geq 0}, y^*\geq 0.\, \,\; (\yb, y^*)  \begin{pmatrix}
	- \Ab & \bb\\
	0 \ldots 0 & 1
	\end{pmatrix}= \begin{pmatrix}
	-\delta_{s_0} & \lambda
	\end{pmatrix} \\
	\iff \quad & \exists \yb \in \R^{\M}_{\geq 0}.\, \,\; \yb\Ab =\delta_{s_0}\land \yb\bb \leq \lambda	
	\label{eq:with equality}
	\end{align*}
	This shows in total the equivalence
	\begin{align*}
	& \exists \yb \in \R^{\M}_{\geq 0}.\, \,\; \yb\Ab =\delta_{s_0}\land \yb\bb \leq \lambda\\
	\iff \quad & \exists \yb \in \R^{\M}_{\geq 0}.\, \,\; \yb\Ab \geq\delta_{s_0}\land \yb\bb \leq \lambda
	\end{align*}
	and the analogous equivalence with strict inequalities on $\lambda$ follow similarly using a third version of Farkas' Lemma \cite[Corollary 7.1e on p. 89]{Schrijver1986}
	
	In \Cref{lem:relaxation on y} below we give a hands-on proof for this last equivalence which also provides the following interpretation of a non-negative vector $\yb$ with $\yb\Ab =\delta_{s_0}$: Write $\yb = (y_{s_0,\alpha_0}, \ldots, y_{s_0,\alpha_m},\ldots,y_{s_n,\alpha_m})$ and define the MR-scheduler $\S\colon S\to\Dist(\Act)$ by setting for $\alpha \in\Act$
	\[\S(s)(\alpha) = \frac{y_{s,\alpha}}{\sum_{\alpha\in \Act}y_{s,\alpha}}\]
	for those states where the denominator is positive, and $\S(s)$ arbitrary otherwise. This method for constructing MR-schedulers is standard, compare for example \cite[Chapter 2]{HordijkK84} or \cite[Theorem 3.2]{EtessamiKVY08}.
	Then $y_{s,\alpha}$ is the \emph{expected frequency} of the pair $(s,\alpha)$ occuring in a random walk on $\S$-paths in $\M$.
	For the induced DTMC $\M^\S$ it holds that $\Pr^{\S}_{s_0}(\lozenge\goal) = \yb \bb$, as $\bb$ contains the probability to move to $\goal$ in one step for every state and $\goal$ is absorbing.
	Hence
	\[\prb^{\min}_{s_0}(\lozenge\goal) \leq \Pr^{\S}_{s_0}(\lozenge\goal)= \yb \bb \leq \lambda\]
	which is the property we considered.
\end{remark}

\begin{lemma}\label{lem:relaxation on y}
		For the matrix $\Ab \in \R^{\M \times S}$ and vector $\bb\in \R^{\M}$ as in Setting \ref{setting:mdp}, we have 
	\begin{align*}
	\quad & \exists \yb \in \R^{\M}_{\geq 0}.\, \,\; \yb\Ab \leq\delta_{s_0}\land \yb\bb > \lambda \\
	\implies \quad & \exists \yb \in \R^{\M}_{\geq 0}.\, \,\; \yb\Ab =\delta_{s_0}\land \yb\bb > \lambda
	\end{align*}
\end{lemma}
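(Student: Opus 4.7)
My strategy is a direct construction rather than an appeal to Farkas. Given $\yb \in \R^{\M}_{\geq 0}$ with $\yb\Ab \leq \delta_{s_0}$ and $\yb\bb > \lambda$, I will augment it by a non-negative ``correction'' $\cb$ that repairs the balance while contributing non-negatively to the objective. Setting $\vb := \delta_{s_0} - \yb\Ab \in \R^{S}_{\geq 0}$, it suffices to exhibit $\cb \in \R^{\M}_{\geq 0}$ with $\cb\Ab = \vb$, because then $\yb' := \yb + \cb$ is non-negative and satisfies $\yb'\Ab = \delta_{s_0}$ and $\yb'\bb = \yb\bb + \cb\bb \geq \yb\bb > \lambda$ (using $\cb \geq 0$ and $\bb \geq 0$).

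To obtain $\cb$, I will build it out of per-state ``unit-flow vectors'' $\mathbf{f}^t \in \R^{\M}_{\geq 0}$, one for each $t \in S$, satisfying $\mathbf{f}^t \Ab = \delta_t$; then setting $\cb := \sum_{t \in S} \vb(t)\, \mathbf{f}^t$ gives a non-negative vector that satisfies $\cb\Ab = \sum_{t \in S} \vb(t)\delta_t = \vb$ by linearity. For the construction of $\mathbf{f}^t$, fix any MD-scheduler $\S$ on $\M$. By \Cref{lem: almost sure BSCC}, we have $\Pr^{\S}_s(\lozenge(\goal \lor \fail)) = 1$ for every $s \in S$, so the sub-stochastic matrix $\Qb \in \R^{S \times S}$ given by $\Qb(s,s') := \Pb(s,\S(s),s')$ is transient. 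Hence the fundamental matrix $N := \sum_{n \geq 0} \Qb^n$ converges entrywise to finite non-negative values, and I take $\mathbf{f}^t_{s,\S(s)} := N(t,s)$ and $\mathbf{f}^t_{s,\alpha} := 0$ for $\alpha \neq \S(s)$.

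The essential calculation is then the balance equation $\mathbf{f}^t\Ab = \delta_t$. Using $\Ab((s',\alpha'),s) = \mathbf{1}_{s'=s} - \Pb(s',\alpha',s)$, a direct expansion yields
\[ (\mathbf{f}^t\Ab)(s) \;=\; N(t,s) \;-\; \sum_{s'\in S} N(t,s')\,\Qb(s',s), \]
so the claim reduces to the standard fundamental-matrix identity $N - N\Qb = I$, which follows by telescoping $\sum_{n \geq 0}\Qb^n - \sum_{n \geq 1}\Qb^n = I$. The main obstacle is more bookkeeping than conceptual: one must confirm that $\Qb$ is transient and $N$ is indeed finite, which can be extracted from $\Pr^{\S}_s(\lozenge(\goal \lor \fail)) = 1$ by observing that $(\Qb^n \mathbf{1})(s)$ equals the probability of still being inside $S$ after $n$ steps and therefore tends to $0$ as $n \to \infty$. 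Once this is settled, the two short computations above complete the argument.
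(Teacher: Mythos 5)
Your proof is correct, and it takes a genuinely different route from the paper's. The paper normalizes the given $\yb$ into an MR-scheduler $\S(s,\alpha)=\yb(s,\alpha)/\sum_{\beta\in\Act(s)}\yb(s,\beta)$, lets $\hb$ solve $\hb(\Ib-\Qb)=\delta_{s_0}$ for the induced chain, sets $\yb'(s,\alpha)=\hb(s)\cdot\S(s,\alpha)$, and then must work to compare: it shows $\gb(\Ib-\Qb)\leq\delta_{s_0}=\hb(\Ib-\Qb)$ for $\gb(s)=\sum_\alpha\yb(s,\alpha)$ and multiplies by the nonnegative matrix $(\Ib-\Qb)^{-1}=\sum_{n\geq 0}\Qb^n$ to get $\yb'\geq\yb$ and hence $\yb'\bb>\lambda$. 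You instead leave $\yb$ untouched and repair the slack $\vb=\delta_{s_0}-\yb\Ab\geq 0$ additively, with $\cb=\sum_{t\in S}\vb(t)\,\mathbf{f}^t$ built from the expected-visit counts of an arbitrary fixed MD-scheduler; the balance $\mathbf{f}^t\Ab=\delta_t$ reduces to the fundamental-matrix identity $N-N\Qb=\Ib$, and monotonicity of the objective is immediate from $\cb\geq 0$ and $\bb\geq 0$. This is more modular and arguably simpler: the componentwise domination $\yb'=\yb+\cb\geq\yb$ comes for free, and the argument would work verbatim with any nonnegative right-hand side in place of $\delta_{s_0}$. What the paper's construction buys instead is the interpretation exploited in \Cref{rem:frequencies}: there $\yb'$ is exactly the frequency vector of the MR-scheduler induced by $\yb$ itself, whereas your $\yb'$ mixes the action choices encoded in $\yb$ with those of the auxiliary MD-scheduler, so it certifies the lemma but not that particular structural reading. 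One small point of rigor: passing from $\Qb^n\to 0$ entrywise (which you correctly derive from \Cref{lem: almost sure BSCC}) to the finiteness of $N=\sum_{n\geq 0}\Qb^n$ needs the same justification the paper gives for its own $\Qb$, e.g.\ the Jordan-form/spectral-radius argument or a uniform geometric absorption bound on the finite state space; your appeal to transience is at the same level of detail as the paper's, so this is a presentational remark rather than a gap.
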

\begin{proof}
		Let $\yb \geq 0$ satisfy $\yb\Ab \leq\delta_{s_0}\land \yb\bb > \lambda$. We define an MR-scheduler $\S$ of $\M$ by
		\[ \S(s,\alpha) = \frac{\yb(s,\alpha)}{\sum_{\beta \in \Act(s)}\yb(s,\beta)}\]
		for those states $s\in S$ for which the denominator is positive. In case that $\yb(s,\beta) = 0$ for all $\beta\in\Act(s)$, then we take an arbitrary $\S(s, \cdot) \in\Dist(\Act(s))$. Let $\Qb$ be the transition matrix of the DTMC $\M^\S$ induced by $\S$ (restricted to $S$), that is,
		\[\Qb(s,s') = \sum_{\alpha \in \Act(s)} \Pb(s,\alpha,s') \cdot \S(s,\alpha).\]
		Since $\Pr^{\S}_{s}(\lozenge (\goal\lor\fail))) = 1$ one sees as in the proof of \Cref{prop:LP version of pr_min} that $\Qb^n \to 0$ as $n\to\infty$. By invoking the Jordan normal form of $\Qb$ one deduces from this that $\Qb$ cannot have (complex) eigenvalues of absolute value greater than or equal to $1$. In turn, this implies that the series $\sum_{n \geq 0}  \Qb^n$ converges and that the limit is the inverse of $\Ib-\Qb$.
		
		Let $\hb$ be the expected frequencies of $\M^\S$ under initial distribution $\delta_{s_0}$, i.e., the solution of 
		\[\hb (\Ib - \Qb) = \delta_{s_0},\]
		where $\Ib\in\R^{S\times S}$ is the identity matrix.
		Now let us define 
		\begin{equation}\label{eq:y'}
			\yb'(s,\alpha) = \hb(s) \cdot \S(s,\alpha).
		\end{equation}
		Then it follows that for all $s\in S$ 
		\begin{equation}
			\hb(s) = \sum_{\alpha \in \Act(s)} \yb'(s,\alpha)\label{eqn:hbsumsup}
		\end{equation}
		and for those $s\in S$ with at least one positive $\yb(s,\alpha)$
		\begin{equation}\label{eq:mean equality}
		\frac{\yb'(s,\alpha)}{\sum_{\alpha \in \Act(s)}\yb'(s,\alpha)} = \frac{\yb(s,\alpha)}{\sum_{\alpha \in \Act(s)}\yb(s,\alpha)}.
		\end{equation}
		We now show that $\yb'$ satisfies $\yb'\Ab = \delta_{s_0}$.
		By the definition of $\hb$, we have
		\begin{align*}
		  \hb(s) &= \sum_{t \in S} \Qb(t,s) \cdot \hb(t) + \delta_{s_0}(s) \\
		  &= \sum_{t \in S} \sum_{\alpha \in \Act(t)} \Pb(t,\alpha,s) \cdot \S(t,\alpha) \cdot \hb(t) + \delta_{s_0}(s)
		\end{align*}
		By applying (\ref{eqn:hbsumsup}) on the left-hand side and (\ref{eq:y'}) on the right, we get
		\[\sum_{\alpha \in \Act(s)} \yb'(s,\alpha)
		= \sum_{(t,\alpha) \in \M}\Pb(t,\alpha,s)\cdot\yb'(t,\alpha) + \delta_{s_0}(s),\]
		which in total is precisely the desired equation of $\yb'\Ab = \delta_{s_0}$.

		Finally we show that $\yb'\bb \geq \yb\bb$ by proving the stronger statement that $\yb' \geq \yb$. We first claim for $\gb(s) = \sum_{\alpha \in \Act(s)}\yb(s,\alpha)$ that $\gb (\Ib-\Qb) \leq \delta_{s_0}$, i.e., 
		\[\gb(s) \leq  \sum_{t\in S} \gb(t) \cdot\Qb(t, s) + \delta_{s_0}(s)\]
		This is verified by the computation
		\begin{align*}
		\sum_{\alpha \in \Act(s)} \yb(s,\alpha) &\leq  \sum_{\substack{t\in S\\\gb(t)>0}} \left(\sum_{\alpha \in \Act(t)} \yb(t,\alpha)\right) \cdot 
												\left( \sum_{\beta\in\Act(t)} \Pb(t, \beta, s)\cdot\S(t,\beta)  \right)+ \delta_{s_0}(s)\\
						& =  \sum_{\substack{t\in S\\\gb(t)>0}} \left(\sum_{\alpha \in \Act(t)} \yb(t,\alpha)\right) \cdot 
						\left( \sum_{\beta\in\Act(t)} \Pb(t, \beta, s)\cdot \left( \frac{ \yb(t,\beta)}{\sum_{\gamma\in\Act(t)} \yb(t, \gamma) }  \right)\right)+ \delta_{s_0}(s)\\
						&=  \sum_{t\in S}  \sum_{\beta\in\Act(t)} \Pb(t, \beta, s)\cdot \yb(t,\beta)+ \delta_{s_0}(s),
		\end{align*}
		which is precisely the assumption $\yb\Ab \leq\delta_{s_0}$.
		
		Now since $\gb (\Ib-\Qb) \leq \delta_{s_0} = \hb (\Ib-\Qb)$, we have after multiplying on the right with $(\Ib-\Qb)^{-1} = \sum_{n\geq 0} \Qb^n$ that $\gb\leq \hb$, i.e., 
		\begin{equation} 
			  \sum_{\alpha \in \Act(s)}\yb(s,\alpha) \leq \sum_{\alpha \in \Act(s)}\yb'(s,\alpha)
		\end{equation}
		Because of (\ref{eq:mean equality}), this is equivalent to $\yb(s,\alpha)\leq \yb'(s,\alpha)$ for all $(s, \alpha)\in \M$ with $\gb(s) > 0$. For those states with $\gb(s) = 0$, there is nothing to prove for $\yb' \geq \yb$.
\qed\end{proof}

\section{Proofs of \Cref{sec:mws}}

\schedulerssubsys*
\begin{proof}
  As $\Act_{\M'}(s) = \Act_{\M}(s)$ for every $s \in S$, every choice of an MR-scheduler $\S$ in $\M$ induces an MR-scheduler $\S'$ of $\M'$ by restriction.
  To see that
  \[ \Pr^{\S'}_{\M', s_0}(\lozenge\goal) \leq \Pr^\S_{\M, s_0}(\lozenge\goal)\]
  holds for all MR-schedulers $\S$ of $\M$, it suffices to observe that 
  \[\{ \pi \in \Paths(\M') \mid \pi \models \lozenge \goal \} \subseteq \{ \pi \in \Paths(\M) \mid \pi \models \lozenge \goal \}\] 
  and that the probabilities of the paths in these sets is the same in $\M$ and $\M'$.
  As minimal and maximal reachability probabilities are attained by an MR-scheduler, we get as a consequence that $\prb^{\min}_{\M', s_0}(\lozenge\goal) \leq \prb^{\min}_{\M, s_0}(\lozenge\goal)$.
  
  For the $\prb^{\max}_{\M', s_0}(\lozenge\goal) \leq \prb^{\max}_{\M, s_0}(\lozenge\goal)$ one notices that, vice versa, every MR-scheduler $\S'$ on $\M'$ can be extended (arbitrarily) to an MR-scheduler $\S$ on $\M$ and that they satisfy $\Pr^{\S'}_{\M', s_0}(\lozenge\goal) \leq \Pr^{\S}_{\M, s_0}(\lozenge\goal)$. The rest of the argument is identical.
\qed\end{proof}

In the following lemma the \emph{size} of an MDP refers to sum of the number of states and the number of transitions, i.e., triples $(s,\alpha,t)$ with $\Pb(s,\alpha,t) >0$.

\begin{lemma}[Reduction to state-minimality]\label{lem:minreductions}
	\label{lem:minimal-reduction}
	Let $\M= (S_\all,s_0,\Act, \Pb)$ be an MDP as in Setting \ref{setting:mdp}. Then there exists an MDP $\mathcal{N}= (S'_\all,s_0,\Act, \Pb')$ such that the transition-minimal (respectively, size-minimal) witnesses of $\M$ are in one-to-one correspondence with the state-minimal witnesses of $\mathcal{N}$. The size of $\mathcal{N}$ is linear (respectively, quadratic) in the size of $\mathcal{M}$.
\end{lemma}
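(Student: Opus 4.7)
The plan is to describe, in each case, an explicit transformation $\M\mapsto\mathcal{N}$ together with a bijection between the relevant minimal witnesses that preserves reachability probabilities under every scheduler.

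For transition-minimality I would take $\mathcal{N}$ to have one fresh state $r_{s,\alpha,t}$ per positive-probability transition of $\M$, together with copies of $s_0,\goal,\fail$. The action structure at $r_{s,\alpha,t}$ in $\mathcal{N}$ mirrors that of the target $t$ in $\M$: for each $\beta\in\Act(t)$ there is an action that sends probability $\Pb(t,\beta,t')$ to $r_{t,\beta,t'}$, with the probability mass into $\goal$ and $\fail$ going there directly. Crucially, no original non-absorbing state of $\M$ is introduced as a state of $\mathcal{N}$, so the state count of a subsystem $\mathcal{N}'\subseteq\mathcal{N}$ equals $|T'|$ up to the fixed constant coming from $s_0,\goal,\fail$. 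Each path in $\M$ has a unique corresponding path in $\mathcal{N}$ with the same probability, so reachability probabilities agree under the natural scheduler translation, and a subsystem $\mathcal{N}'$ with $r$-states $R$ corresponds to the subsystem $\M'$ whose kept transitions are $T'=\{(s,\alpha,t):r_{s,\alpha,t}\in R\}$. Hence state-minimal witnesses of $\mathcal{N}$ are in bijection with transition-minimal witnesses of $\M$. Since $|\mathcal{N}|$ is bounded by the number of transitions of $\M$ plus three, the blow-up is linear.

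For size-minimality the quantity to minimize is $|S'_\all|+|T'|$ rather than $|T'|$ alone, so I would extend the construction by additionally inserting one auxiliary state per original state $s\in S$ and routing each transition $(s,\alpha,t)$ through both the auxiliary states for $s$ and $t$ and the intermediate state $r_{s,\alpha,t}$. Under the bijection the state count of $\mathcal{N}'$ then equals $|S'_\all|+|T'|$, matching the size of $\M'$. The auxiliary routing multiplies the state space of the transition-minimality gadget by a factor bounded by $|\M|$, yielding the claimed quadratic bound on $|\mathcal{N}|$ in the size of $\M$.

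The main obstacle I expect is proving the correspondence on minimal witnesses rigorously: one must show that a state-minimal witness of $\mathcal{N}$ always has the ``regular'' shape dictated by the gadgets (no dangling intermediates $r_{s,\alpha,t}$ whose endpoints have been discarded, no partially-instantiated auxiliary routes), so that the bijection indeed maps minimal to minimal on both sides. A second delicate point is respecting the action-preservation constraint $\Act_{\M'}(s)=\Act_{\M}(s)$ from \Cref{def:subsystem} on both sides of the reduction, since actions cannot be removed selectively but only have their probability mass redirected to $\fail$; the gadgets must be designed so that this restriction translates coherently between $\M$ and $\mathcal{N}$, and in particular so that the ``mimic'' action structure at $r_{s,\alpha,t}$ never accidentally enables or disables choices that are unavailable in the corresponding subsystem of $\M$.
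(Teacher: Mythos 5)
Your constructions are essentially the ones in the paper's proof: for transition-minimality the paper also takes the states of $\mathcal{N}$ to be the transitions of $\M$ together with $s_0,\goal,\fail$, with $(s,\alpha,t)\overset{\beta}{\longrightarrow}(t,\beta,u)$ carrying probability $\Pb(t,\beta,u)$, and for size-minimality it keeps the original states and inserts one midpoint state per transition; in both cases the correctness argument is the probability-preserving bijection between paths of $\M$ and of $\mathcal{N}$ that you sketch. The genuine gap is in the quantitative half of the lemma, which is part of the statement. ``Size'' is defined immediately before the lemma as the number of states \emph{plus} the number of transitions. Writing $T$ for the transitions of $\M$, your transition-minimality gadget makes every state $r_{s,\alpha,t}$ replicate the entire outgoing branching of $t$, so $\mathcal{N}$ has $\sum_{(s,\alpha,t)\in T}\lvert\{(t,\beta,u)\mid\Pb(t,\beta,u)>0\}\rvert$ transitions, which is $\Theta(\lvert T\rvert^2)$ in the worst case; your linearity claim counts only states, and the paper's proof explicitly notes that this reduction is quadratic for exactly this reason. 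Conversely, the size-minimality construction (original or auxiliary states plus one intermediate state per transition, each transition split into a short chain) has $O(\lvert S\rvert+\lvert T\rvert)$ states and transitions and is therefore linear, as the paper's proof observes; your ``factor bounded by $\lvert\M\rvert$'' estimate is not what the construction gives. So the linear/quadratic bookkeeping attached to the two reductions is swapped relative to what the constructions actually yield, and the justification offered for it does not go through under the paper's notion of size.

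Two smaller points. First, by sending the probability mass into $\goal$ and $\fail$ directly from the $r$-states instead of giving transitions into $\goal$ and $\fail$ their own states (the paper keeps states $(s,\alpha,\goal)$ with a probability-one move to $\goal$), the state count of a subsystem of $\mathcal{N}$ only reflects the kept transitions between non-absorbing states, whereas transition-minimality as defined counts all triples $(s,\alpha,t)$ with positive probability in $\M'$; this changes the objective being minimized and can spoil the claimed one-to-one correspondence. Second, the correspondence between minimal witnesses — including that state-minimal witnesses of $\mathcal{N}$ have the regular gadget shape and that $\Act_{\M'}(s)=\Act_{\M}(s)$ is respected on both sides — is precisely what you defer as ``the main obstacle''; in the paper it falls out of the path bijection (a subsystem of $\M$ obtained by deleting states $S_d$ and transitions $T_d$ is a witness if and only if the subsystem of $\mathcal{N}$ obtained by deleting the states $S_d\cup T_d$ is one), so your plan can be completed along those lines, but as written the proposal asserts rather than proves this step.
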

\begin{proof}
	Throughout this proof, we let $T$ denote the transitions of $\M$, i.e., the set of triples with $\Pb(s,\alpha,t) >0$. For the reduction from size-minimality to state-minimality, let $\mathcal{N}$ be the MDP with states $S_\all' = S_\all\cup T$ and transitions
	\begin{align*} 
		s\overset{\alpha}{\longrightarrow} (s, \alpha, t) \;&\text{ with probability }\Pb(s,\alpha,t)\\
		(s, \alpha, t) \overset{\alpha}{\longrightarrow} t\;&\text{ with probability }1
	\end{align*}
	Then there is a bijection between paths in $\M$ and paths in $\mathcal{N}$ given by
	\[ s_0\alpha_0s_1\alpha_1s_2... \;\text{ corresponds to } s_0\alpha_0(s_0,\alpha_0,s_1)\alpha_0 s_1\alpha_1(s_1,\alpha_1,s_2)\alpha_1 s_2...  \]
	and this bijection preserves probabilities. This immediately implies that a subsystem of $\M$ obtained by deleting states $S_d$ and transitions $T_d$ is a (minimal) witness if and only if the subsystem of $\mathcal{N}$ obtained by deleting states the $S_d\cup T_d$ is a (minimal) witness. Clearly, the reduction is linear in size. This finishes the reduction from size-minimality to state-minimality.
	
	For the reduction from transition-minimality, let $\mathcal{N}$ be the MDP with states $S_\all' = T\cup \{s_0,\goal,\fail\}$ and transitions
	\begin{align*} 
	s_0\overset{\alpha}{\longrightarrow} (s_0, \alpha, t) \;&\text{ with probability }\Pb(s_0,\alpha,t)\\
	(s, \alpha, t) \overset{\beta}{\longrightarrow} (t, \beta, u)\;&\text{ with probability } \Pb(t, \beta, u)\\
	(s, \alpha, \goal)\overset{\alpha}{\longrightarrow} \goal\;&\text{ with probability }1\\
	(s, \alpha, \fail)\overset{\alpha}{\longrightarrow} \fail\;&\text{ with probability }1
	\end{align*}
	Then there is again a probability-preserving bijection between paths in $\M$ and paths in $\mathcal{N}$. The rest of the argument is completely analogous. In this case, however, the size of $\mathcal{N}$ is quadratic in the size of $\M$ since there are $\mathcal{O}(|T|^2)$ many transitions of the second type in the above list.
\qed\end{proof}

\mwanpcompleteness*
\begin{proof}
	Assume that we are given an instance of the clique problem, i.e., a finite undirected graph $G=(V,E)$ and an integer $k\geq 3$ (the cases $k<3$ are trivial). Let $n=|V|$. Consider the DTMC $\M$ with states $S = \{s_0\}\cup V\cup E \cup \{\fail, \goal\}$ and four types of edges, see also \Cref{fig:sp graph}:
	\begin{itemize}
		\item $s_0\to v$ for every $v\in V$ with probability $1/n$;
		\item $v\to \{v,w\}$ for every $v\in V$ and edge $\{v,w\}\in E$ with probability $1/n$;
		\item $\{v,w\}\to \goal$ for edge $\{v,w\}\in E$ with probability $1$;
		\item $s\to \fail$ for all $s\in S$ with remaining probability outgoing from $s$ (provided there is some).
	\end{itemize}
	
	\begin{figure}[h]
		\centering
		\captionsetup{width=.8\linewidth, labelfont={bf,up}}
		
		\begin{tikzpicture}[->,>=stealth',shorten >=1pt,auto,node distance=0.5cm, semithick]
		
		\node[scale=0.8, state] (In) {$s_0$};
		\node[scale=0.8, state] (v) [below left = 1.5 and 1.5 of In] {$v$};
		\node[scale=0.8, state] (w) [below left = 1.5 and 0 of In]{$w$};
		\node[scale=0.8] (vdots) [below right = 1.5 and 0.5 of In]{$\cdots$};
		
		\node[scale=0.8, state] (vw) [below left = 1.5 and 0 of v]{$\{v,w\}$};
		\node[scale=0.8, state] (vu) [below right= 1.5 and 1 of v]{$\,\{v,u\}\,$};
		\node[scale=0.8] (edots) [below right = 1.5 and 2.9 of v]{$...$};
		\node[scale=0.8, state] (g) [below = 5.5 of In]{$\goal$};
		
		\draw (In) -- (v) node[pos=0.5,scale=0.8, xshift=-8mm, yshift=2mm] {$\frac{1}{n}$};
		\draw (In) -- (w) node[pos=0.5,scale=0.8,  xshift=-5mm, yshift=-1mm] {$\frac{1}{n}$};
		\draw (In) -- (vdots) node[pos=0.5,scale=0.8,  xshift=1mm, yshift=-4mm] {$\frac{1}{n}$};
		
		\draw (v) -- (vw) node[pos=0.3,scale=0.8, xshift=-6mm] {$\frac{1}{n}$};
		\draw (w) -- (vw) node[pos=0.3,scale=0.8,  xshift=-10mm, yshift=-0.5mm] {$\frac{1}{n}$};
		\draw (v) -- (vu) node[pos=0.7,scale=0.8, xshift=-0mm, yshift=-2mm] {$\frac{1}{n}$};
		\draw (vw) -- (g) node[pos=0.5,scale=0.8] {$1$};
		\draw (vu) -- (g) node[pos=0.5,scale=0.8] {$1$};
		
		\draw (w) -- (edots) node[pos=0.5,scale=0.8] {};
		
		\end{tikzpicture}
		\caption{The acyclic DTMC $\M$ reducing the clique problem to the witness problem. The state $\fail$ and edges to it are not depicted for simplicity.}\label{fig:sp graph}
	\end{figure}
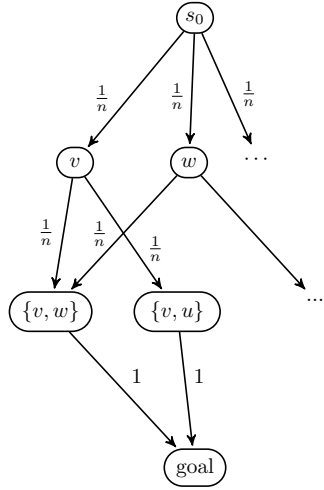
	We claim that the graph $G$ has a clique with at least $k$ vertices if and only if the MDP $\M$ has a witness for $\Pr_{\M, s_0}(\lozenge\goal) \geq \lambda := \frac{k(k-1)}{n^2}$ with at most $k' :=k+ \frac{k(k-1)}{2} + 3$ states (the $3$ accounting for $s_0, \fail$, and $\goal$). The 'only if' direction is clear. 
	
	For the reverse direction assume that $\M$ has a witness $\M'$ for $\Pr_{\M, s_0}(\lozenge\goal) \geq \frac{k(k-1)}{n^2}$ with at most $k' :=k+ \frac{k(k-1)}{2} + 3$ states. Denote those states of $\M'$ lying in $V$ by $V'$ and those lying in $E$ by $E'$. Let $a = |V'|$ and $b=|E'|$, and recall that $a+b \leq k+\frac{k(k-1)}{2}$. We intend to show that $V'$ is a $k$-clique. 
	Since $\Pr_{\M', s_0}(\lozenge\goal) \geq  \frac{k(k-1)}{n^2}$, we have at least $k(k-1)$ transitions between states in $V'$ and $E'$. Since each state in $E'$ has exactly two incoming transitions, we have $k(k-1)\leq 2b$, and therefore $a\leq k$.
	At most $\frac{a(a-1)}{2}$ states of $E'$ have two incoming transitions from $V'$, the others have at most one incoming transition from $V'$. Thus the total number of transitions from $V'$ to $E'$ is bounded from above by
	\begin{align*}
		& \quad\;2\cdot \frac{a(a-1)}{2} + b - \frac{a(a-1)}{2} \\
		&\leq	2\cdot \frac{a(a-1)}{2} +  \left(k+\frac{k(k-1)}{2}-a\right)- \frac{a(a-1)}{2} \\
				&=  \frac{a(a-3)}{2} + \frac{k(k+1)}{2} - k(k-1) + k(k-1)\\
				&=  \frac{a(a-3)}{2} - \frac{k(k-3)}{2} + k(k-1)\\
				& \leq k(k-1)
	\end{align*}
	where the last step follows from $a\leq k$ and $k\geq 3$. But by assumption, the total number of transitions from $V'$ to $E'$ is at least $k(k-1)$, and hence we have everywhere equality in the above computation. This can only happen if $a =k$, $b = \frac{k(k-1)}{2}$ and every edge in $E'$ has two incoming transitions from $V'$. This forces $V'$ to be a $k$-clique with edge set $E'$.
	\qed\end{proof}

\section{Proofs of \Cref{sec:relation}}

\polytope*

\begin{proof}
	This proof has some resemblence with the proof \Cref{prop:LP version of pr_min}. If $\P^{\min}(\lambda)$ was unbounded, then there exists $\zb_0, \zb_1\in \R^S$ with $\zb_1\neq 0$ such that $\zb_0+t\zb_1\in \P_\M^{\min}(\lambda)$ for all $t\in [0,\infty)$. This implies $\Ab\zb_1\leq 0$ and $\zb_1(s_0)\geq 0$. Let $m = \max\{\zb_1(s)\mid s\in S\}$ and let $S_{\max} = \{s\in S\mid \zb_1(s) = m\}$.
	Assume that $m=0$.
	Then $s_0\in S_{\max}$
	\[0= \zb_1(s_0) \leq \sum_{t\in S} \Pb(s_0,\alpha,t)\cdot\zb_1(t)\leq \sum_{t\in S} \Pb(s_0,\alpha,t)\cdot m =0\]
	and hence everywhere equality. This implies that $\zb_1(t) = 0$ for all successors of $s_0$ and by induction $\zb_1(t) = 0$ for all reachable states $t$. By assumption all states are reachable from $s_0$ and hence $\zb_1 = 0$, a contradiction.
	
	Now fix some arbitrary $s_{\max}\in S_{\max}$. Then again because of $\Ab\zb_1\leq 0$ we have for all $\alpha\in \Act(s)$	
	\[m= \zb_1(s_{\max}) \leq \sum_{t\in S} \Pb(s_{\max},\alpha,t)\cdot\zb_1(t)\leq \sum_{t\in S} \Pb(s_{\max},\alpha,t)\cdot m \leq m \]
	and hence everywhere equality. But then $\sum_{t\in S} \Pb(s_{\max},\alpha,t) = 1$, meaning that there is no transition from $s_{\max}$ to $\fail$ or $\goal$, and $\zb_1(t) = m$, meaning that the same applies to all successors of $s_{\max}$. This implies by induction that $\goal$ and $\fail$ are not reachable from $s_{\max}$, a contradiction to the assumption $\prb_{s_{\max}}^{\min}(\lozenge (\goal\lor \fail)) >0 $.
	
	\medskip
	The argument for $\P^{\max}(\lambda)$ is similar: Assume that this polyhedron was unbounded, i.e., there exist $\yb_0, \yb_1\in\R^{\M}$ with $\yb_1\neq 0$ such that  $\yb_0+t\yb_1\in\P^{\max}(\lambda)$  for all $t\in [0,\infty)$. Then necessarily $\yb_1\geq 0$ and $\yb_1\Ab\leq 0$, and thus also $\yb_1\Ab (1,..., 1)\leq 0$. On the other hand, $\Ab (1,..., 1) \geq 0$ and therefore $\yb_1\Ab (1,..., 1)\geq 0$. So $\yb_1\Ab (1,..., 1)= 0$ which means $\yb_1\Ab = 0$. Thus, if $\yb_1(s, \alpha) > 0$, then $\Ab(1,...,1) (s, \alpha) = 0$, i.e., there does not exist a transition $s\overset{\alpha}{\longrightarrow} \goal$ or $s\overset{\alpha}{\longrightarrow} \fail$. Now let 
	\begin{gather*}
	S_+ = \{ s\in S\mid \sum_{\alpha\in\Act(s)} \yb_1(s,\alpha) >0 \} \\
	S_0 = \{ s\in S\mid \sum_{\alpha\in\Act(s)} \yb_1(s,\alpha) =0  \}
	\end{gather*}
	Since $\yb_1\neq 0$, we have $S_+ \neq\varnothing$, so we can fix $s_+\in S_+$. If $s\overset{\alpha}{\longrightarrow} t$ is a transition in $\M$ with $\yb_1(s, \alpha) >0$, then because of $\yb_1\Ab= 0$ we also have $t\in S_+$. So if we let $\S\colon S\to\Act$ be the MR-scheduler defined on $s\in S_+$ by
	\[ \S(s,\alpha) = \frac{\yb(s,\alpha)}{\sum_{\beta \in \Act(s)}\yb(s,\beta)}\]
	and on $S_0$ arbitrarily, then $\Pr_{s_+}^{\S}(\lozenge (\goal\lor \fail)) = 0$. However, we assumed that $\prb_{s_+}^{\min}(\lozenge (\goal\lor \fail))>0$.
\qed\end{proof}

\mainthm*

\begin{proof}
	(1)$\implies$(2): Assume that $\M_R$ is a witness for $\prb^{\min}_{\M, s_0}(\lozenge\goal)\geq\lambda$, i.e., $\prb^{\min}_{\M_R, s_0}(\lozenge\goal) \geq\lambda$. Let $\pb\in\R^S$ be the point containing a zero for the states $s\notin R$ and $\prb^{\min}_{\M_R, s}(\lozenge\goal)$ otherwise. Then $\pb\geq 0$, $\supp(\pb) \subseteq R$, and by assumption $\pb(s_0) \geq\lambda$. It remains to show that $\Ab\zb\leq \bb$, i.e. that for all $s\in S$ and $\alpha\in \Act(s)$ we have
	\begin{equation}\label{eq:Az b} 
	\pb(s) \leq \sum_{t\in S} \Pb(s,\alpha, t) \cdot\pb(t) + \bb(s, \alpha).
	\end{equation}
	For those states with $s\notin R$ or those with $\prb^{\min}_{\M_R, s}(\lozenge\goal)=0$, this  is clear since the left-hand side vanishes and the right-hand side is non-negative. The other states form a subset $R'\subseteq R$ and the vector $\pb'\in\R^{R'}$ obtained by projection from $\pb$ is equal to $\pb' = (\prb^{\min}_{\M_{R'}, s}(\lozenge\goal))_{s\in R'}$. Therefore we clearly have for all $s\in R'$ and $\alpha\in \Act(s)$
	\[ \pb'(s) \leq  \sum_{t\in R'} \Pb(s,\alpha, t) \cdot\pb'(t)+ \bb(s, \alpha)\]
	This implies (\ref{eq:Az b}) since the right-hand side in both inequalities agree. Hence $\pb\in\P^{\min}_{\geq 0}(\lambda)$.
	
	\medskip
	(2)$\implies$(1): Take a point $\pb\in\P^{\min}_{\geq 0}(\lambda)$ with $\supp(\pb) \subseteq  R$. Since 
	\[\prb^{\min}_{\M_R, s_0}(\lozenge\goal)\geq \prb^{\min}_{\M_{\supp(\pb)}, s_0}(\lozenge\goal) \]
	we may assume that $\supp(\pb) =  R$. Since $\pb(s_0) \geq\lambda$, it suffices to prove
	\[\prb^{\min}_{\M_R, s_0}(\lozenge\goal)\geq \pb(s_0)\]
	which we achieve by invoking \Cref{lem:monotonicity}.	We may assume that every state in $R$ is reachable from $s_0$ in $\M_R$, otherwise we restrict to those states. Notice also that for all $s\in R$ we have
	\begin{equation*}\label{eq:2-1}
	\prb^{\min}_{\M_R, s}(\lozenge (\goal\lor\fail)) = \prb^{\min}_{\M, s}(\lozenge (\goal\lor\fail)) > 0
	\end{equation*} 
	Now let $\pb'$ be the projection of $\pb\in \R^S$ to $\R^R$ and let $\Ab'\in\R^{\M_R\times R}$ and $\bb'\in \R^{\M_R}$ be defined similarly. Then $\Ab'$ and $\bb'$ are precisely as in Setting \ref{setting:mdp} for the MDP $\M_R$. By assumption we have $\Ab\pb\leq\bb$ and since $\supp(\pb)\subseteq R$ we also have $\Ab'\pb'\leq\bb'$. We now apply \Cref{lem:monotonicity} and get $\pb'\leq \prb^{\min}_{\M_R}(\lozenge\goal)$, and therefore $\prb^{\min}_{\M_R}(\lozenge\goal)\geq \pb'(s_0) = \pb(s_0)\geq\lambda$.

	\medskip
	(2)$\implies$(3): This is a general observation from polytope theory: Take a point $\pb$ in $\P = \P^{\min}_{\geq 0}(\lambda)$ such that $\supp(\pb)\subseteq R$. Let 
	\[\mathcal{H} = \{\xb\in \R^{\M}\mid \xb(s, \alpha) = 0\;\text{ for all } (s,\alpha)\in \M\setminus \supp(\pb)\}.\] 
	As the inequalities $\xb(s, \alpha)\geq 0$ are also part of the description of $\P$ and $\pb\in \P\cap\mathcal{H}$, the set $\P\cap\mathcal{H}$ is a face of $\P$. This face has a vertex $\vb\in \P\cap \mathcal{H}$. Then $\vb$ is also a vertex of $\P$ and we have $\supp(\vb)\subseteq \supp(\pb)  \subseteq R$.
	
	\medskip
	(3)$\implies$(2): Trivial.
	
	\medskip	
	(a)$\implies$(b): Let $\M_R=(S'_\all, s_0, \Act, \Pb')$ and denote $S' = S'_\all \setminus\{\goal, \fail\}$. If $\M_R$ is a witness for $\prb^{\max}_{\M, s_0} (\lozenge\goal)\geq\lambda$, then there exists an MD-scheduler $\S: S'_\all\to \Act$ such that $\Pr_{\M_R, s_0}^{\S}(\lozenge\goal)\geq \lambda$. Let $\Qb\in\R^{S'\times S'}$ be the transition matrix of $\M_R^{\S}$, and likewise let $\cb\in\R^{S'}$ contain the probabilities to go from $s$ to $\goal$ in one step in $\M_R^{\S}$. 
	
	Recall that for all $s\in R$ we have $\prb^{\min}_{\M_R, s}(\lozenge (\goal\lor\fail)) = \prb^{\min}_{\M, s}(\lozenge (\goal\lor\fail)) > 0 $.
	With the same argument as in the proof of \Cref{prop:LP version of pr_min} one sees that $\Qb^n\to 0 $ as $n\to\infty$. By invoking the Jordan normal form of $\Qb$ one deduces from this that $\Qb$ cannot have (complex) eigenvalues of absolute value greater than or equal to $1$. In turn, this implies that the series $\sum_{n \geq 0}  \Qb^n$ converges.
	
	Now let $\qb = \delta_{s_0}\cdot \sum_{n \geq 0}  \Qb^n\in\R^{S'}$. Recall that $\Qb^n(s_0, s)$ is the probability that a path of length $n$ starting in $s_0$ ends in $s$. From this it is easy to see that $\qb\cb =  \Pr_{\M_R, s_0}^{\S}(\lozenge\goal)$. Now let $\pb\in \R^{\M}$ be the vector with $\pb(s, \alpha) = \qb(s)$ if $s\in S'$ and $\S(s) = \alpha$, and $0$ otherwise. Then clearly $\supp(\pb)\subseteq R$ and we claim that $\pb\in \P^{\max}(\lambda)$. Obviously, we have $\pb \geq 0$ and also $\pb \bb = \qb\cb = \Pr_{\M_R, s_0}^{\S}(\lozenge\goal) \geq \lambda$. Furthermore,
	\[ \qb(\Ib-\Qb) =  \delta_{s_0}\cdot \sum_{n \geq 0}  \Qb^n \cdot (\Ib-\Qb) = \delta_{s_0}\]
	This implies $\pb\Ab \leq \delta_{s_0}$ since for the states $s\in S\setminus S'$ there is nothing to show in this inequality.
	
	\medskip	
	(b)$\implies$(a): Let $\pb$ be an element of $\P^{\max}(\lambda)$, let $U = \supp(\pb)\subseteq R$. We will show that $\M_U= (S'_\all, s_0, \Act, \Pb')$ is a witness for $\prb^{\max}_{\M, s_0} (\lozenge\goal)\geq\lambda$, which immediately implies the same for $\M_R$. Let $\Ab'$ and $\bb'$ be as in Setting \ref{setting:mdp} for $\M_U$, and denote $S' = S'_\all\setminus\{\goal,\fail\}$. Let $\pb'$ be the projection of $\pb\in \R^{\M}$ to $\R^{\M_U}$.
	
	Then 
	\begin{equation}\label{eq:pb's}
		\pb'\cdot\bb' = \pb\cdot\bb\geq \lambda
	\end{equation}
	since $\bb'(s, \alpha) = \bb(s, \alpha)$ for all $(s, \alpha)\in \supp(\pb) = U$. Note also that for $s\in S'$ we get from $\pb\Ab\leq\delta_{s_0}$ 
	\begin{align*}
		\sum_{(s, \alpha)\in U} \pb'(s, \alpha)
		 & = \sum_{\alpha\in\Act(s)} \pb(s, \alpha)\\ 
		 &\leq \sum_{t\in S}\sum_{\beta\in \Act(t)} \Pb(t, \beta, s)\cdot \pb(t, \beta) +\delta_{s_0}(s) \\
		 &= \sum_{t\in S'}\sum_{(t,\beta)\in U} \Pb'(t, \beta, s)\cdot \pb'(t, \beta) +\delta_{s_0}(s) 
	\end{align*}
	which can concisely be written as
	\begin{equation}\label{eq:Ap's}
		\pb'\cdot \Ab'\leq \delta_{s_0}
	\end{equation}

	Take as in the proof of \Cref{lem:relaxation on y} the MR-scheduler $\S$ on $\M_U$ defined by
	\[ \S(s,\alpha) = \frac{\pb(s,\alpha)}{\sum_{\beta \in \Act(s)}\pb(s,\beta)}\]
	We consider the DTMC $\M_U^{\S}$ and we will show that the probability to reach $\goal$ in this DTMC is greater than $\lambda$ in order to show that $\M_U$ is a witness. As before denote the transition matrix of this DTMC $\Qb\in\R^{S'\times S'}$ and collect the probabilities to go from a state to $\goal$ in one step in $\cb\in \R^{S'}$. As in the proof of (a)$\implies$(b) we see that the matrix series $ \sum_{n \geq 0}  \Qb^n$ converges. Also
	\begin{equation}\label{eq QCP}
	\Qb = \Cb \cdot \Pb',
	\end{equation}
	where $\Cb\in\R^{S'\times\M_U}$ has entries 
	\[
	\Cb(s, (t, \alpha)) = \begin{cases} 
	\S(s, \alpha) & \mbox{if } s=t \\
	0  & \mbox{otherwise } \\\end{cases}
	\]
	Likewise we have
	\[  \cb = \Cb\cdot \bb'\]
	The matrix $\Cb$ has a left-inverse $\Bb\in\R^{\M_U\times S'}$ with entries
	\[
	\Bb((s, \alpha), t) = \begin{cases} 
	1/\S(s, \alpha) & \mbox{if } s=t \\
	0  & \mbox{otherwise } \\\end{cases}
	\]
	satisfying
	\begin{equation}\label{eq:Bcb'}
		\Bb\cdot\cb= \bb'
	\end{equation}
	If we let $\mathbf{J}\in\R^{\M_U\times S'}$ be the matrix with entries $\mathbf{J}((s,\alpha), t) = \delta_{st}$, then we have the relation $\Ab' = \Jb-\Pb'$. Notice also that $\Cb\cdot \Jb = \Ib_{S'}$. This implies with (\ref{eq QCP}) that
	\begin{equation*}
	\Cb\cdot\Jb\cdot \Qb^{k+1} =\Ib_{S'}\cdot \Qb^{k+1} = \Qb^{k+1} = \Cb\cdot \Pb'\cdot \Qb^k
	\end{equation*}
	and therefore we have the following telescope sum
	\begin{equation}\label{eq:telescope}
	\Cb\cdot\Ab'\cdot\left(\sum_{k\geq 0} \Qb^{k}\right) = \Cb\cdot(\Jb-\Pb')\cdot\left(\sum_{k\geq 0} \Qb^{k}\right) = \Ib_{S'}.
	\end{equation}
	
	Putting everything together allows us to calculate
	\begin{align*}
	\Pr_{\M_U, s_0}^{\S}(\lozenge\goal)&\overset{}{=} \delta_{s_0}\cdot \left( \sum_{k\geq 0} {\Qb}^k\right)\cdot \cb\\
	&\overset{\text{(\ref{eq:Ap's})}}{\geq} \pb'\cdot\Ab' \cdot \left( \sum_{k\geq 0} \Qb^k\right)\cdot \cb\\
	&=\pb'\cdot\Bb\cdot \Cb \cdot\Ab' \cdot \left(\sum_{k \geq 0}\Qb^k\cdot  \right)\cdot \cb\\
	&\overset{\text{(\ref{eq:telescope})}}=\pb'\cdot\Bb\cdot \Ib_{S'}\cdot\cb\\
	&\overset{\text{(\ref{eq:Bcb'})}}=\pb'\cdot\bb'\\
	&\overset{\text{(\ref{eq:pb's})}}\geq \lambda.
	\end{align*}
	
	\medskip
	(b)$\;\Longleftrightarrow\;$(c): This is the same argument as for (2)$\;\Longleftrightarrow\;$(3).
\qed\end{proof}

\verticestoMW*
\begin{proof}
	We prove the statement about $\P = \P^{\max}(\lambda)$, the dual statement follows along similar lines. For $\vb\in \R^{\M}$ we set $\supp_S(\vb) = \{s\in S\mid \exists\alpha\in\Act.\; \vb(s, \alpha)> 0\}$ .
	
	We begin with the 'if' part, and assume that there exists a vertex $\mathbf{w}$ of $\P$ with a strictly larger number of zeros than $\vb$. Since there is for every $s\in S' = \supp_S(\vb)$ only one pair $(s, \alpha)$ in $\supp(\vb)$, this implies that $|\supp_S(\mathbf{w})| < |\supp_S(\vb)|$. But by \Cref{thm:MCS-polytope}, (c) $\Longrightarrow$ (a), the subsystem $\M_{\supp(\mathbf{w})}$ is also a witness, and it would contain a strictly smaller number of states than $\M_{\supp(\vb)}$. Contradiction to the minimality of $\M_{\supp(\vb)}$.
	
	For the 'only if' part, let $\vb$ be a vertex of $\P$ with a maximal number of zeros. Again by \Cref{thm:MCS-polytope}, (c) $\Longrightarrow$ (a), $\M_{\supp(\vb)}$ is a witness. If it was not minimal, then there is a set $R\subseteq \M$ such that $\M_R$ is a witness with a strictly smaller number of states than $\M_{\supp(\vb)}$. Now the proof of \Cref{thm:MCS-polytope}, (a) $\Longrightarrow$ (b) $\Longrightarrow$ (c) provides a vertex $\mathbf{w}$ of $\P$ with $\supp(\mathbf{w})\subseteq R$ and for every $s\in\supp(\mathbf{w})$ there is precisely one $\alpha\in\Act$ such that $(s,\alpha)\in \supp(\mathbf{w})$. This implies 
	\begin{align*}
	|\supp(\mathbf{w})| &= |\supp_S(\mathbf{w})| \\
	&\leq \text{number of states in } \M_R \\
	&< \text{number of states in } \M_{\supp(\vb)} \\
	& \leq |\supp(\vb)|,
	\end{align*}
	which is a contradiction. Hence $\M_{\supp(\vb)}$ is a minimal witness.
	
	Almost the same argument shows that for every $s\in S'$ there is precisely one $\alpha\in\Act$: Otherwise one can again invoke the proof of \Cref{thm:MCS-polytope}, (a) $\Longrightarrow$ (b) $\Longrightarrow$ (c) applied to $\M_{\supp(\vb)}$ and an MD-scheduler $\S'$ attaining $\prb^{\max}_{\M_{\supp(\vb)}, s_0}(\lozenge\goal)\geq \lambda$ in order to obtain a vertex of $\P$ with a greater number of zeros than $\vb$. 
	
	Finally, the inequality $\Pr^\S_{\M_{\supp(\vb)}, s_0} (\lozenge\goal)\geq \vb\bb \geq \lambda$ follows with the same arguments as in the proof of \Cref{thm:MCS-polytope}, (b)$\implies$(a), so $\S$ is indeed a witnessing scheduler for $\prb^{\max}_{s_0}(\lozenge\goal)\geq \lambda$.
\qed\end{proof}

\section{Proofs for \Cref{sec:computing}}

\MILPformulation*
\begin{proof}
  Suppose that the conditions are satisfied for $\mathcal{P} = \{ \xb \mid \Ab \xb \leq \bb, \xb \geq 0 \} \subseteq \mathbb{R}^n$ and $K \geq 0$.

  We first show that for any optimal solution $(\boldsymbol{\sigma},\xb)$ of the MILP, we have $\xb(i) = 0$ if and only if $\boldsymbol{\sigma}(i) = 0$.
  If $\xb(i) = 0$ and $\boldsymbol{\sigma}(i) = 1$, we would get a better solution by setting $\boldsymbol{\sigma}(i) = 0$, contradicting the fact that $(\boldsymbol{\sigma},\xb)$ is optimal.
  If $\boldsymbol{\sigma}(i) = 0$, then $\xb(i)$ must be zero as $\xb(i) \leq K \cdot \boldsymbol{\sigma}(i)$.
  We write $\boldsymbol{\sigma}(\xb)$ for the vector that has a $1$ at every position where $\xb$ is greater than $0$, and $0$ otherwise.

  For every point $\pb \in \mathcal{P}$, $(\boldsymbol{\sigma}(\pb),\pb)$ satisfies the constraints of the MILP.
  To see that $\pb \leq K \cdot \boldsymbol{\sigma}(\pb)$, we observe that for all $i: \, K \cdot \boldsymbol{\sigma}(\pb)(i) \leq K$, and $K$ was chosen exactly such that for all $i: \, \xb(i) \leq K$ for all $\xb \in \mathcal{P}$.
  Now we can show the claim:

  ``$\implies$'': Let $(\boldsymbol{\sigma},\xb)$ be an optimal solution of the MILP.
  Clearly, $\xb \in \mathcal{P}$.
  By the above argument, $\boldsymbol{\sigma} = \boldsymbol{\sigma}(\xb)$.
  Suppose that there is another point $\pb$ in $\mathcal{P}$ with more zeros.
  But then $(\boldsymbol{\sigma}(\pb),\pb)$ is a solution of the MILP and $\boldsymbol{1} \cdot \boldsymbol{\sigma}(\pb) < \boldsymbol{1} \cdot \boldsymbol{\sigma}(\xb)$.
  Hence $(\boldsymbol{\sigma},\xb)$ is not an optimal solution to the MILP, contradicting the assumption.
  
  ``$\Longleftarrow$'': Let $\xb$ be a point in $\P$ with a maximal number of zeros.
  Then $(\boldsymbol{\sigma}(\xb),\xb)$ satisfies the constraints of the MILP.
  Furthermore, it is an optimal solution, as a better solution would contradict the maximality of the number of zeros in $\xb$.
\end{proof}

\section{Polynomial algorithm in the tree-shaped case}

In this section we show that a minimal witness for the property $\Pr_{\M}(\lozenge \goal) \geq \lambda$ can be computed in polynomial time for tree-shaped DTMCs, given that it exists. Here \emph{tree-shaped} refers to the property that the underlying graph of $\M$ excluding $\goal$ and $\fail$, i.e. the graph with vertices $V=S$ and edges $E = \{(s,t) \in S \times S \mid \Pb(s,t) > 0\}$, is a tree.
The algorithm has two steps: first we reduce a tree-shaped DTMC to the special case of a binary tree-shaped DTMC. Then, we provide an algorithm for the binary case whose result can be translated back to the general case.

We consider, as in Setting~\ref{setting:mdp}, DTMCs with distinguished, absorbing states $\goal$ and $\fail$.
The predicates \emph{acyclic} and \emph{binary} refer to the underlying graph the DTMC excluding $\goal$ and $\fail$.

\subsubsection{Binarization of Markov chains.}

We first give a transformation of a Markov chain into a binary Markov chain that preserves the probability to reach $\goal$.

\begin{definition}[Binarization]
  \label{def:binarization}
	Let $\mathcal{M} = (S_{\all},s_0,\Pb)$ be a DTMC as in Setting~\ref{setting:mdp} and $<$ a total order on $S$ such that $\goal$ is its minimal element.
  For every state $q \in S$, let $\sucs(q)=\{s\in S_\all \mid \Pb(q,s) > 0\}$.
	We define $\mathcal{B}(\mathcal{M},<) = (S_{\all}',s_0,\Pb')$ by giving a local transformation that is applied to all states $q \in S_{\all}$ with $|\sucs(q)| > 2$.
  States with less than three successors stay as they are.

  Let $q \in S_{\all}$, $\sucs(q) = \{s_0,s_1,\ldots,s_n\}$ be ordered according to $<$ and define $\mu_i = \Pb(q,s_i)$ for $0 \leq i \leq n$.
  Take $n{-}1$ fresh states $u_1,\ldots,u_{n{-}1}$.
  The new transition probabilities are defined as follows, where we identify $q$ with $u_0$: 
  \begin{align*}
    &\Pb'(u_j,s_j) = \frac{\mu_j}{1{-}\sum_{0 \leq i < j}\mu_i} \;\;\text{, for}\;\; 0 \leq j < n \\
    &\Pb'(u_j,u_{j+1}) = 1{-}\Pb'(u_j,s_j) \;\;\text{, for}\;\; 0 \leq j < n{-}1 \\
    &\Pb'(u_{n{-}1},s_{n}) = \frac{\mu_{n}}{1{-}\sum_{0 \leq i < n{-}1}\mu_i}
  \end{align*}
\end{definition}

\begin{remark}
  The condition that $\goal$ is the minimal element of $<$ implies for every $q \in S$ that if $\goal \in \sucs(q) = \{s_0,s_1,\ldots,s_n\}$, then $\goal = s_0$.
  This makes sure that for every state $u \in U$ the probability to reach $\goal$ in one step is zero, i.e. $\Pb'(u,\goal) = 0$.
\end{remark}

The resulting Markov chain is binary by construction and its state space consists of the old states (with adapted outgoing transitions) and the states added by the construction. 
In what follows, we fix any total order $<$ such that $\goal$ is the minimal element of $<$ and write $\mathcal{B}(\mathcal{M}) = \mathcal{B}(\mathcal{M},<)$.
\Cref{fig:binarization} shows how the transformation works.

\begin{figure}[h]
	\captionsetup{width=0.8\linewidth, labelfont={bf, up}}
	\begin{subfigure}[]{0.5\columnwidth}
		\centering
		\scalebox{1.}{
		\begin{tikzpicture}[->,>=stealth',shorten >=1pt,auto,node distance=0.5cm, semithick]
		
		\node[scale=0.8, state] (In) {$q$};
		\node[scale=0.8, state] (s0) [below left = 1.5 and 1.8 of In] {$s_0$};
		\node[scale=0.8, state] (s1) [below left = 1.5 and 0.6 of In]{$s_1$};
		\node[scale=0.8, state] (s2) [below right = 1.5 and 0.6 of In]{$s_2$};
		\node[scale=0.8, state] (s3) [below right = 1.5 and 1.8 of In]{$s_3$};

		\draw (In) -- (s0) node[pos=0.5,scale=0.8, xshift=-8mm, yshift=2mm] {$0.4$};
		\draw (In) -- (s1) node[pos=0.5,scale=0.8,  xshift=-8mm, yshift=-2mm] {$0.3$};
		\draw (In) -- (s2) node[pos=0.5,scale=0.8,  xshift=1mm, yshift=-6.5mm] {$0.2$};
		\draw (In) -- (s3) node[pos=0.5,scale=0.8,  xshift=1mm, yshift=-3mm] {$0.1$};
		
		\end{tikzpicture}
	}
		\subcaption{}
		\label{fig:successors}
	\end{subfigure}
	\begin{subfigure}[]{0.5\columnwidth}
		\centering
		\scalebox{1.}{
			\begin{tikzpicture}[->,>=stealth',shorten >=1pt,auto,node distance=0.5cm, semithick]
			
			\node[scale=0.8, state] (In) {$q$};
			\node[scale=0.8, state] (s0) [below left = 0.7 and 0.5 of In] {$s_0$};
			\node[scale=0.8, state] (u1) [below right = 0.7and 0.5 of In]{$u_1$};
			\node[scale=0.8, state] (s1) [below left = 0.7 and 0.5 of u1]{$s_1$};
			\node[scale=0.8, state] (u2) [below right = 0.7 and 0.5 of u1]{$u_2$};
			\node[scale=0.8, state] (s2) [below left = 0.7 and 0.5 of u2]{$s_2$};
			\node[scale=0.8, state] (s3) [below right = 0.7 and 0.5 of u2]{$s_3$};
			
			\draw (In) -- (s0) node[pos=0.5,scale=0.8, xshift=-8mm, yshift=2.8mm] {$0.4$};
			\draw (In) -- (u1) node[pos=0.5,scale=0.8, xshift=-0mm, yshift=-2mm] {$0.6$};
			\draw (u1) -- (s1) node[pos=0.5,scale=0.8,  xshift=-7mm, yshift=3mm] {$0.5$};
			\draw (u1) -- (u2) node[pos=0.5,scale=0.8,  xshift=0mm, yshift=-1.5mm] {$0.5$};
			\draw (u2) -- (s2) node[pos=0.5,scale=0.8,  xshift=-8mm, yshift=3mm] {$2/3$};
			\draw (u2) -- (s3) node[pos=0.5,scale=0.8,  xshift=0mm, yshift=-3mm] {$1/3$};
			
			\end{tikzpicture}
			
		}
		\subcaption{}
		\label{fig:binary}
	\end{subfigure}
	\caption{Illustration of the local transformation of \Cref{def:binarization} from a state with four successors (\subref{fig:successors}) to its binarization (\subref{fig:binary}).}\label{fig:binarization}
\end{figure}

\begin{remark}
  If we start with a tree-shaped Markov chain, the result of the above construction is also tree-shaped.
  The number of states of $\mathcal{B}(\M)$ is bounded by the number of states plus the number of transitions of $\M$.
\end{remark}

\begin{lemma}
	\label{lem:propbm}
  There is a probability preserving one-to-one correspondence of paths in $\M = (S_{\all},s_0,\Pb)$ and paths in $\mathcal{B}(\M) = (S_{\all} \cup U,s_0,\Pb')$, where $U$ is the set of all fresh states added by the construction.
\end{lemma}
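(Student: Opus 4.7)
The plan is to construct an explicit bijection block-by-block and verify that it preserves probabilities via a telescoping calculation. First I would observe that the binarization acts locally: it only alters states $q \in S$ with $|\sucs(q)| \geq 3$, and for each such $q$ it inserts a chain of fresh states $u_1,\ldots,u_{n-1}$ (all of them in $U$) whose only role is to route to the successors $s_0,\ldots,s_n$ of $q$ in $\M$. Hence every finite path in $\mathcal{B}(\M)$ decomposes canonically into \emph{blocks}: each block either consists of a single transition between two ``old'' states (when the source has at most two successors), or of a traversal $q = u_0, u_1, \ldots, u_j, s_j$ for some $0 \leq j \leq n$ that enters a chain at the old state $q$ and leaves it at the successor $s_j \in \sucs(q)$.

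Given this block decomposition, I would define the correspondence as follows. A (finite or infinite) path $\pi = q_0 q_1 q_2 \ldots$ in $\M$ is sent to the path in $\mathcal{B}(\M)$ obtained by replacing every step $q_i q_{i+1}$ for which $|\sucs(q_i)| \geq 3$ by the corresponding chain traversal $q_i u_1 \ldots u_j q_{i+1}$ where $q_{i+1} = s_j$ in the ordering on $\sucs(q_i)$; all other steps are left unchanged. The inverse map contracts every maximal subpath that lies in the auxiliary states of a single block back to a single step between old states. The only subtle point is to argue that an infinite path in $\mathcal{B}(\M)$ cannot get stuck in the fresh states of one block: this holds because each chain $u_1 \to u_2 \to \cdots \to u_{n-1}$ is finite and has no backward or self-transitions, so every visit to it must exit after at most $n-1$ steps into some $s_j$.

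Second, I would verify probability preservation on a single block of the form $q u_1 \ldots u_j s_j$. Using the definition of $\Pb'$, each intermediate factor equals
\[
\Pb'(u_i, u_{i+1}) \;=\; 1 - \frac{\mu_i}{1 - \sum_{k<i}\mu_k} \;=\; \frac{1 - \sum_{k\leq i}\mu_k}{1 - \sum_{k<i}\mu_k},
\]
so the product $\prod_{i=0}^{j-1} \Pb'(u_i,u_{i+1})$ telescopes to $1 - \sum_{k<j}\mu_k$; multiplying by $\Pb'(u_j, s_j) = \mu_j/(1-\sum_{k<j}\mu_k)$ yields exactly $\mu_j = \Pb(q, s_j)$. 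The boundary case $j=n$ is handled by the same calculation, using that $\Pb'(u_{n-1}, s_n) = \mu_n / (1 - \sum_{k<n-1}\mu_k)$.

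Finally, I would extend block-wise preservation to full paths by multiplying over successive blocks, giving the statement for cylinder sets $\mathrm{Cyl}(\tau)$ on finite prefixes; since cylinders generate the $\sigma$-algebra on infinite paths, the bijection is then probability-preserving as a measurable map. The main obstacle is really only the bookkeeping for the boundary block (the $j = n$ case, whose denominator differs from the generic pattern) and making sure infinite paths in $\mathcal{B}(\M)$ genuinely decompose into countably many finite blocks; both are handled by the finiteness and forward-only structure of the inserted chains.
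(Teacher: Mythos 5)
Your proof is correct and takes essentially the same route as the paper's: both decompose paths into blocks of the form ``old state, chain of fresh states, old state'' and use the same telescoping product to show that a traversal $q u_1 \ldots u_j s_j$ has probability exactly $\Pb(q,s_j)$, with the same special treatment of the final chain state. Your additional remarks on infinite paths and cylinder sets only make explicit what the paper leaves implicit.
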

\begin{proof}
  From a path $\pi$ in $\mathcal{B}(\M)$ we get a path $\pi'$ in $\M$ by removing from $\pi$ all states in $U$.
  We show that $\Pr_{\mathcal{B}(\M)}(\pi) = \Pr_\M(\pi')$.
  Let $S = S_{\all} \setminus \{\goal,\fail\}$.
  It suffices to show this for all paths of the form $SU^+S$, so let $\pi = q u_1 \ldots u_n s$ with $q,s \in S$, $u_i \in U$ for $1 \leq i \leq n$ and $n \geq 1$.
  As in Definition~\ref{def:binarization}, let $\sucs(q) = \{s_0, s_1, \ldots s_m\}$ be ordered according to $<$, with $m > n$, $\mu_i = \Pb(q,s_i)$ for all $s_i \in \sucs(q)$ and $s = s_{l}$.
    By the way the binarization was defined, we get that $l \in \{n,n+1\}$.
  The construction now gives us:
  \begin{equation*}
    \Pr_{\mathcal{B}(\M)}(q u_1\ldots u_n s_l) =  \prod_{0 \leq j < n}\left(1{-}\left(\frac{\mu_j}{1{-}\sum_{0 \leq i < j}\mu_i}\right)\right)\cdot\frac{\mu_l}{1{-}\sum_{0 \leq i < n}\mu_i} \\
  \end{equation*}
  By induction, one can see that for all $n \geq 1$:
  \[\prod_{0 \leq j < n}\left(1{-}\left(\frac{\mu_j}{1{-}\sum_{0 \leq i < j}\mu_i}\right)\right) = 1{-}\sum_{0 \leq i < n}\mu_i\]
  and hence
  \[ \Pr_{\mathcal{B}(\M)}(q u_1\ldots u_n s_l) = \mu_l = \Pb(q,s_l)\]

  For the other direction, we observe that given $s,q$ such that $s \in \sucs(q)$ in $\M$, there is a unique path of the form $SU^{*}S$ in $\mathcal{B}(\M)$ that starts in $s$ and ends in $q$.
  By the same reasoning as above, this path has probability $\Pb(q,s)$.
  Hence, by padding a path $\pi$ in $\M$ with the corresponding states of $U$ for every step, we get a path in $\mathcal{B}(\M)$ with the same probability.
\qed\end{proof}

The following lemma relates witnessing subsystems of $\M$ and $\mathcal{B}(\mathcal{M})$.

\begin{lemma}
	\label{lem:mwsfortrees}
	Let $\mathcal{M} = (S_{\all},s_0, \Pb)$ be an acyclic Markov chain as in Setting~\ref{setting:mdp}.
  The following two statements are equivalent:
  \begin{enumerate}
	\item $\mathcal{M}$ has a subsystem $\mathcal{M}'$ with $k$ reachable states satisfying $\Pr_{\mathcal{M}'} (\lozenge\goal)\geq \lambda$.
 \item $\mathcal{B}(\mathcal{M}) = (S_{\all} \cup U,\Pb')$ has a subsystem $\mathcal{B}'$ with reachable states $S_{\all}' \cup U'$, where $S_{\all}' \subseteq S_{\all}, U' \subseteq U$, satisfying $\Pr_{\mathcal{B}'}(\lozenge\goal) \geq \lambda$ and $|S_{\all}'| = k$.
   \end{enumerate}
\end{lemma}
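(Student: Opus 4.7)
My plan is to leverage the probability-preserving path bijection between $\M$ and $\mathcal{B}(\M)$ from Lemma~\ref{lem:propbm} and to lift it to the level of subsystems. Since the binarization construction acts purely locally on each state $q \in S$, this lift amounts to careful bookkeeping at each such $q$.

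For $(1) \Rightarrow (2)$, given a subsystem $\M' = (S'_\all, s_0, \Pb')$ of $\M$ with $k$ reachable states I will construct $\mathcal{B}' \subseteq \mathcal{B}(\M)$ by the following local rule at each $q \in S' \cap S'_\all$: let $\{s_0, \ldots, s_n\}$ be the successors of $q$ in $\M$ ordered by $<$, and let $j^*$ be the largest index for which the transition $q \to s_{j^*}$ is retained in $\M'$ (set $j^* = 0$ if no such index exists). I will include the states $u_1^q, \ldots, u_{\min(j^*,\,n-1)}^q$ in $\mathcal{B}'$ and keep each branch $u_j^q \to s_j$ together with the chain links $u_j^q \to u_{j+1}^q$ exactly when the corresponding transition $q \to s_j$ survives in $\M'$; all remaining outgoing mass is redirected to $\fail$. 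The product-of-probabilities computation inside the proof of Lemma~\ref{lem:propbm} shows that the surviving chain from $q$ to $s_j$ in $\mathcal{B}'$ carries total probability $\Pb_\M(q, s_j)$, so paths reaching $\goal$ in $\M'$ and their padded counterparts in $\mathcal{B}'$ are in probability-preserving bijection. Hence $\Pr_{\mathcal{B}'}(\lozenge\goal) = \Pr_{\M'}(\lozenge\goal) \geq \lambda$ and the reachable states of $\mathcal{B}'$ have the required form $S'_\all \cup U'$ with $|S'_\all| = k$.

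For $(2) \Rightarrow (1)$, I will define $\M'$ on the state set $S'_\all$ by inheriting a transition $q \to s$ with probability $\Pb_\M(q,s)$ whenever the entire $u$-chain from $q$ to $s$ prescribed by Definition~\ref{def:binarization} is present in $\mathcal{B}'$, and redirecting all remaining outgoing mass to $\fail$. That this produces a valid subsystem of $\M$ follows from the same product formula, and the path bijection immediately yields $\Pr_{\M'}(\lozenge\goal) = \Pr_{\mathcal{B}'}(\lozenge\goal) \geq \lambda$. For the state count, every $s \in S'_\all$ is reachable in $\mathcal{B}'$ from $s_0$ along a path whose maximal $u$-blocks are fully retained in $\mathcal{B}'$; each such block translates into a kept transition in $\M'$, so $s$ is reachable in $\M'$ as well, giving exactly $k$ reachable states.

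The main technical obstacle I anticipate is the uniform description of the forward construction across the boundary cases $j^* = n$ (where $u_{n-1}^q$ branches to two successors rather than one successor and the chain continuation) versus $j^* < n$, together with the degenerate case in which every $q$-transition in $\M'$ is redirected to $\fail$. Each case is routine once isolated, but writing a single clean construction requires care; everything else is an application of Lemma~\ref{lem:propbm}.
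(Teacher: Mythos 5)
Your overall route is the same as the paper's: both arguments lift the probability-preserving path correspondence of Lemma~\ref{lem:propbm} to subsystems, and your backward direction (collapse every fully retained $u$-chain of $\mathcal{B}'$ to the corresponding transition of $\M$, redirect everything else to $\fail$, and transfer goal-reaching paths one-to-one) is sound; the paper phrases the same idea more coarsely by comparing the reachability vectors $(\Pr_{\cdot,s_0}(\lozenge p))_{p}$ in the two acyclic chains.

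Your forward construction, however, has a concrete flaw as written. You keep the branch $u_j^q \to s_j$ \emph{and} the chain link $u_j^q \to u_{j+1}^q$ ``exactly when $q \to s_j$ survives in $\M'$''. If an intermediate branch is deleted while a later one survives, this rule cuts the chain and loses surviving probability mass. Concretely, let $\sucs(q) = \{s_0,s_1,s_2,s_3\}$, so that $\mathcal{B}(\M)$ has transitions $q \to s_0$, $q \to u_1$, $u_1 \to s_1$, $u_1 \to u_2$, $u_2 \to s_2$, $u_2 \to s_3$, and suppose $\M'$ deletes only $q \to s_1$. Then $j^* = 3$ and you include $u_1,u_2$, but your rule removes the chain link $u_1 \to u_2$ (since $q \to s_1$ does not survive), so $s_2$ and $s_3$ are unreachable from $q$ in $\mathcal{B}'$ even though $q \to s_2$ and $q \to s_3$ survive in $\M'$. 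The product computation of Lemma~\ref{lem:propbm} then no longer yields $\Pb(q,s_j)$ for these branches, $\Pr_{\mathcal{B}'}(\lozenge\goal)$ can drop strictly below $\Pr_{\M'}(\lozenge\goal)$, and the threshold $\lambda$ may be violated. The repair is immediate and essentially yields the paper's construction: retain \emph{all} chain links up to $u_{j^*}$ regardless of which intermediate branches survive (the paper simply keeps every $u$-state lying on a path between retained states of $S_\all$ and redirects only the branch transitions into deleted successors to $\fail$); then each surviving branch $q \to s_j$ is reached through the intact chain with total probability $\Pb(q,s_j)$, and your bijection argument, as well as the state counting in both directions, goes through.
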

\begin{proof}
	``$\implies$'': Let $\mathcal{M'} = (S_{\all}',s_0,\Pb')$ be a subsystem of $\M$ s.t. $|S_{\all}'| = k$ and satisfying $\Pr_{\mathcal{M}'} (\lozenge\goal)\geq \lambda$ and assume that all states in $S_{\all}'$ are reachable in $\M'$.

	We construct a subsystem $\mathcal{B}'$ of $\mathcal{B}(\mathcal{M})$ by taking the states $S_{\all}'$ and adding all states $U' \subseteq U$ that lie on some path $((S_{\all}')^* \,U^*)^*$ of $\mathcal{B}(\mathcal{M})$.
  If a state $u \in U'$ has a transition to a state $s \in S_{\all} \setminus S_{\all}'$, this transition is redirected to $\fail$.
	
	We get a subsystem $\mathcal{B}'$ of $\mathcal{B}(\mathcal{M})$ with states $S_{\all}' \cup U'$.
	We verify that $\Pr_{\mathcal{B}'} (\lozenge\goal)= \Pr_{\mathcal{M}'}(\lozenge\goal)$ holds, where $S' = S'_{\all} \setminus \{\goal,\fail\}$:
	\[\Pr_{\mathcal{B}'} (\lozenge\goal)= \left(\Pr_{\mathcal{B}',s_0}(\lozenge p)\right)_{p \in S'} \cdot \bb' = \left(\Pr_{\mathcal{M}',s_0}(\lozenge p)\right)_{p \in S'} \cdot \bb' = \Pr_{\mathcal{M}'}(\lozenge\goal)\]
  This implies $\Pr_{\mathcal{B}'} (\lozenge\goal)\geq \lambda$.
  The first and last equivalence use the fact that the Markov chains are acyclic. The second equivalence uses \Cref{lem:propbm} and the fact that no state in $U$ has a direct transition to $\goal$.
  Here, $\bb' = (\Pb'(q,\goal))_{q \in S'}$, as in Setting~\ref{setting:mdp}.

	``$\Longleftarrow$'':
	Let $\mathcal{B}'$ be a subsystem of $\mathcal{B}$ with states $S' \cup U'$ and $\Pr_{\mathcal{B}'} (\lozenge\goal)\geq \lambda$.
	Take the subsystem of $\mathcal{M'}$ induced by $S'$.
	By the same calculation as above we get $\Pr_{\mathcal{M}'}(\lozenge\goal) = \Pr_{\mathcal{B}'}(\lozenge\goal)$.
\qed\end{proof}

\subsubsection{Minimal witnessing subsystems for tree-shaped binary DTMCs.}

Let $\mathcal{B} = (S_{\all} \cup U,s_0,\Pb)$ be an acyclic binary Markov chain such that $\Pb(u,\goal) = 0$ for all $u \in U$ and $S_{\all} \cap U = \varnothing$.
Let $S = S_{\all} \setminus \{\goal,\fail\}$ and $\bb = (\Pb(q,\goal))_{q \in S}$.
We define $|q|$ to be the number of states reachable from $q$ and $|q|_S = |q| \cap S$.

We now give an algorithm that takes a tree-shaped binary DTMC $\mathcal{B}$, $k \in \mathbb{N}$ and $\lambda \in [0,1]$ and computes in polynomial time a subsystem $\mathcal{B}'$ with $k$ states in $S$ such that $\Pr_{\mathcal{B'}, s_0} (\lozenge\goal)\geq \lambda$, given that such a subsystem exists.

The idea is to compute a function $l_q : \{0,\ldots,|q|_S\} \to [0,1]$ for every state $q$ in $\mathcal{B}$ with the following interpretation:
$l_q(i)$ describes how much probability can be achieved in state $q$ with a subsystem that is rooted in $q$ and contains $i$ states in $S$.
If $l_{s_0}(k) \geq \lambda$, we know that there is a subsystem with $k$ states in $S$ with a probability to reach $\goal$ of at least $\lambda$.
In case that we wish to compute the subsystem, we can save the corresponding subsystem for each entry $l_{s_0}(k)$.

We compute $l$ bottom-up as follows:
first, $l_q(0) = 0$ for all states $q$ apart from $\goal$, which has $l_{\goal}(0) = 1$.
If $q \in S$ is a leaf, then $l_q(1) = \bb(q)$ and if $q$ has exactly one successor $q'$, then $l_q(i + 1) = l_{q'}(i)$, for all $0 \leq i \leq |q'|_S$.

Otherwise, suppose that $q$ has two successors $q_1,q_2$ with transition probabilities $\mu_1,\mu_2$.
If $q \in S$, we compute for $0 \leq i < |q|_S$:
\begin{align}
\begin{split}
\label{eqn:lone}
l_q(i + 1) =
\max \{ &\mu_1 \cdot l_{q_1}(j) + \mu_2 \cdot l_{q_2}(i{-}j) \mid \\
&0 \leq j \leq i, \; j \leq |q_1|_S, \; i{-}j \leq |q_2|_S\}
\end{split}
\end{align}
For $q \in U$, we do not count the state $q$ in the subsystem and hence we set (for $0 \leq i \leq |q| {-} 1$):
\begin{align}
\begin{split}
\label{eqn:ltwo}
l_q(i) = 
\max \{ &\mu_1 \cdot l_{q_1}(j) + \mu_2 \cdot l_{q_2}(i{-}j) \mid \\
&0 \leq j \leq i, \; j \leq |q_1|_S, \; i{-}j \leq |q_2|_S \}
\end{split}
\end{align}

\begin{proposition}
  \label{prop:complexity}
  Let $\M = (S_{\all} \cup U,s_0,\Pb)$ be an acyclic binary DTMC as in Setting~\ref{setting:mdp} and let $S = S_{\all} \setminus \{\goal,\fail\}$.
  The functions $l_q : \{0,\ldots,q_{|S|}\} \to [0,1]$ can be computed in time $O(|S \cup U|^3)$ for every $q \in S$.
\end{proposition}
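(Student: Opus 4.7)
The plan is to compute the tables $l_q$ in a bottom-up topological order on the (reverse of the) underlying tree. Because each recurrence for $l_q$ references only $l_{q'}$ for direct successors $q'$ of $q$, processing states in this order guarantees that all needed values are already available when we compute $l_q$. Because the DTMC is tree-shaped and acyclic, such an order exists and can be produced by a single linear-time traversal.

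The crux is a per-node cost analysis. If $q$ is a leaf of the tree, $l_q$ has at most two entries and is filled in $O(1)$ time. If $q$ has exactly one successor $q'$, the recurrence amounts to a shift $l_q(i+1) = l_{q'}(i)$ (or a variant depending on whether $q \in S$), costing $O(|q'|_S + 1)$. The substantive case is when $q$ has two successors $q_1, q_2$ with $a = |q_1|_S$ and $b = |q_2|_S$. The point I would emphasize is that, for each $i$, the maximum in (\ref{eqn:lone})/(\ref{eqn:ltwo}) ranges over $j$ with $0 \le j \le i$, $j \le a$, $i - j \le b$, a set of size $O(\min(i+1, a+1, b+1))$. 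Summing over $i$ from $0$ up to $|q|_S \le a + b + 1$ telescopes to a total of $O((a+1)(b+1))$ operations at node $q$.

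For the global bound I would sum the per-node costs over all internal nodes. Since the tree has at most $n = |S \cup U|$ internal nodes and each contributes $O(n^2)$ in the worst case, the overall running time is $O(n^3)$, which matches the claim. A sharper accounting that charges each ordered pair of descendants of $q$ to the unique lowest common ancestor where they first meet gives $\sum_q (|q_1|_S + 1)(|q_2|_S + 1) = O(n^2)$, but the cubic bound stated in the proposition already falls out of the cruder sum.

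The main obstacle is the per-entry cost for two-child states: one must exploit the constraints $j \le a$ and $i - j \le b$ to cap the range of $j$ by $\min(i+1, a+1, b+1)$ rather than the naive $i+1$; otherwise the product $(a+1)(b+1)$ gets inflated to $(a+b)^2$ and the argument loses a factor. Beyond that, the only thing to verify is that every access $l_{q_1}(j)$ and $l_{q_2}(i-j)$ indexes into an already-computed domain, which is immediate from the identity $|q|_S = |q_1|_S + |q_2|_S + [q \in S]$ and the in-range constraints imposed on $j$.
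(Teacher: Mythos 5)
Your proof is correct and follows essentially the same bottom-up dynamic-programming analysis as the paper's, with your per-node bound $O\bigl((|q_1|_S+1)(|q_2|_S+1)\bigr)$ being a refinement of the paper's cruder per-state estimate. Note, however, that the refinement you call the ``main obstacle'' is not actually needed for the stated bound: bounding each maximum naively by $i+1 \le |q|_S$ already gives $O\bigl((|q|_S)^2\bigr)$ per state and hence $O(|S \cup U|^3)$ overall, which is precisely the paper's argument, so no factor is lost relative to the claim (your sharper charging scheme only matters if one wants the stronger $O(n^2)$ total).
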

\begin{proof}
  The functions $l_q$ are computed bottom-up by using Equations~\ref{eqn:lone}~and~\ref{eqn:ltwo}.
  Computing $l_q(i)$ requires to compute the biggest out of at most $i+1$ values, hence requiring $O(|q|_S)$ computations as $i + 1 \leq |q|_S$.
  Hence, computing the vector $l_q$, which has $|q|_S$ entries, can be done in $O\left(\left(|q|_S\right)^2\right)$ time.
  As this has to be computed for every state in $|S \cup U|$, and $|q|_S \leq |S \cup U|$, $l$ can be computed in $O(|S \cup U|\cdot|S|^2)$ time.
\qed\end{proof}

\begin{lemma}
	\label{lem:bintreesubsys}
  Let $\mathcal{B} = (S_{\all} \cup U,\Pb)$ be a tree-shaped binary DTMC as in Setting~\ref{setting:mdp} such that $\Pb(u,goal) = 0$ for all states $u \in U$ and $S_{\all} \cap U = \varnothing$.
  Then
  \begin{itemize}
	\item[(i)] If $\mathcal{B}$ has a subsystem $\mathcal{B}'$ with $k$ states in $S$ and $\Pr_{\mathcal{B}', s_0} (\lozenge\goal)\geq \lambda$, then $l_{s_0}(k) \geq \lambda$.
  \item[(ii)] If $l_{s_0}(k) = \theta$, then a subsystem $\mathcal{B}'$ of $\mathcal{B}$ with $k$ states and $\Pr_{\mathcal{B}', s_0} (\lozenge\goal) = \theta$ can be computed in polynomial time.
  \end{itemize}
\end{lemma}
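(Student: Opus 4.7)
The plan is to prove both parts by bottom-up induction over the subtree rooted at each state $q$, under the strengthened inductive hypothesis that $l_q(i)$ equals the maximum, over all subsystems of the subtree $\mathcal{T}_q$ that contain $q$ and exactly $i$ states of $S$ (with $q$ itself not counting toward $i$ when $q\in U$), of the probability $\Pr_{\cdot,q}(\lozenge\goal)$ in that subsystem. The boundary values $l_q(0)=0$ for $q\neq\goal$ and $l_{\goal}(0)=1$ are the right encoding of the convention that omitting a child from the subsystem redirects its incoming transition to $\fail$ and contributes $0$ to reachability, while a direct transition to $\goal$ contributes~$1$.

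First I would verify the base case: for a leaf $q\in S$ all outgoing mass lies on $\goal$ and $\fail$, so the only non-trivial entry is $l_q(1)=\Pb(q,\goal)=\bb(q)$, matching the unique subsystem equal to $\{q\}$. The one-successor case is a trivial restatement. For a state $q$ with two successors $q_1,q_2$ of probabilities $\mu_1,\mu_2$, the \emph{tree-shape} assumption is crucial: the subtrees $\mathcal{T}_{q_1}$ and $\mathcal{T}_{q_2}$ are state-disjoint, so any subsystem of $\mathcal{T}_q$ containing $q$ decomposes uniquely as $\{q\}$ together with a subsystem of $\mathcal{T}_{q_1}$ and one of $\mathcal{T}_{q_2}$, and first-step analysis yields
\[\Pr_{q}(\lozenge\goal)=\mu_1\cdot\Pr_{q_1}(\lozenge\goal)+\mu_2\cdot\Pr_{q_2}(\lozenge\goal)\]
in that subsystem. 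Maximising over all legal budget splits and invoking the inductive hypothesis on $q_1,q_2$ then yields exactly the recurrences~(\ref{eqn:lone}) and~(\ref{eqn:ltwo}), where the $+1$ shift in~(\ref{eqn:lone}) accounts for counting $q$ itself when $q\in S$.

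Part~(i) follows by specialising to $q=s_0$, whose subtree is all of $\mathcal{B}$: any witnessing subsystem of size $k$ is one of the subsystems considered in the definition of $l_{s_0}(k)$, hence $l_{s_0}(k)\geq\lambda$. For part~(ii) I would augment the dynamic program to record, with each entry $l_q(i)$, an argmax split $j^{*}$ realising the maximum. A subsystem realising $l_{s_0}(k)=\theta$ is then assembled by a single top-down traversal starting at $(s_0,k)$: at each visited pair $(q,m)$ include $q$ and recurse into the two children with the budgets prescribed by the stored split, stopping at a pair $(q,0)$ (which means ``cut the branch''). The traversal visits each state at most once, so together with the $O(|S\cup U|^3)$ cost of precomputing $l$ from Proposition~\ref{prop:complexity}, the overall procedure runs in polynomial time.

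The main obstacle is justifying the decomposition in the two-successor case: one must check that the optimum subsystem of $\mathcal{T}_q$ containing $q$ is attained by choosing optimum sub-subsystems \emph{independently} in each child subtree, and that the boundary conventions $l_q(0)$ render ``not including a child'' a valid element of the $\max$ with the correct contribution (including the subtle case that a branch going directly to $\goal$ still counts with weight $\mu\cdot 1$ at zero cost). Once the decomposition is in place, both (i) and the correctness of the reconstruction in (ii) are routine bookkeeping.
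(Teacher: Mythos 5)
Your proposal is correct and follows essentially the same route as the paper's proof: a bottom-up induction over the tree that exploits the disjointness of the child subtrees and first-step analysis to match the recurrences, with part~(ii) obtained by recording argmax splits and reconstructing the subsystem top-down in polynomial time on top of \Cref{prop:complexity}. The only difference is presentational --- you package the two directions as a single exact-optimality invariant for $l_q(i)$, whereas the paper proves the domination inequality (i) and the constructive achievability (ii) as two separate inductions --- so no new idea is involved.
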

\begin{proof}
	\emph{(i)}: Let $\mathcal{B}'$ be a subsystem with $k$ states in $S$ and $\Pr_{\mathcal{B}'}(\lozenge\goal) \geq \lambda$.
	As before, let $|q|_S$ denote the number of states in $S$ reachable from $q \in S \cup U$ in $\mathcal{B}'$.
	We show for all states in $\mathcal{B}'$, by induction on their height, that $l_q(|q|_S) \geq \Pr_{\mathcal{B}', q}(\lozenge \goal)$.
	\begin{enumerate}
		\item{Suppose that $q$ is a leaf.
			If $q \in U$, we have
			\[l_q(|q|_S) = l_q(0) = 0 = \Pr_{\mathcal{B}',q}(\lozenge\goal)\]
			If $q \in S$, we have 
			\[l_q(|q|_S) = l_q(1) = \Pb(q,\goal) = \Pr_{\mathcal{B}', q}(\lozenge\goal)\]
		}
		\item{Suppose that $q$ has two successors $q_1,q_2$ that satisfy the property to prove and are reached with probability $\mu_1,\mu_2$.
			
			If $q \in U$, we have $|q|_S = |q_1|_S + |q_2|_S$ and by (\ref{eqn:ltwo}):
			\begin{align*}
			l_q(|q|_S) \geq & \,\,  \mu_1 \cdot l_{q_1}(|q_1|_S) + \mu_2 \cdot l_{q_2}(|q|_S {-} |q_1|_S) \\
			= & \, \, \mu_1 \cdot l_{q_1}(|q_1|_S) + \mu_2 \cdot l_{q_2}(|q_2|_S) \\
			\geq & \,\, \mu_1 \cdot \Pr_{\mathcal{B}',q_1}(\lozenge\goal) + \mu_2 \cdot \Pr_{\mathcal{B}',q_2} (\lozenge\goal) \quad \text{(I.H.)}\\
			= & \,\, \Pr_{\mathcal{B}',q}(\lozenge\goal)
			\end{align*}
			
			If $q \in S$, we have $|q|_S = |q_1|_S + |q_2|_S + 1$ and by (\ref{eqn:lone}):
			\begin{align*}
			l_q(|q|_S) \geq & \,\,  \mu_1 \cdot l_{q_1}(|q_1|_S) + \mu_2 \cdot l_{q_2}((|q|_S {-} 1) {-} |q_1|_S) + \bb(q)\\
			= & \, \, \mu_1 \cdot l_{q_1}(|q_1|_S) + \mu_2 \cdot l_{q_2}(|q_2|_S)  + \bb(q) \\
			\geq & \,\, \mu_1 \cdot \Pr_{\mathcal{B}',q_1}(\lozenge\goal) + \mu_2 \cdot \Pr_{\mathcal{B}',q_2}(\lozenge\goal)  + \bb(q)  \quad \text{(I.H.)}\\
			= & \,\, \Pr_{\mathcal{B}'(q)}(\lozenge\goal)
			\end{align*}
		}
	\end{enumerate}
	
	As $|s_0|_S = k$ by assumption, we get $l_{s_0}(k) \geq \Pr_{\mathcal{B}', s_0}(\lozenge\goal)$.
	
	\emph{(ii)}:
	We show for every state $q$, by induction on its height, that for every $k \in \{0,\ldots,|S|\}$ and $\theta \in [0,1]$: if $l_q(k) = \theta$, then we can construct a subsystem $\mathcal{Q}$ with root $q$, $k$ states in $S$ and $\Pr_{\mathcal{Q}}(\lozenge\goal) = \theta$.
	\begin{enumerate}
		\item{Suppose that $q \in S$ is a leaf.
			$l_q(k) = \theta$ implies that $k = 1$ and $\Pb(q,\goal) = \theta$.
		}
		\item{Suppose that $q$ has two successors $q_1,q_2$ that satisfy the property to prove and are reached with probability $\mu_1,\mu_2$.
			If $q \in S$, we get:
      \begin{align*}
          l_q(k) =
          \max \{ &\mu_1 \cdot l_{q_1}(j) + \mu_2 \cdot l_{q_2}(k{-}1{-}j) \mid \\
          &0 \leq j < k, \; j \leq |q_1|_S, \; k{-}1{-}j \leq |q_2|_S\}
      \end{align*}
      As long as $k \leq |q|_S$ the set above is not empty, as we can choose $j = |q_1|_S$ which satisfies the constraints.
			Let $j^*$ be such that the above maximum is obtained, which yields
			\[l_q(k) = \mu_1 \cdot l_{q_1}(j^*) + \mu_2 \cdot l_{q_2}(k{-}1{-}j^*) = \theta\]
			By induction hypothesis we get subsystems of $q_1, q_2$ with probability at least $l_{q_1}(j^*),l_{q_2}(k{-}1{-}j^*)$ and $j^*, k{-}1{-}j^*$ states in $S$, which proves the claim.
			The case for $q \in U$ is similar.
		}
	\end{enumerate}
	As $l_{s_0}(k) \geq \lambda$ holds by assumption, we can construct a subsystem of $\mathcal{B}$ with $k$ states in $S$ and probability of reaching $\goal$ at least $\lambda$.
\qed\end{proof}

\begin{proposition}\label{thm: tree-shaped algorithm}
  Let $\M = (S_{\all},s_0,\prb)$ be a tree-shaped DTMC as in Setting~\ref{setting:mdp} and $\lambda \in [0,1]$. A minimal witnessing subsystem of $\M$ for $\Pr_{s_0}(\lozenge \goal) \geq \lambda$ can be computed in polynomial time, given that $\Pr_{s_0}(\lozenge \goal) \geq \lambda$ holds.
\end{proposition}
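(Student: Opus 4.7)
The plan is to prove this in two reductions followed by a dynamic-programming step. First, I would reduce to the binary case by applying the binarization construction $\mathcal{B}(\M)$ from \Cref{def:binarization}: the underlying graph stays a tree, every non-$\goal,\fail$ state has at most two successors, no newly added state $u \in U$ has a direct transition to $\goal$, and the blow-up is bounded by the number of transitions (hence polynomial). The path bijection of \Cref{lem:propbm} gives $\Pr_{\M,s_0}(\lozenge\goal) = \Pr_{\mathcal{B}(\M),s_0}(\lozenge\goal)$, and \Cref{lem:mwsfortrees} translates the existence of a $k$-state witness of $\M$ into the existence of a witness of $\mathcal{B}(\M)$ with exactly $k$ states in the original part $S_{\all}$. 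Hence it suffices to compute, on the binarized tree, a subsystem with a minimal number of ``real'' states (states in $S$) whose probability is at least $\lambda$, and then translate it back via the subsystem correspondence in the proof of \Cref{lem:mwsfortrees}.

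Second, on the binary tree I would run the bottom-up DP that computes, for every state $q$, the function $l_q \colon \{0,1,\dots,|q|_S\} \to [0,1]$ defined by Equations (\ref{eqn:lone}) and (\ref{eqn:ltwo}), where $l_q(k)$ is the maximum probability of reaching $\goal$ attainable by a subsystem rooted at $q$ that uses exactly $k$ states of $S$. \Cref{prop:complexity} already bounds the running time of the DP by $O(|S \cup U|^3)$, which is polynomial in the size of $\M$. To find the desired witness I would then determine the smallest $k^*$ with $l_{s_0}(k^*) \geq \lambda$ and reconstruct a corresponding subsystem by the standard backtracking argument: at each internal node $q$ with successors $q_1,q_2$, record a split $j^*$ achieving the maximum in the recurrence and recurse into $q_1$ with budget $j^*$ and into $q_2$ with budget $k-j^*$ (minus one if $q \in S$). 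This is exactly part (ii) of \Cref{lem:bintreesubsys}.

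The correctness of the whole pipeline then rests on three facts that are already established in the appendix: (a) \Cref{lem:bintreesubsys}(i) shows that $l_{s_0}(k) \geq \Pr_{\mathcal{B}',s_0}(\lozenge\goal)$ for every subsystem $\mathcal{B}'$ with $k$ states in $S$, so the DP value is an upper bound on what any $k$-state subsystem can achieve; (b) \Cref{lem:bintreesubsys}(ii) gives the matching lower bound by explicit construction; together they imply that $k^*$ equals the minimum number of real states of any witness in $\mathcal{B}(\M)$. Via \Cref{lem:mwsfortrees}, this is the same as the minimum number of states of a witness of $\M$. The hard part, conceptually, is to see why the DP recurrences are valid, i.e.\ that the optimal $k$-budget subsystem rooted at $q$ decomposes as the disjoint union of optimal $j$- and $(k{-}1{-}j)$-budget subsystems at its two children; this is exactly where tree-shapedness is used, because the two subtrees share no states and the probability of reaching $\goal$ from $q$ factors into $\mu_1 \Pr(\lozenge\goal \mid q_1) + \mu_2 \Pr(\lozenge\goal \mid q_2) + \bb(q)$ with the two conditional probabilities depending only on the respective subtrees. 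Once this independence is in hand, the inductive proofs in \Cref{lem:bintreesubsys} are routine, and putting the pieces together yields a polynomial-time algorithm for minimal witnesses in tree-shaped DTMCs.
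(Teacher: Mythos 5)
Your proposal is correct and follows essentially the same route as the paper's proof: binarize via \Cref{def:binarization}, transfer witnesses with \Cref{lem:mwsfortrees}, compute the functions $l_q$ bottom-up in polynomial time (\Cref{prop:complexity}), pick the smallest $k$ with $l_{s_0}(k) \geq \lambda$, and extract the subsystem via \Cref{lem:bintreesubsys}. The only addition is your explicit remark on why the recurrence decomposes over disjoint subtrees, which matches the reasoning implicit in the paper's inductive proofs.
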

\begin{proof}
  We describe an algorithm that computes a minimal witnessing subsystem of the tree-shaped DTMC $\M = (S,s_0,\Pb)$ for $\Pr_{s_0}(\lozenge \goal) \geq \lambda$.
  First, we compute the binarization $\mathcal{B}(\M) = (S_{\all} \cup U,s_0,\Pb') $ of $\M$ as per \Cref{def:binarization}.
  The number of states of $\mathcal{B}(\M)$ is at most $2 \cdot |S|$, as $\M$ is tree-shaped and hence has at most as many transitions as states.

  By \Cref{lem:mwsfortrees}, every witnessing subsystem of $\mathcal{B}(\M)$ for $\Pr_{s_0}(\lozenge \goal) \geq \lambda$ with $k$ states in $S$ can be mapped to a witnessing subsystem of $\M$ for the same property with $k$ reachable states.
  We compute the function $l_{s_0}$ in polynomial time (\Cref{prop:complexity}) and choose minimal $k$ such that $l_{s_0}(k) \geq \lambda$.
  By \Cref{lem:bintreesubsys}, we can compute a witnessing subsystem of $\mathcal{B}(\M)$ for $\Pr_{s_0}(\lozenge \goal) \geq \lambda$ with $k$ states in $S$, and by the same lemma this subsystem is minimal.
\qed\end{proof}

\else
\fi

\end{document}